\newcommand{\comment}[1]{}
\newcommand{\mean}{\mathbb{E}}
\newcommand{\F}{\mathbb{F}}
\newcommand{\id}{\mathbf{1}}
\newcommand{\R}{\mathbb{R}}
\newcommand{\C}{\mathbb{C}}
\newcommand{\Id}{\mathbb{I}}
\newcommand{\G}{\mathcal{G}}
\newcommand{\p}{\mathcal{P}}
\newcommand{\N}{\mathbb{N}}
\newcommand{\Tr}{\operatorname{Tr}}
\newtheorem{theorem}{Theorem}
\newtheorem{definition}[theorem]{Definition}
\newtheorem{lemma}[theorem]{Lemma}
\newtheorem{example}[theorem]{Example}
\newtheorem{corollary}[theorem]{Corollary}
\begin{document}

\title{Unifying typical entanglement and coin tossing: on randomization in probabilistic theories}

\author{Markus P.\ M\"uller}
\affiliation{Perimeter Institute for Theoretical Physics, 31 Caroline Street North, Waterloo, ON N2L 2Y5, Canada}
\author{Oscar C.\ O.\ Dahlsten}
\affiliation{Centre for Quantum Technologies, National University of Singapore, 3 Science Drive 2, Singapore 117543, Singapore}
\affiliation{Clarendon Laboratory, University of Oxford, Parks Road, Oxford OX1 3PU, United Kingdom}
\author{Vlatko Vedral}
\affiliation{Centre for Quantum Technologies, National University of Singapore, 3 Science Drive 2, Singapore 117543, Singapore}
\affiliation{Clarendon Laboratory, University of Oxford, Parks Road, Oxford OX1 3PU, United Kingdom}
\affiliation{Department of Physics, National University of Singapore, 2 Science Drive 3, Singapore 117542, Singapore}

\date{\today}

\begin{abstract}
It is well-known that pure quantum states are typically almost maximally entangled, and thus have close to maximally mixed subsystems. We consider whether this is true for probabilistic theories more generally, and not just for quantum theory. 
We derive a formula for the expected purity of a subsystem in any probabilistic theory for which this quantity is well-defined.
It applies to typical entanglement in pure quantum states, coin tossing in classical probability theory, and randomization in post-quantum theories; a simple generalization yields the typical entanglement in (anti)symmetric quantum subspaces.
The formula is exact and simple, only containing the number of degrees of freedom and the information capacity of the respective systems. It allows us to generalize statistical  physics arguments in a way which depends only on coarse properties of the underlying theory. The proof of the formula generalizes several randomization notions to general probabilistic theories. This includes a generalization of purity, contributing to the recent effort of finding appropriate generalized entropy measures.
\end{abstract}

\maketitle

\tableofcontents

\parskip .75ex

\section{Introduction}
It is increasingly recognized that entanglement is ubiquitous, as opposed to a rare resource that is difficult to create. In fact most unitary time evolutions (in a sense to be made precise later) generate a large amount of entanglement within a closed quantum system. This turns out to be equivalent to saying that {\em pure quantum states are typically almost maximally entangled}.

This striking observation was already made decades ago, see e.g.~\cite{Lubkin78, LloydP88, Page93, FoongK94}, although it was initially phrased as `subsystem entropy typically being maximal'---this was before subsystem entropy became the canonical measure of entanglement for pure states. The observation and its subsequent refinements have helped us understand more about entanglement and its role in information processing~\cite{FoongK94,Page93,LloydP88,Lubkin78,HaydenLW06,HarrowHL04,AbeyesingheHS06,SerafiniDP06,SerafiniDP07,SmithL05,DahlstenP06,DahlstenPhD,OliveiraDP07,Dupuis09} as well as statistical mechanics~\cite{HaydenP07,CalsamigliaHDB07,GemmerOM01,PopescuSW06,Lloyd88,GemmerMM04,LubkinL93,MuellerGE10,delRioARDV11}. For example, bearing the above in mind it is not surprising that the difficulty for an experimenter trying to perform e.g.\ quantum computing is not to {\em generate} entanglement but to {\em control} what is entangled with what, and in particular to avoid entanglement between the experiment and the environment, as that will increase the entropy of the system.

Here we show that this observation is an instance of a more universal phenomenon which appears in a wide class of probabilistic theories: systems typically randomize locally if a global transformation is applied.
More specifically, the expected amount of randomization can be expressed by a simple formula, which is universally valid for any probabilistic theory satisfying a small set of requirements. The formula describes classical coin tossing as well as typical entanglement in quantum and possible post-quantum theories, and has a particularly simple form which does not depend on the details of the theory.

We work in the framework of generalized probabilistic theories, also known as the ``convex framework''. This amounts to taking an operational pragmatic point of view that the physical content of a theory is the predictions of outcome statistics, conditional on the experimental settings. A wide range of theories can be formulated in this framework, including quantum theory and classical probability theory.

We ask how pure or mixed subsystems tend to be in such theories, if the global state is drawn randomly (possibly subject to some constraints). To make the question well-defined, we add some additional restrictions on the set of theories, including crucially that all pure states are connected by reversible dynamics.  Our main result is to give a simple expression for the expected value of the purity of a subsystem in such probabilistic theories.
The expression shows that, in certain limits, subsystems are typically close to maximally random. (In the case of pure global quantum states this is equivalent to saying that the states are typically close to maximally entangled).

Our result unifies several instances of randomization associated with different theories. It also clarifies which features of the theory are behind this phenomenon and govern the strength with which it occurs. Some of the techniques invented in the proof are in addition interesting in themselves. These include generalizations of the notions of purity and of Pauli operators to general probabilistic theories. The proof is moreover guided by an intuitive Heisenberg-picture argument which is different to the standard arguments for the quantum case and arguably adds to our understanding of the quantum result.

We apply the result to generalize a specific statistical mechanical argument employing typical entanglement which is related to the second law of thermodynamics. We moreover calculate the typical subsystem purity in a variety of cases, including typical entanglement of pure symmetric and antisymmetric bipartite quantum states, which is to our knowledge also a new contribution. 

The presentation is divided into two parts.  It should be possible for readers not wishing to familiarize themselves with general probabilistic theories to only read the first part. The first part describes the main results and their implications with an emphasis on the quantum and classical cases.  In the second part we deal with the general probabilistic case.

\section{Main results and overview}
One of our main results is an identity which relates simple properties of state spaces to the randomization of subsystems.
Suppose that Alice and Bob hold a bipartite system $AB$ (for example, a composite quantum system $A\otimes B$).
They draw a biparte state $\omega^{AB}$ at random; it may be a random pure state, or a random mixed state with fixed purity $\p(\omega^{AB})$.
Then, the reduced state at Alice, $\omega^A$, will in general be mixed: its
expected purity turns out to be
\begin{equation}
   \mathbb{E}_\omega\p(\omega^A)=\frac{K_A-1}{K_A K_B-1}\cdot\frac{N_A N_B-1}{N_A-1}\cdot\p(\omega^{AB}).
   \label{eqMain}
\end{equation}
The parameters $K_A$ and $N_A$ denote the state space dimension and information carrying capacity of $A$, respectively (similarly for $B$).
It will turn out that this simple formula describes the \emph{typical amount of entanglement in random pure quantum states} (in
particular the fact that most quantum states are almost maximally entangled), and at the same time \emph{classical
coin tossing}.

Moreover, this identity describes randomization in possible probabilistic theories beyond quantum theory. It shows that
very coarse properties of a theory are sufficient to determine its randomization power -- basically, the ratio between the
total number of degrees of freedom $K$ versus the number of perfectly distinguishable states $N$. A generalization of this identity
gives the expected amount of entanglement in symmetric and antisymmetric subspaces, a quantum result that seems to
be new as well.

In this section, we give a self-contained and elementary statement of our results:
\begin{itemize}
\item First, we outline how we define the purity $\p$ in general (Subsection~\ref{SubsecPurity}), and we explain the notions of ``state space dimension $K$`` and ``capacity $N$'' (in Subsection~\ref{SubsecKN}).
\item Then we demonstrate how our result unifies typical quantum entanglement and coin tossing
(Subsection~\ref{SubsecUnify}) into a single identity, and we apply a simple generalization of this result to compute the average entanglement in symmetric and
antisymmetric quantum subspaces (Subsection~\ref{SubsecSymm}).
\item In Subsection~\ref{SubsecStatPhys}, we apply our results to statistical physics. We argue that the results contribute to a
theory-independent understanding of some aspects of thermalization and the second law, which may be applied in situations like
black hole thermodynamics where the underlying probabilistic theory is not fully known.
\item Finally, we give a simple proof of the quantum case in Subsection~\ref{SubsecQProof}, which also illustrates the main ideas of
the more general proof in Section~\ref{SecMath}.
\end{itemize}
The detailed mathematical calculations and results are given in Section~\ref{SecMath}. The main result is Theorem~\ref{TheMainMain},
which contains the exact list of assumptions which must be satisfied for eq.~(\ref{eqMain}) to hold. There is also a more general version of this
result which needs less assumptions, but is slightly less intuitive (Theorem~\ref{TheMain1}). An even more general version concerns random states
under constraints (Theorem~\ref{TheFace}); this one can be used to derive the average entanglement in (anti)symmetric subspaces in quantum theory.

Section~\ref{SecMath} uses the mathematical framework of general probabilistic theories, as explained for example in~\cite{Barrett07,Barnum}.
Several results in this section are of independent interest in this framework. In particular, we introduce and analyze a general-probabilistic
notion of \emph{purity}. Due to its group-theoretic origin, purity satisfies several interesting identities.
It can be seen as an easy-to-compute replacement for entropy, and has several advantages over recently proposed entropy measures for
probabilistic theories (cf.\ Subsection~\ref{SubsecComparison}).

The remainder of this section does \emph{not} assume familiarity with general probabilistic theories.

\subsection{Purity}
\label{SubsecPurity}
In quantum theory, the standard notion of purity of a density matrix $\rho$ with eigenvalues $\{\lambda_i  \}$ is $\Tr(\rho^2)=\sum_i \lambda_i^2$.
This quantity has an operational meaning as the probability that two successive measurements on two identical copies of $\rho$ give
the same outcome, if one measures in the basis where $\rho$ is diagonal, i.e.\ in the minimal uncertainty basis. It is therefore sometimes
called the {\em collision probability}.

In this work, it will turn out to be extremely useful to rescale this quantity slightly. For density matrices $\rho$ on $\C^n$, we define
\begin{equation}
   \p(\rho)=\frac n {n-1}\Tr(\rho^2)-\frac 1 {n-1}.
   \label{eqPurityQuantum}
\end{equation}
For a qubit ($n=2$), this quantity has a nice geometrical interpretation in the Bloch ball: it is the \emph{squared length of the Bloch vector} which
corresponds to $\rho$. For all dimensions $n$,
\begin{equation}
   \p(\rho)=\left\{
      \begin{array}{cl}
         1 & \mbox{if }\rho\mbox{ is pure,}\\
         0 & \mbox{if }\rho\mbox{ is the maximally mixed state}.
      \end{array}
   \right.
   \label{eqPureMixed}
\end{equation}
The definition above applies to quantum theory, where states are density matrices on a Hilbert space. However, we can also consider
\emph{classical probability theory} (CPT) instead, where states are simply probability distributions, $p=(p_1,\ldots,p_n)$. In analogy to the
quantum definition, we set
\begin{equation}
   \p(p):=\frac n {n-1} \sum_{i=1}^n p_i^2 -\frac 1 {n-1}.
   \label{eqClassPur}
\end{equation}
In CPT, pure states are probability distributions like $p=(1,0,\ldots,0)$ (one unity, all others zero), and the maximally mixed state
is $p=\left(\frac 1 n,\ldots,\frac 1 n\right)$. Therefore, eq.~(\ref{eqPureMixed}) is still valid.

How can these definitions be naturally generalized to other possible probabilistic theories? (Readers who are not so interested in the
framework of general probabilistic theories may now safely proceed to the next subsection.) In the quantum case, the standard notion of purity can
be expressed as
\[
   \Tr(\rho^2)=\langle\rho,\rho\rangle,
\]
where $\langle X,Y\rangle:=\Tr(XY)$ is the Hilbert-Schmidt inner product on the real vector space of Hermitian matrices. This inner product
is very special: it is invariant with respect to unitary transformations $U$, that is, $\langle\mathcal{U}(X),\mathcal{U}(Y)\rangle=
\langle X,Y\rangle$, where we used the abbreviation $\mathcal{U}(X):=UXU^\dagger$.
This suggests the following strategy for defining purity in general probabilistic theories:
\emph{Find an inner product $\langle\cdot,\cdot\rangle$ on the state space
which is invariant with respect to all reversible transformations, and define the purity of a state $\omega$ as $\langle\omega,\omega\rangle$}.

To make this idea work, we have to be careful, though: even in quantum theory, the invariant inner product is not unique in the first place. This
is due to the fact that the space of Hermitian matrices $V$ decomposes into $V={\rm span}\{\mu\}\oplus \hat V$, where $\mu=\Id/n$ is
the maximally mixed state, and $\hat V$ is the subspace of traceless Hermitian matrices. These two subspaces are both invariant with
respect to unitaries. Thus, group representation theory~\cite{BarrySimon} tells us that there are infinitely many invariant inner products.

We can fix this problem by subtracting away the maximally mixed state $\mu$: if $\rho$ is a density matrix, we define the corresponding
``Bloch vector'' $\hat \rho:=\rho-\mu$. This is an element of the traceless Hermitian matrices $\hat V$, and that subspace cannot be further decomposed
into invariant subspaces. Thus, there is a unique inner product $\langle\cdot,\cdot\rangle$ (up to a constant factor) on $\hat V$, and we define
$\p(\rho):=\langle \hat\rho,\hat\rho\rangle$, rescaling the inner product such that $\p(\rho)=1$ for pure states $\rho$. It turns out that
$\langle X,Y\rangle=n/(n-1) \Tr(XY)$ for $X,Y\in\hat V$, and so this definition agrees with eq.~(\ref{eqPurityQuantum}).

In Section~\ref{SecMath}, we apply exactly the same construction to define purity in general probabilistic theories (cf.\ Definition~\ref{DefPurity}),
under some assumptions on the probabilistic theory which are necessary to get a useful definition. The resulting purity notion will,
in particular, still satisfy eq.~(\ref{eqPureMixed}).

\subsection{State space dimension $K$ and capacity $N$}
\label{SubsecKN}
For every state space, we denote by $K$ the \emph{number of real parameters required to describe an unnormalized, mixed state},
whereas $N$ denotes the \emph{maximal number of (normalized) states that can be perfectly distinguished in a single measurement}.
These quantities were to our knowledge first introduced by Wootters and Hardy~\cite{Wootters86,Hardy01}.

As a simple example, consider a single quantum bit (qubit). Arbitrary mixed states of a qubit are described by density matrices
$\rho=\left(\begin{array}{cc} w & y+iz \\ y-iz & x \end{array}\right)$, where normalization $\Tr\rho=1$ demands that $w+x=1$.
Since we are interested in \emph{unnormalized} states, we may drop this condition. As a result, unnormalized states $\rho$
are described by four real parameters $w,x,y,z$. (Positivity of the matrix adds additional constraints in the form of inequalities,
but the set of matrices fulfilling these conditions is still four-dimensional.) That is, we have $K=4$.
On the other hand, if we want to distinguish two states $\rho,\sigma$
perfectly in a single-shot measurement, they must be orthogonal. Since there are only two orthogonal states on $\C^2$, the \emph{capacity}
if the qubit state space is $N=2$.

For all state spaces of quantum theory, capacity $N$ equals the Hilbert space dimension (i.e.\ the states live on $\C^N$), and we have
the relation $K=N^2$.

In classical probability theory, the state space with $N$ perfectly distinguishable configurations consists of the probability
distributions $p=(p_1,\ldots,p_N)$ with $\sum_i p_i=1$. Dropping normalization, these are $N$ real parameters $p_1,\ldots,p_N$ to specify
a state. That is, classical state spaces have $K=N$, in contrast to quantum theory.

For other general probabilistic state spaces, state space dimension $K$ and capacity $N$ can basically be arbitrary natural numbers,
only the relation $K\geq N$ is always true. We give a rigorous mathematical definition of both quantities in Section~\ref{SecMath}.

\subsection{Unifying typical entanglement and coin tossing}
\label{SubsecUnify}
We will now show that eq.~(\ref{eqMain}) describes both typical entanglement of random pure quantum states
and classical coin tossing at the same time.
This will be demonstrated by considering three special cases of eq.~(\ref{eqMain}).

{\bf Random pure quantum states.}
Suppose we draw a pure state $\omega^{AB}$ on $A\otimes B$ at random, where $A$ and $B$ are Hilbert spaces of dimensions
$N_A$ and $N_B$ respectively. Recalling eq.~(\ref{eqPurityQuantum}) and~(\ref{eqPureMixed}) and the fact that $K=N^2$
in quantum theory, our main formula~(\ref{eqMain}) yields
\[
   \mathbb{E}_\omega \p(\omega^A)=\mathbb{E}_\omega\left(\frac {N_A}{N_A-1} \Tr\left[(\omega^A)^2\right] -\frac 1 {N_A-1}\right)
   =\frac{N_A^2-1}{N_A^2 N_B^2 -1}\cdot \frac{N_A N_B-1}{N_A -1}=\frac{N_A+1}{N_A N_B+1}\stackrel{N_B\to\infty}\sim \frac 1 {N_B}.
\]
Recall that $\p(\omega^A)$ is one if and only if $\omega^A$ is pure, and it is zero if and only if $\omega^A$ is the maximally
mixed state. Now if the ``bath`` $B$ becomes large, we see that the expected purity of the local reduced state on $A$ gets closer and
closer to zero, so that $\omega^A$ gets close to maximally mixed.
This expresses the fact that \emph{random pure quantum states are typically almost maximally entangled}, if the
bipartition is taken with respect to a small subsystem.\footnote{It is interesting to note that the set of stabilizer states (including their convex combinations as density matrices) shares the
values of $N$ and $K$ with usual quantum theory. As we show later, this set of states satisfies all the conditions for eq.~(\ref{eqMain}) to hold.
Thus, this equation gives the same amount of expected local purity as for the full set of quantum states. This was already observed in~\cite{SmithL05}, and is a consequence of the fact that the Clifford group constitutes a $2$-design.}

By \emph{typicality}, at this point, we mean something very specific. Suppose we want to generate a random state $\omega^{AB}$
with fixed purity $\p(\omega^{AB})=:\p_0$ (in this case $\p_0=1$, since we are interested in random \emph{pure} states).
We do this by choosing a fixed state $\varphi^{AB}$ with purity $\p(\varphi^{AB})=\p_0$, and then apply a random reversible transformation
$T$ to it, getting $\omega^{AB}:=T\varphi^{AB}$. The transformation $T$ is picked according to the invariant measure (Haar measure)
on the group of reversible transformations.
In the quantum case, the Haar measure on unitaries is also called the unitary circular ensemble. See~\cite{Diaconis05} for an explicit
recipe for how to pick unitaries numerically in this manner.

So far, our formula only expresses the \emph{expectation value} of the local purity $\p(\omega^A)$.
To call this value the {\em typical} value one needs to show that the distribution is peaked around the mean. Intuitively this must be the case if the expected value is close to the minimum allowed, as that could only occur if almost all of the distribution is concentrated close to the minimum. A simple way to see that this is indeed the case is to apply Markov's inequality~\cite{Billingsley}, which in this case reads (for $x>1$)
\[
   P\left\{\p(\omega^A)\geq \frac 1 x\right\} \leq x\cdot \mathbb{E}_\omega \p(\omega^A)\stackrel{N_B\to\infty}\approx \frac x {N_B}.
\]
This shows that if the mean is small, the probability of $\p(\omega^A)$ deviating from it must be small. Stronger results of this kind
can be obtained from measure concentration theorems on Lie groups~\cite{MilmanSchechtman}, but we will not pursue this approach
further in this paper.

\bigskip
{\bf Random pure classical states.} What if we draw random pure bipartite states in classical probability theory?
In this case, purity is defined by
eq.~(\ref{eqClassPur}), and, as discussed above, state space dimension and capacity are equal: $K=N$. Thus, our main formula~(\ref{eqMain})
yields for the local marginal $\omega^A=(\omega^A_1,\ldots,\omega^A_{N_A})$ the result
\[
   \mathbb{E}_\omega \p(\omega^A)=\mathbb{E}_\omega\left(\frac{N_A}{N_A-1}\sum_{i=1}^{N_A} (\omega^A_i)^2 -\frac 1 {N_A-1}\right)
   =\frac{K_A-1}{K_A K_B-1}\cdot\frac{N_A N_B-1}{N_A-1}=1.
\]
All the terms cancel, and we get that the expected local reduced purity equals unity. Since $1$ is the maximal value, this is only possible if in fact
$\p(\omega^A)=1$ for \emph{all} pure states $\omega^{AB}$. In other words: all pure bipartite states have pure marginals. This expresses
the simple fact that \emph{there are no entangled states in classical probability theory} -- all pure bipartite states are product states.

Before turning to the more interesting example of classical coin tossing, we briefly discuss how classical probability distributions
$\omega^{AB}$ of fixed purity $\p(\omega^{AB})=:\p_0$ are drawn at random (in this example, so far, we have the case $\p_0=1$).
In analogy to the quantum case, we start with an arbitrary fixed
bipartite probability distribution $\varphi^{AB}$ with purity $\p(\varphi^{AB})=\p_0$. In classical probability theory, the reversible transformations are
the \emph{permutations}, that is, doubly stochastic matrices containing only ones and zeroes. Now the state $\omega^{AB}$ is defined
as $\omega^{AB}:=T\varphi^{AB}$, where $T$ is a random permutation.

\bigskip
{\bf Classical coin tossing.} We can use identity~(\ref{eqMain}) to describe the process of coin tossing in classical probability
theory. Suppose we start with a coin (that is, a classical bit) whose value (``heads'' or ``tails'' -- say, heads) is perfectly known to us. In this case, the
(pure) state of the coin is a probability distribution $\varphi^A=(1,0)$, where $1$ is the probability of heads and $0$ the probability of tails.
However, the environment $B$ is not known to us -- it is in some mixed state $\varphi^B$. The total state (coin and environment) is thus
in a mixed state $\varphi^{AB}:=\varphi^A\otimes\varphi^B$, with some purity $\p(\varphi^{AB})=:\p_0<1$.

Now we toss the coin -- that is, we flip it in an uncontrolled way which makes it interact with the environment. It makes sense to model
this process as a random reversible transformation (permutation) $T$ of the global system $AB$. In the end, we capture the coin, cover it with
our hand (so that we cannot see what side is up) and disregard the environment. The state of the coin is then $\omega^A$, the marginal
corresponding to the global state $\omega^{AB}:=T\varphi^{AB}$. We expect that the coin's state should be mixed. In fact,
\[
   \mathbb{E}_\omega \p(\omega^A)=\frac{K_A-1}{K_A K_B-1}\cdot\frac{N_A N_B-1}{N_A-1}\cdot\p(\omega^{AB})=\p(\omega^{AB})=\p_0,
\]
where the same cancellation as in the previous example applies. That is, \emph{our ignorance about the environment gets transferred
to the coin}, which is exactly what coin tossing is all about.

\subsection{Typical entanglement of symmetric and antisymmetric states}
\label{SubsecSymm}
{So far, our discussion only covered the case that a state is drawn randomly from the set of \emph{all} states (subject
to fixed purity). However, there are situations -- particularly in thermodynamics, as we discuss in the next subsection -- where
one would like to draw random states subject to additional constraints.

This generalization is treated in Subsection~\ref{SubsecSym}, where we compute the expected subsystem purity for random
states that satisfy certain symmetry constraints (Theorem~\ref{TheFace}). In the special case of quantum theory, this gives
the typical entanglement for subspaces $S\subseteq AB$ that have the following symmetry property: \emph{For every unitary
$U$ on $A$, there is a unitary $U'$ on $B$ such that $U\otimes U'$ preserves the subspace $S$.} An explicit formula for
the expected purity of the reduced state is given in Theorem~\ref{TheQFace}.

Here, we apply this result to compute the typical amount of entanglement in pure states of symmetric and antisymmetric subspaces:
they are both $U\otimes U$-invariant.
As usual, for a Hilbert space ${\cal H}$, the symmetric subspace ${\cal H}\vee{\cal H}$ resp.\
antisymmetric subspace ${\cal H}\wedge{\cal H}$ are defined as those vectors $|\psi\rangle$ with
$\pi|\psi\rangle=|\psi\rangle$ resp.\ $\pi|\psi\rangle=-|\psi\rangle$, where $\pi$ is the unitary that swaps the two particles.
For three and more particles, the totally (anti)symmetric subspace is defined as the set of vectors that satisfy this
equation for all pairs of particles simultaneously.}
Investigating this case is motivated by the importance of identical bosons and fermions which, by the symmetrisation postulate, have symmetric
and antisymmetric joint states respectively~\cite{Peres02,CohenTannoudjiDL06}.

States such as antisymmetric fermionic states are clearly entangled in the mathematical sense, but we note that they can only be termed entangled in the operational sense
under some additional assumptions.  
Standard entanglement theory implicitly assumes that different systems corresponding to different tensor factors can be operationally
distinguished, which is in general not true for bosons and fermions. However,
whilst e.g.\ two electrons are always indistinguishable, they can in fact be treated as distinguishable if they are localized in two separate spatial locations~\cite{Peres02,CohenTannoudjiDL06}.
This fact gives rise to a natural scenario where antisymmetric states appear that are entangled in the operational sense:
If two such localized electrons had previously shared the same spatial part of the wavefunction, their internal degrees of freedom (spin) would have been antisymmetric, and remain so unless altered. After separating the two electrons, one would have obtained standard (not just mathematical) entanglement between them, having  arisen due to the antisymmetry requirement on their joint state (see eg.~\cite{GittingsF02,Vedral03, Cavalcanti07}). One may, for concreteness, think about our calculations in this section with such a scenario in mind.  
\begin{theorem}
\label{ThePuritySymm}
Consider the symmetric and antisymmetric subspaces $S_\pm$ on two $n$-level quantum systems $A=B=\C^n$, i.e.\
$S_+=\C^n\vee \C^n$ and $S_-=\C^n\wedge\C^n$. If $\omega_\pm\in S_\pm$ is a random pure quantum state,
then the expected local purity is
\[
  \mathbb{E}_{\omega_{\pm}} \Tr\left[\left(\omega_\pm^A\right)^2\right] = \frac{2(n\pm 1)}{n^2 \pm n +2}.
\]
Moreover, drawing a random mixed state of fixed purity $\Tr(\omega_\pm^2)$ from the corresponding subspace, the expected local purity is described
by the same equation, only the factor $2$ in the numerator has to be replaced by $1+\Tr(\omega_\pm^2)$.
\end{theorem}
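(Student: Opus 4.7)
The strategy is to recognize Theorem \ref{ThePuritySymm} as a direct application of Theorem \ref{TheQFace}, the quantum specialization of the constrained-randomization result Theorem \ref{TheFace}. Theorem \ref{TheQFace} applies to any subspace $S \subseteq A \otimes B$ such that, for every unitary $U$ on $A$, there exists a unitary $U'$ on $B$ making $U \otimes U'$ preserve $S$. Both the symmetric subspace $S_+ = \C^n \vee \C^n$ and the antisymmetric subspace $S_- = \C^n \wedge \C^n$ manifestly satisfy this property with $U' = U$, since the swap operator $\pi$ commutes with $U \otimes U$. Thus the theorem applies directly to $S_\pm$.

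The concrete plan then has four steps: (i) identify the inputs to Theorem \ref{TheQFace}, namely $N_A = N_B = n$ and the subspace dimension $d_\pm := \dim S_\pm = n(n \pm 1)/2$; (ii) substitute these into the formula of Theorem \ref{TheQFace}; (iii) translate between the rescaled purity $\p$ of eq.~(\ref{eqPurityQuantum}) and the trace purity via $\Tr(\rho^2) = [(n-1)\p(\rho) + 1]/n$; (iv) simplify the resulting combinatorial expression. For random pure states, $\p(\omega_\pm^{AB}) = 1$, and one should obtain $\mathbb{E}[\Tr((\omega_\pm^A)^2)] = 2(n \pm 1)/(n^2 \pm n + 2)$. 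For random mixed states of fixed global purity, Theorem \ref{TheFace} carries a linear factor of $\p(\omega^{AB})$; converting this to the trace purity produces the substitution of $2$ by $1 + \Tr(\omega_\pm^2)$ in the numerator (as $1 + 1 = 2$ in the pure case).

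The only genuine work is the algebraic simplification in step (iv), which is routine. A reassuring sanity check: for $n = 2$ antisymmetric, $\dim S_- = 1$, so the unique state is the singlet, whose marginal is $\tfrac{1}{2}\mathbb{I}$ with trace purity $\tfrac{1}{2}$; the formula yields $2(n-1)/(n^2 - n + 2) = 2/4 = 1/2$, as required. The main potential obstacle is verifying that the hypotheses of Theorem \ref{TheFace} are fully met in this setting --- in particular that restricting to a $U \otimes U$-invariant subspace really does give a valid "face" structure in the sense the theorem requires, and that the invariant-inner-product construction underlying the generalized purity restricts consistently to $S_\pm$. Since $S_\pm$ are irreducible under the $U\otimes U$-action (by Schur--Weyl duality), there is no further decomposition to worry about, so no pathology arises and the formula should apply cleanly.
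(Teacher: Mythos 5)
Your overall strategy matches the paper's: Theorem~\ref{ThePuritySymm} is indeed proved by applying Theorem~\ref{TheQFace} to the $U\otimes U$-invariant subspaces $S_\pm$, and your $n=2$ antisymmetric sanity check is correct. However, there is a genuine gap in steps (i)--(ii). The formula of Theorem~\ref{TheQFace} is \emph{not} determined by the dimensions $N_A$, $N_B$, $N_S$ alone: it contains the factor $\Tr\bigl[\left(\pi(E_A\otimes\Id_B)\pi\right)^2\bigr]$, where $\pi$ is the projector onto $S$ and $E_A$ is a traceless, Hilbert--Schmidt-normalized Hermitian matrix on $A$. This quantity depends on how the subspace sits inside $A\otimes B$, and it is precisely the substantive computation of the proof. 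You list the "inputs" as $N_A=N_B=n$ and $\dim S_\pm=n(n\pm1)/2$ and assert that "the only genuine work is the algebraic simplification," but without evaluating the trace term you cannot obtain the numerator $2(n\pm1)$. The paper does this by choosing $E_A=\frac{1}{\sqrt2}\ketbra{1}{1}-\frac{1}{\sqrt2}\ketbra{2}{2}$, writing $\pi=\frac12\sum_{i,j}\ketbra{ij}{ij}\pm\frac12\sum_{i,j}\ketbra{ij}{ji}$, and computing $\Tr\bigl[\left(\pi(E_A\otimes\Id_B)\pi\right)^2\bigr]=n/4+1/2$ in the symmetric case (the value that makes the final formula come out right; the antisymmetric case is analogous). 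You need to supply this calculation, or at least recognize that it is required.

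Two smaller points. First, your step (iii) is unnecessary: Theorem~\ref{TheQFace} is already stated in terms of the trace purity $\Tr(\rho^2)$, so no conversion from $\p$ is needed when applying it (that conversion is internal to the proof of Theorem~\ref{TheQFace}). Second, your appeal to Schur--Weyl duality addresses irreducibility of $S_\pm$ as a representation of $\{U\otimes U\}$ on the Hilbert space, but that is not the hypothesis Theorem~\ref{TheFace} needs; what matters is that the stabilizer group $\G_\F$ of the face acts irreducibly on $\bar\F$ (the traceless Hermitian matrices supported on $S_\pm$), and that $\G_A$ acts irreducibly on Alice's Bloch space $\hat A$. Both hold here because $\G_\F$ contains every unitary of the Hilbert space $S_\pm$ and $A$ is a full quantum system, so the conclusion stands, but the justification you give is aimed at the wrong condition --- and, as the paper's discussion of the $2$-versus-$(N-2)$ cut shows, conflating these two notions of irreducibility is exactly where naive generalizations of this argument break down.
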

\proof
We use Theorem~\ref{TheQFace} from Subsection~\ref{SubsecSym}: this theorem is applicable because the symmetric
and antisymmetric subspace are both invariant with respect to transformations of the form $U\otimes U$. In the following,
we sketch the proof for the symmetric subspace $S:=S_+$; the proof for the antisymmetric case is completely analogous.

According to the notation of Theorem~\ref{TheQFace}, we have $N_A=n$ and $N_S=\dim(S_+)=n(n+1)/2$. Since the case
$n=1$ is trivial, we may assume that $n\geq 2$. Denote orthonormal basis vectors of $A=\C^n$ by $|1\rangle,|2\rangle,\ldots,|n\rangle$.
We may choose the matrix $E_A$ as $E_A:=\frac 1 {\sqrt{2}} |1\rangle\langle 1|-\frac 1 {\sqrt{2}}|2\rangle\langle 2|$; it
satisfies $\Tr E_A=0$ and $\Tr E_A^2=1$ as required. An orthonormal basis of $S$ consists of the vectors $|ii\rangle$ with $1\leq i\leq n$,
and $\frac 1 {\sqrt{2}}\left( |ij\rangle+|ji\rangle\right)$ for $i<j$. This allows us to write the projector $\pi$ onto $S$ as
\[
  \pi=\sum_{i=1}^n |ii\rangle\langle ii| + \frac 1 2 \sum_{i<j} \left(|ij\rangle+|ji\rangle\right)\left(\langle ij|+\langle ji|\right)
  =\frac 1 2 \sum_{i,j} |ij\rangle\langle ij| + \frac 1 2 \sum_{i,j} |ij\rangle\langle ji|.
\]
Using these expressions, the calculation of $\Tr\left[ (\pi(E_A\otimes\Id_B)\pi)^2\right]$ is lengthy but straightforward.
The result is that this expression equals $n/4+1/2$. Substituting this into Theorem~\ref{TheQFace} proves the claim.
\qed

Since Theorem~\ref{TheQFace} (which has been used to prove this result) is applicable in more general situations, there exist
several possibilities to generalize the theorem above. For example, consider the totally symmetric or totally antisymmetric subspace
on $N$ qudits, $S_+:=\C^n\vee \C^n\vee\ldots\vee\C^n$, and $S_-:=\C^n\wedge \C^n\wedge\ldots\wedge\C^n$, both as
subspaces of $AB=(\C^n)^{\otimes N}$. Consider the $1$-versus-$(N-1)$-qudits cut, i.e.\ $A=\C^n$ and $B=(\C^n)^{\otimes (N-1)}$.
Then the situation satisfies the conditions of Theorem~\ref{TheQFace}: for every unitary $U$ on $A$, there is a unitary $U'$ on $B$
such that $U\otimes U' S_\pm = S_\pm$, namely $U':=U^{\otimes (N-1)}$. Thus, Theorem~\ref{TheQFace} can be used to compute
the expected local purity for this cut. (We do not pursue this calculation here.)

It is clear that the result of Theorem~\ref{ThePuritySymm} above can be proven in principle without the machinery of this paper,
purely within quantum mechanics. However, we think it is important to have it derived within the framework of
general probabilistic theories, showing the power and flexibility of this framework. Our general proof, as given in Section~\ref{SecMath},
is very geometrical in flavour; it treats the set of quantum states as a convex set, with the (anti)symmetric subspace as a face.
Thereby, it shows very clearly what geometric properties of the quantum state space are important for the result to hold.

Apart from the generalization to other theories that one obtains for free, this proof method also clarifies some aspects of the
quantum result. For example, it shows why some further generalizations of the above result will need considerable further effort, such as computing
the average local purity for the $2$-versus-$(N-2)$-cut on the totally symmetric subspace. If Alice holds two qudits, she can 
locally perform unitaries of the form $U\otimes U$. If, for any such unitary, the map $U':=U^{\otimes (N-2)}$ is applied on
Bob's part of the state, then the totally symmetric subspace stays invariant. Since the group of unitaries $U\otimes U$ acts
irreducibly on Alice's subspace $\C^n\wedge\C^n$, the situations seems fine at first, and one might guess that Theorem~\ref{TheQFace} is easily generalized to this situation.

But this turns out to be wrong: the important property is that Alice's unitaries should act \emph{irreducibly
on her convex set of states}, which is in general not the case. Instead, the group action $\rho\mapsto U\otimes U \rho U^\dagger
\otimes U^\dagger$ is reducible on the space of traceless Hermitian matrices over $\C^2\vee \C^2$,
and Alice's (Bloch) state space decomposes into invariant subspaces. This shows that the relevant question is not whether
the group of $U\otimes U$ acts irreducibly, but whether it is a $2$-design. In this case, the answer is negative.

\subsection{Statistical physics and the second law}
\label{SubsecStatPhys}
Our result can also be used to generalize an approach to thermodynamics which has recently attracted a lot of
attention~\cite{HaydenP07,CalsamigliaHDB07,GemmerOM01,PopescuSW06,Lloyd88,GemmerMM04,LubkinL93,MuellerGE10}.
This approach is based on the fact that most pure quantum states are almost maximally entangled, in the sense described earlier
in this paper.

The main idea, as developed for example in~\cite{PopescuSW06}, can be stated as follows. We divide the universe's Hilbert
space ${\cal H}$ into a small ``system'' and a large ``environment'', ${\cal H}={\cal H}_S\otimes{\cal H}_E$. In many cases,
the state of the universe is constrained to be an element of some subspace ${\cal H}_R\subseteq {\cal H}$, which might be,
for example, a subspace corresponding to a narrow window of energies. The maximally mixed state on ${\cal H}_R$ is called
the ``equiprobable state``$\epsilon_R$. The actual state of the universe is then assumed to be some unknown pure state $|\psi\rangle$
from ${\cal H}_R$.

At first, it seems as if the exact form of the actual state $|\psi\rangle\in{\cal H}_R$ would have profound consequences, and that
very little can be said about the reduced state on the small subsystem, $\psi_S=\Tr_E |\psi\rangle\langle\psi|$. But this turns out
to be wrong: in fact, ``most'' states $|\psi\rangle$ look very alike on the small subsystem. That is, $\psi_S\approx \Tr_E \epsilon_R$
with high probability for randomly chosen $|\psi\rangle$. This can be formulated as follows:

\begin{quotation}
{\bf Principle of Apparently Equal a priori Probability~\cite{PopescuSW06}:}
For almost every pure state of the universe, the
state of a sufficiently small subsystem is approximately
the same as if the universe were in the equiprobable state
$\epsilon_R$. In other words, almost every pure state of the universe
is locally (i.e.\ on the system) indistinguishable from
$\epsilon_R$.
\end{quotation}
This principle is then used to justify the `equal a priory probability' assumption, an assumption of Statistical Physics which is used in the derivation of many major results in that field.

Our results can be interpreted in a similar manner. First, consider the simple case where we have a small system, $A$ (not
necessarily quantum), coupled to a large
bath, $B$, and where all global states are in principle possible. In the quantum situation, this corresponds to the special case where
${\cal H}_R={\cal H}$. If we have a random state $\omega^{AB}$ on $AB$ with purity $\p(\omega^{AB})$, and if the conditions of
Theorem~\ref{TheMainMain} are satisfied, we have for $N_B\gg K_A$
\[
   \mathbb{E}_\omega \p(\omega^A)= \frac{K_A-1}{K_A K_B-1}\cdot\frac{N_A N_B-1}{N_A-1}\cdot\p(\omega^{AB})
   \approx \frac{N_A(K_A-1)}{K_A(N_A-1)}\cdot\p(\omega^{AB})\cdot\frac{N_B}{K_B}.
\]
If this is very small, then the state of the small subsystem is very close to maximally mixed. As discussed in Subsection~\ref{SubsecUnify}), the
Markov inequality (or more powerful measure concentration inequalities) tell us that the expectation value is then also the typical value. In this case,
the ``Principle of Apparently Equal a priori Probability'' is satisfied in our more general setting.

It remains to see under what conditions this expectation value is actually close to zero. There are two possibilities how this may happen:
\begin{itemize}
\item We might have a random pure state, i.e.\ $\p(\omega^{AB})=1$, but $N_B/K_B$ might tend to zero with increasing size of the bath $B$.
This is exactly what happens in quantum theory, where $N_B$ is the bath's Hilbert space dimension, and $K_B=N_B^2$. The interpretation
is that ``most'' pure bipartite states are almost maximally entangled, such that the local reduced state looks close to maximally mixed.

It is interesting to see that the same phenomenon may appear in general probabilistic theories beyond quantum theory, and may
in fact be stronger:  there are natural possible classes of theories~\cite{Wootters86,Hardy01} where $K_B=N_B^r$ for some
integer $r\in\mathbb{N}$. While we have $r=2$ for quantum theory, other theories with $r\geq 3$ would have even ``stronger-than-quantum
randomization'': they would have $N_B/K_B=N_B^{1-r}$, turning faster to zero than the quantum value.
\item On the other hand, we may have $K_B\approx N_B$, or equality as in the case of classical probability theory. Then,
we could still have randomization if $\p(\omega^{AB})$ tended to zero with increasing size of $B$, in situations where
it makes sense to model the global state as a random \emph{mixed} state.

A situation like this is given in classical coin tossing, as discussed in Subsection~\ref{SubsecUnify}: there, $\p(\omega^{AB})$
describes the purity of the unknown global initial state, before the coin is tossed in a random, reversible way. Larger
environment usually amounts to less knowledge about its details, which means smaller purity $\p(\omega^{AB})$.

One may argue that a situation like this is also encountered in natural systems of classical statistical mechanics, if a small
finite system is reversibly coupled to a large, unknown environment.
\end{itemize}

In the quantum situation, the principle above is formulated for the more general case that ${\cal H}_R$ is a proper subspace of ${\cal H}$,
and not all of the global Hilbert space. The analogue of this situation in general probabilistic theories would be to have the set of allowed
states restricted to some face of the global state space $AB$. Our results do not directly address this situation in full generality, but
Theorem~\ref{TheFace} covers the special case of a $GG'$-invariant face $\mathbb{F}$. Even though the resulting formula is not as transparent as
the one above, it shows that the amount of randomization is also very strong if the face's dimension $K_\mathbb{F}$ increases with
the bath $B$: since projections are contractions, we have
\begin{eqnarray*}
   \mathbb{E}_\omega^{\mathbb{F}} \p(\omega^A)&=&\left\|\pi_{\bar \F} (X^A\otimes u^B)^\wedge\right\|_2^2
   \cdot\frac{K_A-1}{K_\F -1}\cdot\left(\strut\p\left(\omega^{AB}\right)-\p(\mu_\F)\right)
   \leq \left\|\left(X^A\otimes u^B\right)^\wedge\right\|_2^2 \cdot\frac{K_A-1}{K_\mathbb{F}-1} \, \p\left(\omega^{AB}\right)\\
   &=& \frac{K_A-1}{K_\F -1}\cdot\frac{\p\left(\omega^{AB}\right)}{\p\left(\varphi^A\otimes\mu^B\right)}
   =\frac{K_A-1}{K_\F -1}\cdot \frac{N_A N_B-1}{N_A-1}\cdot\p(\omega^{AB})
\end{eqnarray*}
whenever the conditions of Theorem~\ref{TheMainMain} are satisfied (we have also used Lemma~\ref{LemGlobalPauli}). If $N_B/K_\F$
tends to zero with increasing size of the bath (as is the case for symmetric and antisymmetric subspaces in quantum theory), the
Principle of Apparently Equal a priori Probability remains valid.

A speculative, but interesting application of this result in the post-quantum case could be in black hole thermodynamics.
The results on typical entanglement have already been discussed in the context of black hole entropy~\cite{HsuReeb}, and quantum information
analysis has been applied to learn more about the black hole information paradox~\cite{HaydenP07,SmolinOppenheim,PageBH,PreskillBH}.
Since no fully complete and unique theory of quantum gravity is available yet, many parts of black hole thermodynamics are subject to speculation.
Vice versa, the assumption that the laws of thermodynamics are valid for black holes is used to obtain information on properties of the
possible underlying theory of quantum gravity.

One may speculate that a possible theory of quantum gravity might not only involve a modification of the usual concepts of geometry and gravity,
but also of quantum theory itself. It is possible that quantum theory is only an approximation to a different kind of deeper probabilistic theory, similarly
as classical probability is only an approximation to quantum theory.
The principle of equal a priori probability is closely linked to the second law, and one may
view our results as a first step towards formulating the second law as a kind of meta-theorem that does not depend on the details of the theory and may thus apply to post-quantum theories.

As further motivation for research in this direction we note that there is a striking historical precedent where
assuming the persistence of the second law
helped to discover new physics. Planck~\cite{Planck1900} arrived at energy quantization
(energy $\varepsilon = h \nu$, where $\nu$ is a frequency and $h$ his constant), by implicitly assuming certain thermodynamical entropic relations would still be valid after the quantization of energy.

\subsection{Simple proof of the quantum case}
\label{SubsecQProof}
We now give a comparatively simple derivation of the value of typical purity for the quantum case. The proof simplifies and generalizes the
proof for the case of globally pure quantum states in~\cite{OliveiraDP07, DahlstenOP08}. Moreover, it gives some intuition on the
necessary notions and ingredients for the general proof in Section~\ref{SecMath}.

Firstly we note that the local purity is directly related to how well one can predict measurements of local outcomes. 
Phrased in these terms we wish to show that local measurements (of the form $g_A\otimes \Id_B$ for some $g_A\neq\Id_A$) tend to be highly unpredictable. We shall accordingly represent the state in a way which makes it clear to what extent local measurements are defined. We will use a
nice way of linking the Heisenberg and Schr\"odinger pictures, which is to expand the density matrix in terms of elements $g$ of the Pauli group $\{X,Y,Z,\id\}^{\otimes n}$:
\[
\rho=\sum_i \xi_i g_i.
\]
The sum contains $4^n$ terms, which we label from $i=0$ to $i=4^n-1$, such that $g_0=\id$.
The coefficients $\xi_i$ are directly related to the expectation values of the corresponding Pauli element via 
\begin{equation}
\label{eq:xi}
\left<g_i\right>=\Tr(\rho g_i)=\Tr(\id\xi_i)=2^n\xi_i.
\end{equation}
 
In what follows we use the above representation to derive the expected purity value. We write the formula in a way that highlights that the ratio of the local to the total purity is proportional to the ratio of the number of local versus global observables. The intuition here is as follows. There is a certain limited amount of purity/predictability about the state, and this gets associated with observables picked at random (not independently) by the
random unitary. If most observables are global, this predictability
is then likely to be associated with global observables, while the remaining local ones become unpredictable.

\begin{lemma} Consider any quantum state $\varphi$ on $n=n_A+n_B$ qubits, with fixed purity $\Tr(\varphi^2)$.
Apply a random unitary $U$ to it, i.e.\ $\rho:=U\varphi U^\dagger$.
Then the expected local purity on subsystem $A$ is given by
\[
   \frac{\mathbb{E}_U \Tr\left[(\rho^A)^2\right] - 2^{-n_A}}{\Tr(\varphi^2)-2^{-n}}
   =2^{n_B}\cdot\frac{K_A-1}{K_{AB}-1},
\]
where $K_A=4^{n_A}$ and $K_{AB}=4^{n}$ quantify the number of local (i.e.\ Paulis of the form $g_A\otimes\Id_B$)
and global degrees of freedom (all other Paulis), respectively. Note that $2^{-n_A}$ and $2^{-n}$ are the \emph{minimal} possible
values of the purity of any quantum state on $A$ resp.\ $AB$.
\end{lemma}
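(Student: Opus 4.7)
The plan is to compute both sides by expanding in the Pauli basis, subtracting off the identity component to isolate the ``Bloch vector'' part, and then exploiting the Haar symmetry on non-identity Paulis.

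First I would unpack the two purities in the expansion $\rho=\sum_i \xi_i g_i$. Using orthogonality $\Tr(g_i g_j)=2^n\delta_{ij}$ one gets $\Tr(\rho^2)=2^n\sum_i \xi_i^2$, and separating the $i=0$ (identity) term gives $\Tr(\rho^2)-2^{-n}=2^n\sum_{i\neq 0}\xi_i^2$. For the reduced state I would use $\Tr_B(g_i)=2^{n_B}g_A$ if $g_i=g_A\otimes\id_B$ (call this index set $L$) and $0$ otherwise, so $\rho^A=2^{n_B}\sum_{i\in L}\xi_i g_i^A$ and hence
\[
   \Tr\bigl[(\rho^A)^2\bigr]-2^{-n_A}=2^{n+n_B}\sum_{\substack{i\in L\\ i\neq 0}}\xi_i^2.
\]
Notice that the identity subtraction is exactly the minimal purity, matching the form of the claim.

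Next I would observe two symmetry facts about $\xi_i$ under the Haar measure. The deterministic one is that $\Tr(\rho^2)=\Tr(\varphi^2)$ is invariant under $U$, so $\sum_{i\neq 0}\xi_i^2=2^{-n}(\Tr(\varphi^2)-2^{-n})$ is a \emph{constant} over the random unitary. The statistical one is that $\mathbb{E}_U\xi_i^2$ takes the same value $c$ for every $i\neq 0$: writing $\xi_i^2=2^{-2n}\Tr(\varphi\,U^\dagger g_i U)^2$ and using that all non-identity Paulis have the same spectrum ($\pm 1$ with equal multiplicity), Haar invariance of $U$ implies the distribution of $U^\dagger g_i U$ is the same for any such $i$. (Equivalently, conjugation by $U(2^n)$ acts irreducibly on the traceless Hermitians, so Schur averaging produces a $g_i$-independent second moment.)

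Combining these two facts I would solve for $c$ by the constancy of the global purity: $(4^n-1)c=2^{-n}(\Tr(\varphi^2)-2^{-n})$. Substituting into the local-purity expression gives
\[
   \mathbb{E}_U\Tr\bigl[(\rho^A)^2\bigr]-2^{-n_A}
   =2^{n+n_B}(4^{n_A}-1)\,c
   =2^{n_B}\cdot\frac{4^{n_A}-1}{4^n-1}\bigl(\Tr(\varphi^2)-2^{-n}\bigr),
\]
which is exactly the claim with $K_A=4^{n_A}$, $K_{AB}=4^n$. The only subtle step is the $i$-independence of $\mathbb{E}_U\xi_i^2$; everything else is bookkeeping with the Pauli orthogonality relations. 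This irreducibility/twirling argument is also the natural seed for the general probabilistic-theory proof in Section~\ref{SecMath}, where ``Pauli basis'' is replaced by an invariant orthonormal basis on the Bloch space of the theory.
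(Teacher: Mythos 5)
Your proposal is correct and follows essentially the same route as the paper's proof: expand $\rho$ in the Pauli basis, isolate the identity coefficient, use the unitary conjugacy of all non-identity Paulis (equal spectra) plus Haar invariance to show $\mathbb{E}_U\xi_i^2$ is independent of $i\neq 0$, and solve for that common value from the conservation of the global purity. The only cosmetic difference is that you subtract the minimal purities $2^{-n_A}$ and $2^{-n}$ from the outset, which makes the bookkeeping marginally cleaner, but the key lemma and its justification are identical.
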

\begin{proof}
Note that $\rho^A=\Tr_B(\rho)=\sum_{\text{Paulis }g_i \text{ with } \id \text{on B}}\xi_i \Tr_B(g_i)$,
and there are $4^{n_A}$ such elements. This shows that $(\rho^A)^2=(2^{n_B})^2 \sum_{i=0}^{4^{n_A}-1} \xi_i^2 \id_A$,
where $i$ is the label of the Pauli operator $g_A\otimes\Id_B$. Consequently
\begin{eqnarray*}
\mathbb{E}_U \Tr\left[(\rho^A)^2 \right] &=& 2^{n+n_B}\sum_{i=0}^{4^{n_A}-1}\mean_U \xi_i^2
= 2^{n+n_B}\left[ \mean_U \xi_0^2+\sum_{i=1}^{4^{n_A}-1}\mean_U \xi_i^2 \right] 
= 2^{n+n_B}\left[ 2^{-2n}+\left(4^{n_A}-1\right)\mean_U \xi_i^2  \right].
\end{eqnarray*}
We have used the fact that $\Tr(\rho)=1=2^n\xi_0\Rightarrow \mean_U \xi_0^2=2^{-2n}$.
Now consider two elements $g_i,g_j$ with $i,j\neq 0$. Those elements are connected by some unitary
operation $V$, i.e. $g_j=V g_i V^\dagger$. Thus,
\[
   \mathbb{E}_U \xi_j^2 =\mathbb{E}_U \Tr\left[(\varphi U^\dagger g_j U)^2\right] 
   =\mathbb{E}_U \Tr\left[(\varphi U^\dagger V g_i V^\dagger U)^2\right] = \mathbb{E}_U \xi_i^2
\]
due to the unitary invariance of the Haar measure. Now we exploit the fact that $\Tr(\varphi^2)=\Tr(\rho^2)=
2^n \left(\sum_{i=1}^{4^n-1} \xi_i^2 + 2^{-2n}\right)$. Taking the expectation value of this expression gives
$\displaystyle\mathbb{E}_U \xi_i^2=2^{-n}\cdot
\frac{\Tr(\varphi^2)-2^{-n}}{4^n-1}$. This can be substituted into the expression above, proving the statement of the lemma.
\end{proof}

Some remarks:
\begin{itemize}
\item[1.] We see that
$\mean_U \Tr\left[(\rho^A)^2 \right]\approx 2^{\left(n_B-n\right)}=2^{-n_A}$ when $n\gg n_A \gg 1$. This is the minimum value it can take.  
\item[2.] The purity condition $\Tr(\rho^2)=1$ enforces the uncertainty principle, forcing most of the $\xi_i^2$ to be small.
\item[3.] Apart from the purity restriction, the observer is constrained by having access to only the $K_A-1=4^{n_A}-1$ local observables
out of a total of $K_{AB}-1=4^{n}-1$. This ratio appears directly in the statement of the lemma.
\end{itemize}
Perhaps surprisingly, the proof above can also be adapted to classical probability theory (with $n_A+n_B$ classical bits). The only difference
in the result will be that $K_A$ and $K_{AB}$ have to be replaced by $K_A^{\rm CPT}=2^{n_A}$ and $K_{AB}^{\rm CPT}=2^n$,
in agreement with Subsection~\ref{SubsecKN}. Remark 2.\ above suggests that this result is related to the absence of uncertainty
in classical pure states.

The proof above also illustrates some main ideas for the derivation of the general probabilistic result in Section~\ref{SecMath}.
A useful insight above was to consider the linear maps $\xi_i\equiv \xi_i(\rho)$, and to see that purity can in general be expressed as a sum over
$\xi_i(\rho)^2$. More specifically, the \emph{local reduced} state's purity can be expressed as a sum over $\xi_i(\rho)^2$ for a
certain type of $\xi_i$'s, namely those which act locally: $\xi_i(\rho)=\Tr(\rho(g\otimes\Id))$. Since they are all connected by reversible
transformations, they all have the same Haar expectation value.

The general case will use a very similar construction, where the sum is replaced by an integral over the group, and the map $\xi_i$ is replaced by a general ``Pauli map'' (cf.\ Lemma~\ref{LemFormulaPauliPurity}). In analogy to the quantum case, it turns out that the local reduced state's purity can be expressed as an integral over a certain type of Pauli map, namely one which acts locally (cf.\ Lemma~\ref{LemGlobalPauli} and
the proof of Theorem~\ref{TheMain1}.) Again, due to invariance with respect to reversible transformations, all these maps in the integral
have the same Haar expectation value, giving rise to the proof of our main result.

\section{Mathematical framework and proofs}
\label{SecMath}
\subsection{General probabilistic theories and the Bloch representation}
\label{SubsecGenBloch}
We work in the framework of \emph{general probabilistic theories}, a natural mathematical framework which describes basic operational laboratory situations like preparations, transformations, and measurements. Quantum theory can be described within the framework, as well as classical probability theory and a large class of possible generalizations. For an introduction to this framework, and in particular for the physical motivation, see e.g.~\cite{Hardy01, Barrett07,BarnumWilce08}. Our notation is particularly close to~\cite{BarnumWilce08} and~\cite{Hardy01}.

A \emph{state space} is a tuple $(A,A_+,u^A)$, where $A$ is a real vector space of finite dimension $K_A$ (we will not consider infinite-dimensional state spaces in this paper), and $A_+\subset A$ is a proper cone (that is, a closed, convex cone
of full dimension which does not contain lines). It can be interpreted as the set of unnormalized states.
$u^A$ is a linear functional which is strictly positive on $A_+\setminus\{0\}$ and
is called the \emph{order unit} of $A$.
The set of points $\omega\in A_+$ with $u^A(\omega)=1$ is called the set of (normalized)
\emph{states} and is denoted $\Omega_A$. It follows that $A_+=\bigcup_{\lambda\geq 0} \lambda\Omega_A$ and
that $\Omega_A$ is a compact convex $(K_A-1)$-dimensional set. Its extremal points are called \emph{pure states},
the others are \emph{mixed states}. Instead of the full tuple, we will usually just call $A$ the ``state space''.

A linear invertible map $T:A\to A$ is called a \emph{symmetry} if $T(A_+)=A_+$ and $u^A\circ T=u^A$. That is,
symmetries $T$ map the set of normalized states $\Omega_A$ bijectively into itself. The example of a qubit shows
that not all symmetries of a state space have to be allowed transformations: reflections in the Bloch ball
are symmetries, but are not physically allowed since they do not correspond to completely positive maps.
Thus, in order to define reversible dynamics on a state space, we also have to specify a group
$\G_A$ of (allowed) \emph{reversible transformations}.
For the sake of generality, we allow arbitrary choices of $\G_A$, as long as $\G_A$ is compact and contains
only symmetries.\footnote{The physical motivation for postulating \emph{compact} groups $\G_A$ is as follows.
First, $\G_A$ must be bounded (in the topology induced by its action on $\Omega_A$) due to the compactness
of $\Omega_A$. Then, suppose we have a sequence
of transformations $(T_n)_{n\in\N}\subset \G_A$ such that $\lim_{n\to\infty}T_n=T$. Physically, this means that
we can apply the transformation $T$ to arbitrary accuracy. But this is anyway all that we can hope for in physics;
hence it makes sense to call $T$ a physically allowed reversible transformation, and include it in $\G_A$. Thus,
from a physical point of view, it makes sense to postulate that $\G_A$ must be closed.}
Then, a pair $(\mathbf{A},\G_A)$, where $\mathbf{A}$ is a state space (equivalently: a tuple $(A,A_+,u^A,\G_A)$) will be called
a \emph{dynamical state space}. Again, to save some ink, we will usually denote the dynamical state space simply by the
letter $A$ rather than by the full tuple.

One goal of this paper is to investigate properties of random pure states on general state spaces.
In order to have a meaningful mathematical notion of ``random states'', we need the following property:

\begin{definition}[Transitivity]
\label{DefTransitivity}
A dynamical state space $A$ is called \emph{transitive} if for every pair of pure states $\alpha,\omega\in\Omega_A$
there exists a reversible transformation $T\in\G_A$ such that $T\alpha=\omega$.
\end{definition}

Since $\G_A$ is compact, we have the notion of a Haar measure on that group~\cite{BarrySimon}. Thus, we can draw a pure state by applying a random reversible transformation $T\in\G_A$ to an arbitrary given pure state $\omega$. Transitivity is required so that the resulting distribution does not depend on the initial state $\omega$. The property of transitivity is thus a necessary mathematical prerequisite in order to have an unambiguous notion of ``random pure states''. 

It is also questionable whether reversible theories without transitivity re self-consistent in a specific physical sense. Imagine for example that given a product state there is no reversible way to transform it into a pure but correlated (and thus entangled) state. This is the case for the theory with PR-boxes known as boxworld~\cite{boxworld}. Then one can for example not model a measurement as a reversible correlating interaction between a memory system and the system in question. The possibility of modelling measurement interactions in that way seems to play a fundamental role for the self-consistency of quantum theory and statistical mechanics (cf.\ Maxwell's demon and Bennett's reversible measurements).
In accordance with the two justifications above we shall henceforth unless otherwise stated assume that we are dealing with transitive state spaces.

\begin{definition}[Maximally mixed state]
\label{DefMaxMix}
If $A$ is a transitive dynamical state space, let $\omega\in\Omega_A$ be an arbitrary pure state, and define
the \emph{maximally mixed state} $\mu^A$ on $A$ by
\[
   \mu^A:=\int_{G\in\G_A} G(\omega)\, dG.
\]
\end{definition}

It follows from transitivity that $\mu^A$ does not depend on the choice of $\omega$. Clearly, $T\mu^A=\mu^A$ for all
$T\in\G_A$, and $\mu^A$ is the unique state on $A$ with this invariance property. The space $A$ can be decomposed
into a direct sum
\[
   A=\hat A\oplus \R\mu^A,
\]
where $\R\mu^A$ denotes the one-dimensional subspace which is spanned by $\mu^A$, and $\hat A$ is defined as the set of
all vectors $a\in A$ with $u^A(a)=0$. 
\begin{definition} [Bloch vector]
Given any state $\omega\in \Omega_A$
(or, more generally, any point $\omega\in A$ with $u^A(\omega)=1$), we define its corresponding \emph{Bloch vector} $\hat\omega$ as

\[
   \hat\omega:=\omega-\mu^A\in \hat A.
\]
The set of all Bloch vectors $\hat\omega$ with $\omega\in\Omega_A$ will be called $\hat\Omega_A$.
\end{definition}
Note that convex combinations of states yield the corresponding convex combinations of the Bloch vectors:
$\left(\sum_i \lambda_i \omega_i\right)^\wedge=\sum_i \lambda_i \hat \omega_i$ if $\sum_i\lambda_i=1$.
Every reversible transformation $T\in\G_A$ leaves $\mu^A$ invariant. Thus, we have
\[
   (T\omega)^\wedge=T\omega-\mu^A=T(\omega-\mu^A)=T\hat\omega,
\]
and applying a transformation $T$ to a state is equivalent to applying it to the corresponding Bloch vector.

\subsection{Definition and properties of purity}
We would like to define a notion of \emph{purity} in generalized state spaces. In the quantum case, this is just
$\Tr(\rho^2)=\langle\rho,\rho\rangle$, where $\langle X,Y\rangle:=\Tr(XY)$ denotes the Hilbert-Schmidt inner product
on Hermitian matrices. This inner product is very special -- it is \emph{invariant} with respect to the reversible transformations
of quantum theory: $\langle UA U^\dagger,UBU^\dagger\rangle=\langle A,B\rangle$ for all unitaries $U$. Thus, it makes
sense to ask for the existence of an analogous inner product in more general theories.

\begin{lemma}
\label{LemInvariantInnerProduct}
Let $A$ be a transitive dynamical state space. Then the following statements are equivalent:
\begin{itemize}
\item There is a unique inner product $\langle \cdot,\cdot\rangle$ on $\hat A$ (up to constant multiples) such that all reversible
transformations $T\in \G_A$ are orthogonal~\footnote{A linear map $T$ is orthogonal with respect to an inner product
$\langle\cdot,\cdot\rangle$ if $\langle Tu,Tv\rangle=\langle u,v\rangle$ for all vectors $u$ and $v$.}.
\item $\G_A$ acts irreducibly on $\hat A$; that is, $\hat A$ does not contain any proper subspace which is invariant
under the action of all reversible transformations $ T\in\G_A$.
\end{itemize}
\end{lemma}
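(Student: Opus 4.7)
The plan is to use the standard representation-theoretic fact that, for a compact group acting on a real vector space, the existence and uniqueness (up to a positive scalar) of an invariant inner product is equivalent to irreducibility of the action. The key tools are Haar-measure averaging (to produce at least one invariant inner product) and a self-adjoint intertwiner argument (to decide when two such inner products are proportional). Throughout, I will use that $\G_A$ is compact (so the Haar measure exists as a finite measure) and that $\hat A$ is finite-dimensional and real.

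First I would establish that an invariant inner product $\langle \cdot,\cdot\rangle_0$ on $\hat A$ always exists: start with any inner product $(\cdot,\cdot)$ on $\hat A$ and set
\[
\langle u,v\rangle_0 := \int_{\G_A} (Tu,Tv)\, dT.
\]
Compactness of $\G_A$ ensures this integral converges and yields a positive-definite bilinear form, and invariance of the Haar measure immediately gives $\langle Tu,Tv\rangle_0=\langle u,v\rangle_0$ for all $T\in\G_A$. So existence is never the issue; uniqueness is what is tied to irreducibility.

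For the direction irreducibility $\Rightarrow$ uniqueness, suppose $\langle \cdot,\cdot\rangle_1$ is any other invariant inner product. Since both are nondegenerate bilinear forms on the same finite-dimensional space, there is a unique linear operator $M:\hat A\to\hat A$ with $\langle u,v\rangle_1=\langle Mu,v\rangle_0$ for all $u,v$. Using invariance of both inner products one checks that $M$ commutes with every $T\in\G_A$, and using symmetry of the two inner products one checks that $M$ is self-adjoint with respect to $\langle\cdot,\cdot\rangle_0$. Hence $M$ has a spectral decomposition into real eigenspaces, and each eigenspace is a $\G_A$-invariant subspace of $\hat A$. Irreducibility then forces $M$ to have a single eigenvalue, i.e.\ $M=\lambda\,\mathrm{Id}$, so $\langle\cdot,\cdot\rangle_1=\lambda\langle\cdot,\cdot\rangle_0$.

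For the converse I would argue by contraposition: assume $\hat A$ has a proper, nonzero $\G_A$-invariant subspace $V$. Using the invariant inner product $\langle\cdot,\cdot\rangle_0$ constructed above, the orthogonal complement $V^\perp$ is also $\G_A$-invariant (orthogonality is preserved by $\G_A$), so $\hat A=V\oplus V^\perp$ as a direct sum of invariant subspaces with $V^\perp\neq 0$. For any two positive scalars $a\neq b$, defining
\[
\langle v_1+w_1,\,v_2+w_2\rangle_{a,b} := a\langle v_1,v_2\rangle_0 + b\langle w_1,w_2\rangle_0 \qquad (v_i\in V,\ w_i\in V^\perp)
\]
yields an invariant inner product not proportional to $\langle\cdot,\cdot\rangle_0$, contradicting uniqueness. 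The main (very mild) obstacle is purely bookkeeping: one must check that $M$ defined via the two inner products is genuinely self-adjoint with respect to $\langle\cdot,\cdot\rangle_0$, which comes from combining symmetry of $\langle\cdot,\cdot\rangle_1$ with the definition of $M$; once that step is in hand, the rest is the standard Schur-type argument over the reals.
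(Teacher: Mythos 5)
Your proof is correct and is exactly the standard real-Schur-type argument that the paper invokes by citation (it simply refers to Proposition VIII.2.3 of Simon's representation theory book rather than writing out the details). In particular, you correctly identify the one subtlety---that the intertwiner $M$ must be shown self-adjoint so that its eigenspaces give genuine invariant subspaces, sidestepping the fact that the commutant of a real irreducible representation need not consist of scalars---so nothing is missing.
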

\proof
This is a standard use of the (real version of) Schur's Lemma, see Proposition VIII.2.3 in~\cite{BarrySimon}.
\qed

If $\G_A$ acts irreducibly on $\hat A$, we call the dynamical state space $A$ \emph{irreducible}. In other words: $A$ is irreducible
if and only if the only non-trivial subspaces which are invariant under all transformations of $\G_A$ are $\hat A$ and $\R\mu^A$.

Transitive dynamical state spaces are not automatically irreducible. As a simple example,
consider a state space $\hat\Omega_A$ which is a cylinder (as in Figure~\ref{fig_cylinder}) and where $\G_A$ contains all symmetries. The pure states
are the points on the two circles. By rotation and reflection, every pure state can be reversibly
mapped to every other, such that we have transitivity. However, it is not irreducible:
the symmetry axis and the plane orthogonal it, intersecting the cylinder's center, are
invariant subspaces.
 \begin{figure}[!hbt]
 \begin{center}
 \includegraphics[angle=0, width=4cm]{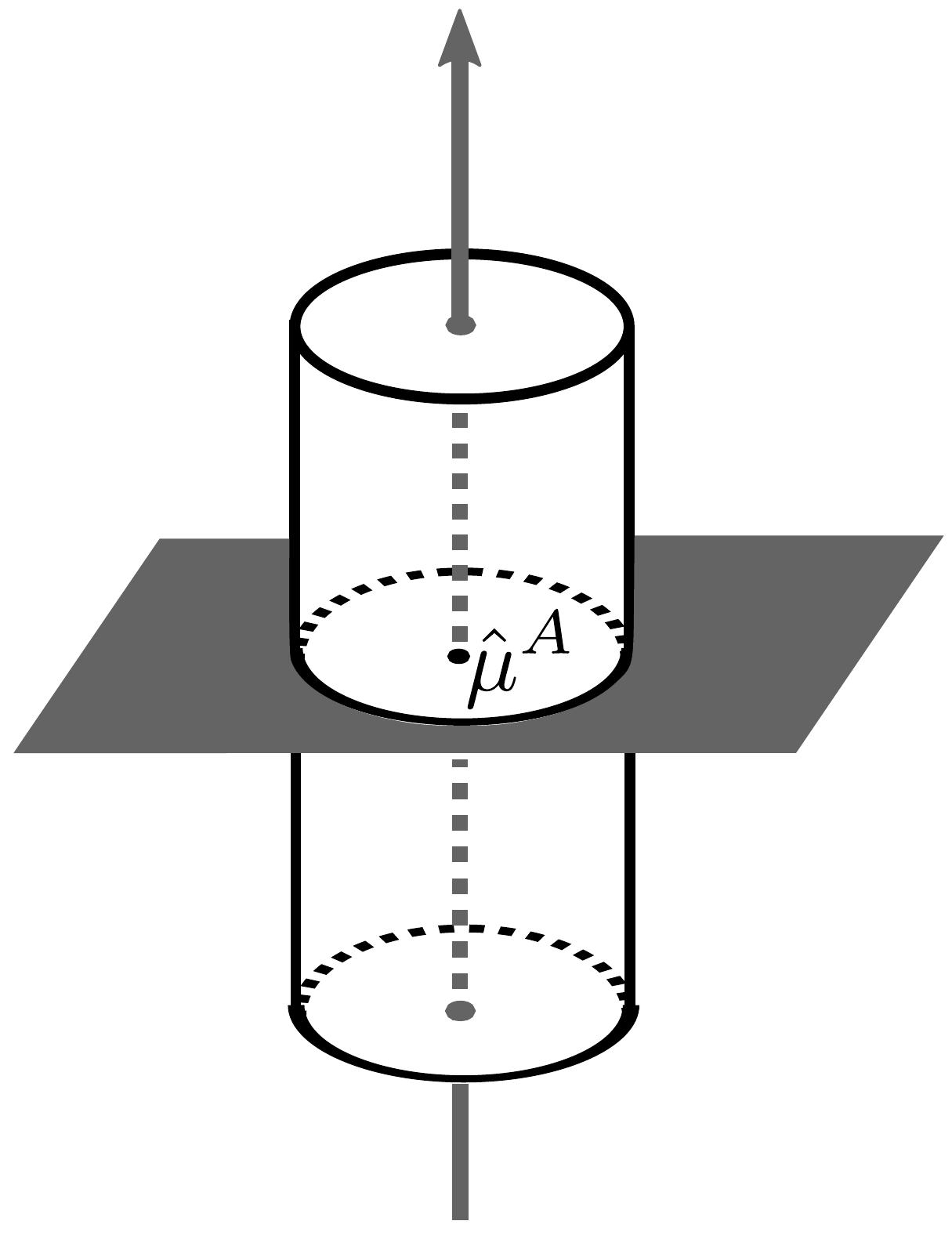}
 \caption{If $\hat\Omega_A$ is a cylinder, then the corresponding state space is transitive, but not irreducible.}
 \label{fig_cylinder}
 \end{center}
 \end{figure}

Now it is straightforward to introduce a generalized notion of purity.
\begin{definition}
\label{DefPurity}
Let $A$ be a transitive and irreducible state space, and let $\langle\cdot,\cdot\rangle$ be the unique
inner product on $\hat A$ such that all transformations are orthogonal and $\langle\hat\alpha,\hat\alpha\rangle=1$
for pure states $\alpha$. Then, the \emph{purity} $\p(\omega)$ of any state $\omega\in\Omega_A$ is defined as
the squared length of the corresponding Bloch vector, i.e.
\[
   \p(\omega):=\|\hat\omega\|^2 \equiv\langle\hat\omega,\hat\omega\rangle.
\]
\end{definition}

It is straightforward to deduce some useful properties that follow from this definition:
\begin{lemma}[Properties of Purity]
\label{LemPropPur}
Let $A$ be a transitive and irreducible dynamical state space, then
\begin{itemize}
\item[1.] $0\leq \p(\omega)\leq 1$ for all $\omega\in\Omega_A$,
\item[2.] $\p(\omega)=0$ if and only if $\omega=\mu^A$, i.e. if $\omega$ is the maximally mixed state on $A$,
\item[3.] $\p(\omega)=1$ if and only if $\omega$ is a pure state,
\item[4.] $\p(T\omega)=\p(\omega)$ for all reversible transformations $T\in\G_A$ and states $\omega\in\Omega_A$,
\item[5.] $\sqrt{\p}$ is convex, i.e.
\[
   \sqrt{\p\left(\sum_{i=1}^m \lambda_i \omega_i\right)}\leq \sum_{i=1}^m \lambda_i \sqrt{\p(\omega_i)}
\]
if $\lambda_i\geq 0$, $\sum_i \lambda_i=1$, and all $\omega_i\in\Omega_A$.
\end{itemize}
\end{lemma}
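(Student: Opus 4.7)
The plan is to treat everything as consequences of the fact that $\sqrt{\p(\omega)}=\|\hat\omega\|$ is the norm induced by the inner product $\langle\cdot,\cdot\rangle$ on $\hat A$. Once this observation is made, all five properties become standard facts about norms (positivity, triangle inequality, invariance under orthogonal maps, equality case in triangle inequality), combined with the fact that every state is a convex combination of pure states and that pure states have Bloch vectors of norm $1$ by the normalization in Definition~\ref{DefPurity}.

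Concretely, I would address the easier items first. Property~4 is immediate from orthogonality: $\p(T\omega)=\langle T\hat\omega,T\hat\omega\rangle=\langle\hat\omega,\hat\omega\rangle=\p(\omega)$, since applying $T\in\G_A$ commutes with the hat construction (as noted after the definition of Bloch vectors). Property~5 is just the triangle inequality applied to $\hat\omega=\sum_i\lambda_i\hat\omega_i$, followed by taking squares. Property~2 is trivial: $\p(\omega)=\|\hat\omega\|^2=0$ iff $\hat\omega=0$ iff $\omega=\mu^A$. For the forward direction of property~3, it is by construction that $\p(\alpha)=1$ for pure $\alpha$ (the inner product was rescaled to make this so, and transitivity together with property~4 guarantees this is consistent across all pure states).

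The remaining pieces are the upper bound in property~1 and the nontrivial direction of property~3. For these I would decompose an arbitrary state $\omega\in\Omega_A$ as a convex combination of pure states $\omega=\sum_i\lambda_i\alpha_i$ (possible since $\Omega_A$ is compact and convex, by Carath\'eodory or Krein--Milman). Then property~5 gives $\sqrt{\p(\omega)}\leq\sum_i\lambda_i\sqrt{\p(\alpha_i)}=\sum_i\lambda_i=1$, which is property~1. For the reverse of property~3, I would argue by contrapositive: if $\omega$ is not pure, write $\omega=\lambda\alpha+(1-\lambda)\beta$ with distinct pure states $\alpha\neq\beta$ and $\lambda\in(0,1)$. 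Then $\hat\alpha\neq\hat\beta$, but both are unit vectors in the inner product space $\hat A$, so by the strict triangle inequality (the equality case for inner product spaces requires the vectors to be nonnegative scalar multiples of each other, which here would force $\hat\alpha=\hat\beta$), we get $\|\hat\omega\|<\lambda+(1-\lambda)=1$, hence $\p(\omega)<1$.

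The only step that requires a little care is the equality case in the triangle inequality used for the converse of property~3; this is where strict convexity of the Hilbert norm is invoked, and one has to be sure that a non-pure state genuinely admits a decomposition into \emph{distinct} pure states (which is clear from the definition of extremality). No other obstacle is serious, since the irreducibility assumption has already done the heavy lifting by providing a genuine inner product in Lemma~\ref{LemInvariantInnerProduct}; everything else is linear algebra on $\hat A$.
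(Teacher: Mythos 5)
Your proof is correct, and items 1, 2, 4, 5 and the forward direction of item 3 follow the paper's proof essentially verbatim (norm positivity, triangle inequality, orthogonality of $T$, convex decomposition into pure states). The one place you genuinely diverge is the nontrivial direction of item 3 ($\p(\omega)=1\Rightarrow\omega$ pure). The paper argues directly: a Bloch vector of norm $1$ lies on the sphere bounding the ball $\mathcal{B}=\{x:\langle x,x\rangle\le 1\}$, hence is an \emph{exposed} point of $\mathcal{B}$, hence an exposed point of the subset $\hat\Omega_A\subset\mathcal{B}$, hence pure; this buys a slightly stronger byproduct, explicitly noted in the paper, namely that every pure state is exposed and not merely extreme. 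You instead argue the contrapositive via the equality case of the triangle inequality for an inner-product norm. Both arguments are ultimately the strict convexity of the Euclidean ball, and yours is perfectly valid, but one step needs repair: a non-pure state need \emph{not} admit a decomposition $\omega=\lambda\alpha+(1-\lambda)\beta$ into exactly \emph{two} distinct pure states (the centroid of a triangular state space is a counterexample --- it lies on no edge). What is true, and what your closing remark already gestures at, is that any Carath\'eodory decomposition $\omega=\sum_i\lambda_i\alpha_i$ into pure states with $\lambda_i>0$ must involve at least two distinct $\alpha_i$ when $\omega$ is not extreme; the equality case of the triangle inequality for finitely many unit vectors (all summands must be positive multiples of a common vector, forcing all $\hat\alpha_i$ equal) then gives $\|\hat\omega\|<1$. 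With that wording fixed, your argument is complete.
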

\proof
First, 5. follows directly from the fact that $\sqrt{\p(\omega)}=\|\omega\|$ is a norm (use the triangle inequality).
That pure states $\omega$ have $\p(\omega)=1$ follows directly from Definition~\ref{DefPurity}.
Since every state $\omega$ can be written as a convex
combination of pure states, it follows from 5. that $\p(\omega)\leq 1$ for all states $\omega$, and $\p(\omega)=\|\hat\omega\|^2\geq 0$
is clear. We have proven 1. Clearly, from $0=\p(\omega)=\|\hat\omega\|^2$, it follows that $\hat\omega=0$,
so $\omega=\mu^A$, and this proves 2. Since the inner product was chosen such that reversible transformations
are orthogonal, it follows that
\[
   \p(T\omega)=\langle T \hat\omega, T \hat \omega\rangle=\langle\hat \omega,\hat\omega\rangle=\p(\omega).
\]
Now consider the ball ${\cal B}:=\{x\in \hat A:\langle x,x\rangle\leq 1\}$. If $\omega$ is any state
with $\p(\omega)=1$, then $\hat \omega$ is on the surface of that ball; in particular, $\hat\omega$ is an exposed point of
the convex set ${\cal B}$. But since $\hat\omega\in \hat \Omega_A$ and $\hat\Omega_A\subset{\cal B}$, it must then also
be an exposed point of $\hat\Omega_A$, hence a pure state. This proves 3. Note that it also proves that all pure states are exposed.
\qed

In general, our definition of purity only works for transitive state spaces. Unfortunately, in the case of bipartite (and multipartite)
state spaces, this already excludes the most popular general probabilistic theory, colloquially called \emph{boxworld}. As it turns out, there
is a natural way to define an analogous notion of purity in boxworld, which we explain in Appendix~\ref{SecPurityBoxworld}.
However, the resulting notion of purity does not have all the nice properties of Lemma~\ref{LemPropPur} any more: in particular,
it equals unity for some pure (product) states, but is necessarily less than one for other pure (PR box) states.

In the quantum case, our definition of purity coincides with the standard definition up to a factor and an offset:
\begin{example}[Purity of Quantum States]
\label{ExPurityQuantum}
The real vector space which describes the states on a quantum $n$-level system is the set of Hermitian complex $n\times n$ matrices,
\[
   A=\{M\in\C^{n\times n}\,\,|\,\,M=M^\dagger\}.
\]
The cone of unnormalized states is given by all positive matrices, while the order unit is the trace functional:
\[
   A_+=\{M\in A\,\,|\,\, M\geq 0\},\qquad u^A(\rho)=\Tr(\rho).
\]
Thus, the set of normalized states $\Omega_A$ is the usual set of density matrices; similarly, the Bloch vector space $\hat A$
is the set of traceless Hermitian matrices. The group of reversible transformations $\G_A$
is the projective unitary group,
\[
   \G_A=\{U\cdot U^\dagger\,\,|\,\, U \in SU(n)\},
\]
and this group acts irreducibly on $\hat A$ (this follows from Lemma~\ref{LemClifford} in the appendix).
Thus, there is a unique inner product on $\hat A$ such that all reversible transformations are orthogonal. It is easy to guess
(we mentioned it before): it is the Hilbert-Schmidt inner product, scaled such that pure state Bloch vectors have norm 1:
\[
   \langle \hat L,\hat M\rangle:=\frac n {n-1}\Tr(\hat L \hat M)\qquad (\hat L,\hat M\in\hat A),
\]
As a consequence, the purity $\p(\rho)$ of any quantum state $\rho$ is
\begin{equation}
   \p(\rho)=\langle \hat \rho,\hat \rho\rangle=\langle \rho-\id/n,\rho-\id/n\rangle=\frac n {n-1}\Tr(\rho^2)-\frac 1 {n-1}.
   \label{eqQuantumPurity}
\end{equation}
\end{example}

Classical probability distributions can be treated in a similar manner:

\begin{example}[Purity of Classical Probability Distributions]
\label{ExPurityClassical}
The state space $\Omega_B$ of a classical $n$-level system is the set of all probability distributions on $n$ outcomes, that is, the simplex
\[
   \Omega_B=\left\{(p_1,\ldots,p_n)\,\,|\,\, p_i\geq 0, \sum_{i=1}^n p_i=1\right\}.
\]
This state space is contained in the vector space $B=\R^n$ with order unit $u^B(p):=\sum_{i=1}^n p_i$ for $p=(p_1,\ldots,p_n)\in B$.
The cone of unnormalized states is
\[
   B_+=\{p=(p_1,\ldots,p_n)\in B\,\,|\,\, p_i\geq 0 \mbox{ for all }i\},
\]
and the group of reversible transformation $\G_B$ is the permutation group $S_n$. The unique state which is invariant with respect to
$\G_B$ is the maximally mixed state $\mu^B=\left( \frac 1 n,\frac 1 n,\ldots,\frac 1 n\right)$.
It is a well-known fact of group representation theory~\cite{BarrySimon} that $\G_B$
acts irreducibly on $\hat B=\left\{p\in B\,\,|\,\, \sum_i p_i=0\right\}$. The unique invariant inner product on $\hat B$ turns out to be
\[
   \langle \hat p,\hat q\rangle:=\frac n {n-1}\sum_{i=1}^n \hat p_i \hat q_i\qquad (\hat p,\hat q\in \hat B).
\]
Permutations relabel the entries of $p$ and $q$ and preserve this inner product. Pure states $p=(0,\ldots,0,1,0,\ldots,0)$ have
$\langle \hat p,\hat p\rangle=1$. Thus, the purity of a probability distribution $p$ on $n$ outcomes, using $\hat p_i=p_i-1/n$, is
\begin{equation}
   \p(p)=\langle\hat p,\hat p\rangle=\frac n {n-1} \sum_{i=1}^n p_i^2 - \frac 1 {n-1}.
   \label{eqPurityClassical}
\end{equation}
This is the quantum result~(\ref{eqQuantumPurity}) restricted to diagonal matrices (as expected); however, here it is derived
without embedding the probability distributions into quantum state space.
\end{example}

\begin{example}[Purity for a gbit]
\label{ExPurityGbit}
Consider a generalized bit, or ``gbit''~\cite{Barrett07} where the state space $\hat\Omega_A$ is a square as in Figure~\ref{fig_gbit}. It can be understood as
describing ``one half of a PR box''~\cite{Barrett07}, and as a particular type of state space that appears in a theory called
\emph{generalized nonsignaling theory} or \emph{boxworld}~\cite{boxworld}.

We assume that the group of reversible transformations $\G_A$ is the group of all symmetries, which is consistent with
the tensor product in boxworld~\cite{boxworld}. Then $\G_A$ is the \emph{dihedral group} $D_4$, containing all
rotations of multiples of $\pi/2$ and reflections through diagonals. This group acts irreducibly on $\hat A=\R^2$. If we represent
$\hat\Omega_A$ as a square with $\hat\mu^A=0$ as the center, then the invariant inner product is given by the usual Euclidean
inner product. That is, the contour lines of constant purity correspond to circles in the state space, see Fig.\ref{fig_gbit}.
\end{example}

\subsection{Comparison to existing entropy measures}
\label{SubsecComparison}
The square state space (the ``gbit''), mentioned in Example~\ref{ExPurityGbit} and depicted in Figure~\ref{fig_gbit}, 
is also a good example to highlight a difference between the generalized purity used here and another possible generalization.

In the quantum case, the standard purity satisfies the equation $\Tr(\rho^2)=2^{-H_2(\rho)}$, where $H_2$ denotes the
R\'enyi entropy of order $2$. There has been some work on notions of entropy in general probabilistic theories~\cite{ShortWehner,Barnum,KimuraEntropy}. One could now imagine to define the purity of some state $\omega$ as $2^{-H_2(\omega)}$.

However, such a definition would have undesirable properties, as can be seen by example of the definitions in~\cite{ShortWehner}. Two possible definitions of $H_2$ are considered there: one possibility is to define the \emph{measurement entropy} $\hat H_2(\omega)$ of some state $\omega$ as the minimum R\'enyi-$2$ entropy of the set of outcome probabilities of any fine-grained measurement on $\omega$. However, using the measurement entropy, the corresponding definition of purity would assign purity $1$ to some highly mixed states in the boundary of the square state space, that is, the same value as for pure states. As an example, consider the fine-grained measurement on the gbit which consists of the two effects $E_1(\omega):=\sqrt{2}\hat\omega_y$ and $E_2(\omega):=1-\sqrt{2}\hat\omega_y$, if $\hat\omega=(\hat \omega_x,\hat\omega_y)$ denotes the Bloch vector corresponding to $\omega$ (that is, the corresponding point in the square). The state $\hat\omega=(0,1/\sqrt{2})$ is a mixed
state in the boundary of the square, but the measurement $(E_1,E_2)$ assigns outcome probabilities $0$ and $1$ to this state.
Hence $2^{-H_2(\omega)}=1$.
This shows that the measurement entropy can be misleading if used as a characterization of the mixedness of a state~\footnote{An interesting question is whether this or another suggested measure could quantify the free energy of a state, a property which has traditionally been used as a justification for choosing one entropy measure over another~\cite{VonNeumann55}.}. 

A second definition is the \emph{decomposition entropy}, $\breve H_2(\omega)$, which is defined as the minimum R\'enyi $2$-entropy of any probability distribution $(\lambda_1,\ldots,\lambda_n)$ with $n\in\N$ and $\sum_i \lambda_i=1$ such that
$\omega=\sum_i \lambda_i \omega_i$, with pure states $\omega_i$. That is, it is the minimal entropy of the coefficients in any
decomposition of $\omega$ into pure states. However, as shown in~\cite[(D8)]{ShortWehner}, there are states $\omega_1,\omega_2$ in the gbit state space with the property that $\breve H_2\left(\frac 1 2 \omega_1+\frac 1 2 \omega_2\right)<\frac 1 2 \breve H_2(\omega_1)+\frac 1 2 \breve H_2(\omega_2)$. According to the corresponding purity definition, the mixture $\frac 1 2 \omega_1+\frac 1 2 \omega_2$ would have higher purity then both $\omega_1$ and $\omega_2$, violating intuition about mixtures being ``at least as mixed'' as their components. In contrast, it follows from property 5.\ in Lemma~\ref{LemPropPur} that our notion of purity $\p$ always satisfies $\p\left(\frac 1 2 \omega_1+\frac 1 2 \omega_2\right)\leq\max\{ \p(\omega_1),\p(\omega_2)\}$.

Another advantage of our definition of purity, as compared to other approaches, is that it satisfies several useful identities arising
from group theory. Thus, it is sometimes possible to calculate its value explicitly on the basis of
simple properties of the state space (such as in Theorem~\ref{TheCompositePurity}), which is important to derive the results of
this paper. This is analogous to the situation in quantum theory, where purity is often used as an easy-to-calculate replacement for von
Neumann entropy.

 \begin{figure}[!hbt]
 \begin{center}
 \includegraphics[angle=0, width=4cm]{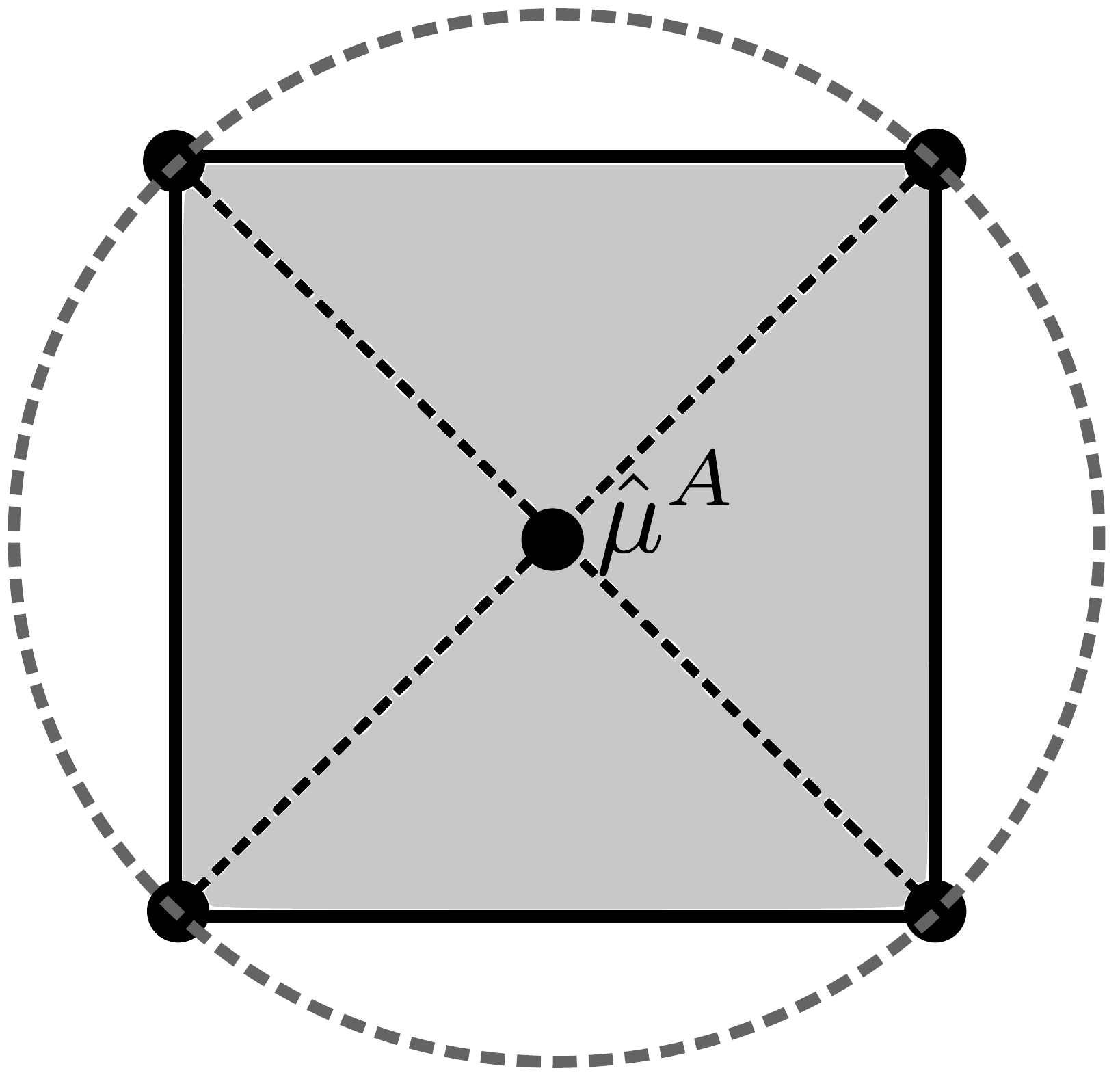}\hskip 2cm
 \includegraphics[angle=0, width=4.3cm]{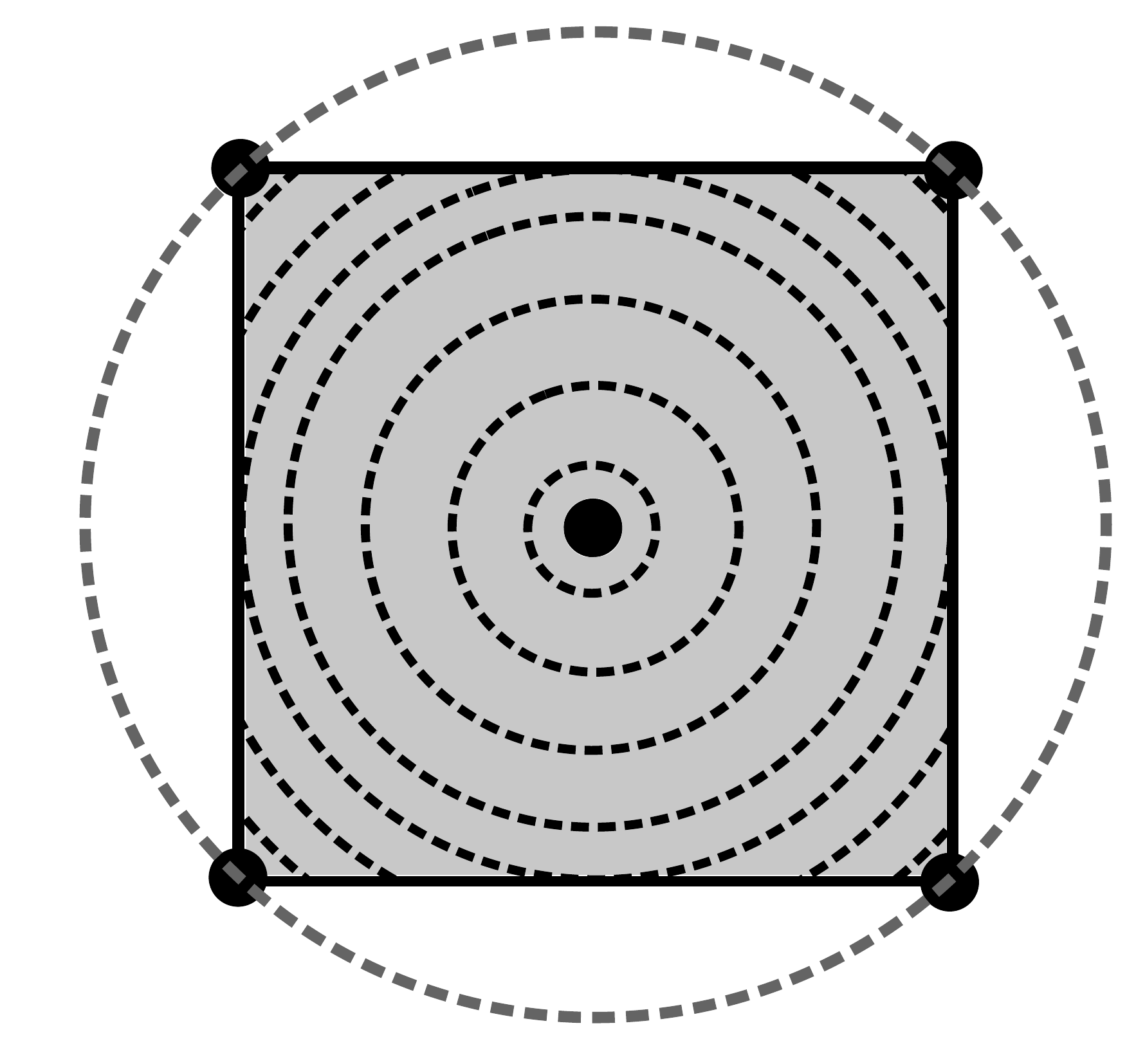}\hskip 2cm
 \includegraphics[angle=0, width=4cm]{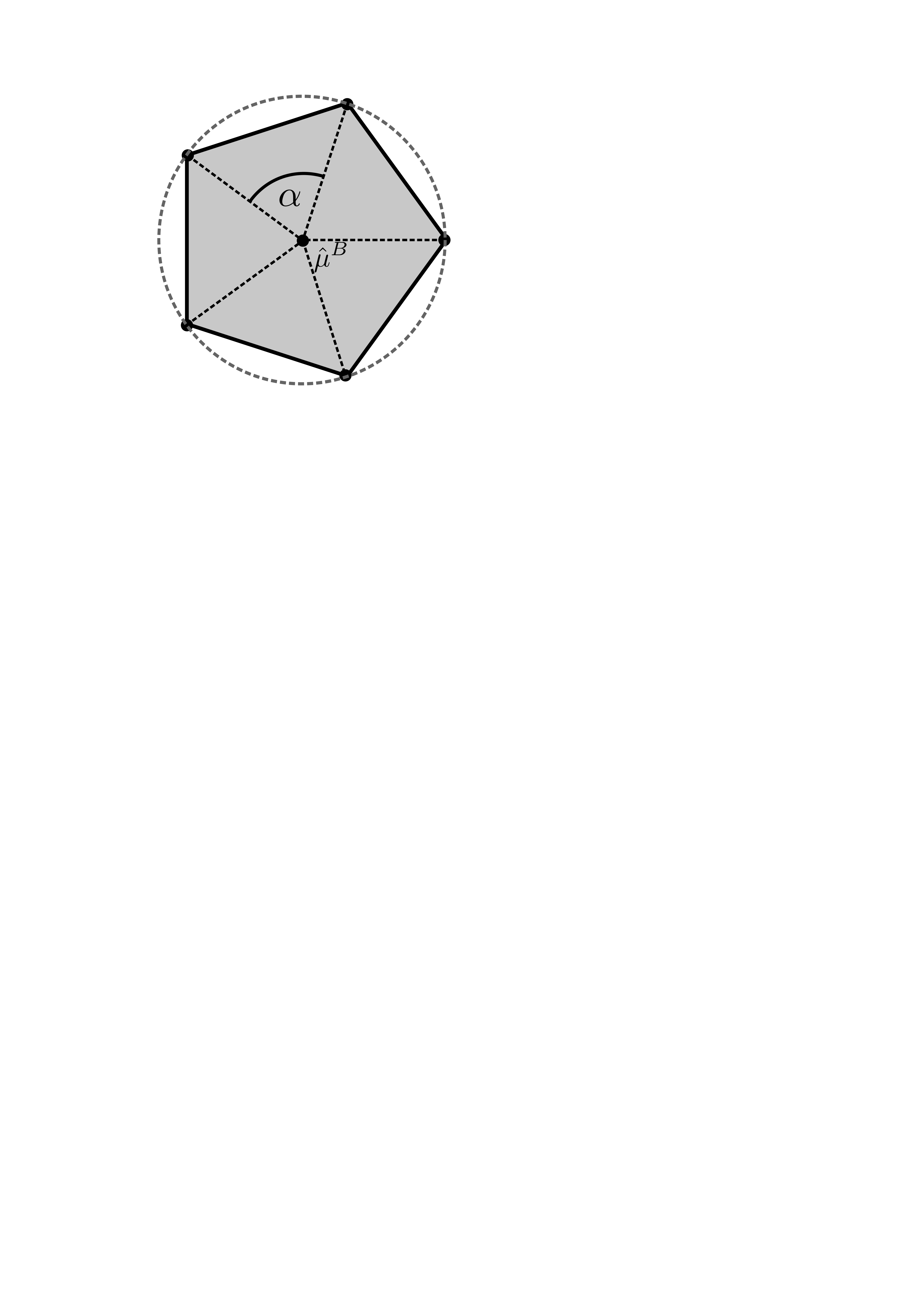}
 \caption{The left and center panels display a ``gbit'' state space $A$, where $\hat\Omega_A$ is a square and the capacity
 is $N_A=2$; shown are the four pure states and the maximally mixed state $\hat\mu^A=0$. The symmetry group is the dihedral group $\G_A=D_4$, and the contour lines of constant purity are circles (cf.\ Example~\ref{ExPurityGbit}). It has a complete set of Paulis, consisting of $2$ maps which is the minimal possible number. The right panel shows a pentagon state
 space $\hat\Omega_B$. The group of symmetries is the dihedral group $\hat\G_B=D_5$, containing rotations of multiples
 of $\alpha=2\pi/5$. This state space has capacity $N_B=2$, a complete set of Paulis consists of $5$ maps, and the maximally
 mixed state cannot be written as a uniform mixture of two perfectly distinguishable pure states, in contrast to the square.}
 \label{fig_gbit}
 \end{center}
 \end{figure}

\subsection{Irreducible subgroups and generalized Paulis}
\label{SubsecPaulis}
In the quantum case, there is a simple formula expressing purity in terms of squared expectation values of Pauli operators.
For a single qubit, denote the $2\times 2$ Pauli matrices by $(X_0,X_1,X_2,X_3):=(\id,X,Y,Z)$, then
$\displaystyle \Tr(\rho^2)=\frac 1 2 \sum_{i=0}^3 \left(\Tr( X_i\rho)\right)^2$. A similar formula holds in the case of
several qubits; we discuss this below.

As it turns out, there is an interesting generalization of these identities to the general probabilistic case. To understand the general case,
it is useful to think of the Pauli operators not as matrices, but as linear maps that assign real numbers to states:
$\rho\mapsto \Tr(X_i\rho)$. These maps have certain properties that correspond to the conditions in the following definition:
\begin{definition}
\label{DefPauli}
Let $A$ be a transitive state space. A linear map $X:A\to\R$ is called a \emph{Pauli map} if
\begin{itemize}
\item $X(\mu^A)=0$ for the maximally mixed state $\mu^A\in\Omega_A$, and
\item $\max\{|X(\hat a)|\,\,|\,\, \hat a \in \hat A,\enspace \langle\hat a,\hat a\rangle\leq 1\}=1$.
\end{itemize}
\end{definition}
If $X$ is a Pauli map, then there exists a vector $\hat X\in \hat A$ such that $X(\hat a)=\langle\hat X,\hat a\rangle$ for all $\hat a \in \hat A$.
Thus, the second condition in Definition~\ref{DefPauli} is equivalent to the condition $\|\hat X\|_2=1$, where
$\|\hat X\|_2=\sqrt{\langle\hat X,\hat X\rangle}$ denotes the norm on $\hat A$ which is derived from the invariant inner product
$\langle\cdot,\cdot\rangle$ on $\hat A$.

In the quantum case of a single qubit, it is easy to see that the maps $\rho\mapsto \Tr(X_i \rho)$ are Pauli maps if $i\in\{1,2,3\}$, but not if $i=0$:
recall the definitions in Example~\ref{ExPurityQuantum}, and let $i\in\{1,2,3\}$, then
\[
   \rho\mapsto \Tr(X_i\rho)=\Tr(X_i\hat\rho)\stackrel ! = \langle\hat X_i,\hat\rho\rangle=2\Tr(\hat X_i\hat\rho),
\]
which proves that this map is represented by the vector (traceless Hermitian matrix) $\hat X_i=\frac 1 2 X_i\in\hat A$, that is,
one half times the corresponding Pauli matrix. Then, we have for the norm on $\hat A$
\[
   \|\hat X_i\|_2^2=\langle\hat X_i,\hat X_i\rangle=2\Tr(\hat X_i^2)=2\Tr\left(\frac 1 4 \cdot\id\right)=1
\]
which proves that the corresponding maps are Pauli maps in the sense of Definition~\ref{DefPauli}.

Pauli maps are related to purity by the following lemma:
\begin{lemma}
\label{LemFormulaPauliPurity}
Let $A$ be an irreducible transitive dynamical state space of dimension $K_A$, and let $\mathcal{H}\subseteq \G_A$
be a compact subgroup which acts irreducibly on $\hat A$ (for example, $\mathcal{H}=\G_A$). Then, if $X$ is any Pauli map on $A$,
\[
   \int_{H\in\mathcal{H}}\left(X\circ H(\omega)\right)^2\, dH=\frac{\p(\omega)}{K_A-1}\qquad\mbox{ for all states }\omega\in\Omega_A.
\]
\end{lemma}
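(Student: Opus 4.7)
\medskip

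\noindent\textbf{Proof plan.} The plan is to reduce the integral to a quadratic form involving a single symmetric operator on $\hat A$, and then use Schur's lemma to determine that operator explicitly.

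First, I would translate everything into Bloch-vector language. Since $X$ is a Pauli map, there is a unique vector $\hat X \in \hat A$ with $\|\hat X\|_2 = 1$ such that $X(a) = \langle \hat X, \hat a\rangle$ for every $a \in A$ with $u^A(a)=1$. Because each $H \in \mathcal{H} \subseteq \G_A$ fixes the maximally mixed state and commutes with the $\mu^A$-subtraction (i.e.\ $\widehat{H(\omega)} = H\hat\omega$), and because $X(\mu^A)=0$, we have
\[
X(H(\omega)) \;=\; \langle \hat X, H\hat\omega\rangle \qquad \text{for every } H \in \mathcal{H}.
\]
Squaring and rewriting the squared inner product as $\langle \hat X, H\hat\omega\rangle^2 = \langle \hat X,\, (H\hat\omega)(H\hat\omega)^{T} \hat X\rangle$, where $v v^T$ denotes the rank-one self-adjoint operator $u \mapsto \langle v, u\rangle v$ on $\hat A$, the integral becomes
\[
\int_{\mathcal{H}} (X \circ H(\omega))^{2}\, dH \;=\; \langle \hat X,\, M\, \hat X\rangle, \qquad M \;:=\; \int_{\mathcal{H}} (H\hat\omega)(H\hat\omega)^{T}\, dH.
\]

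Next I would identify $M$. Using left-invariance of the Haar measure on $\mathcal{H}$ and the orthogonality of each $H' \in \mathcal{H}$ with respect to $\langle\cdot,\cdot\rangle$ (which holds because $\mathcal{H} \subseteq \G_A$ and the inner product on $\hat A$ was defined to be invariant under all of $\G_A$), one checks directly that $H' M (H')^{-1} = M$ for every $H' \in \mathcal{H}$. Thus $M$ is a self-adjoint operator on $\hat A$ that commutes with the action of $\mathcal{H}$. By hypothesis $\mathcal{H}$ acts irreducibly on $\hat A$, so each eigenspace of $M$ is an $\mathcal{H}$-invariant subspace and must therefore be either $\{0\}$ or all of $\hat A$; this forces $M = c\, \mathrm{id}_{\hat A}$ for some scalar $c \geq 0$ (the real-Schur step).

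Finally I would pin down $c$ by taking the trace: on one hand $\operatorname{tr}(M) = c\,(K_A - 1)$ since $\dim \hat A = K_A - 1$; on the other hand, using $\operatorname{tr}((H\hat\omega)(H\hat\omega)^{T}) = \|H\hat\omega\|_2^{2} = \|\hat\omega\|_2^{2} = \p(\omega)$ (again orthogonality of $H$), Fubini gives $\operatorname{tr}(M) = \p(\omega)$. Hence $c = \p(\omega)/(K_A - 1)$, and combining with $\|\hat X\|_2 = 1$ yields
\[
\int_{\mathcal{H}} (X \circ H(\omega))^{2}\, dH \;=\; c\,\|\hat X\|_2^{2} \;=\; \frac{\p(\omega)}{K_A - 1}.
\]
The only non-routine step is the invocation of the real Schur argument for the symmetric operator $M$; everything else is bookkeeping. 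I expect no substantive obstacle, since irreducibility of $\mathcal{H}$ on $\hat A$ is an explicit hypothesis and the group acts orthogonally by construction of $\langle\cdot,\cdot\rangle$.
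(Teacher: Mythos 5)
Your proof is correct and follows essentially the same route as the paper: both average the rank-one operator $\hat\omega\hat\omega^{T}$ over the group, invoke Schur's lemma (in its real form for the self-adjoint commutant) to conclude the average is a multiple of the identity on $\hat A$, and fix the constant by a trace computation. The only cosmetic difference is that you write $(H\hat\omega)(H\hat\omega)^{T}$ where the paper writes $H\ketbra{\hat\omega}{\hat\omega}H^{-1}$; these coincide since $H$ is orthogonal.
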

\proof
If $M\geq 0$ is any positive matrix on $\hat A$, then $I:=\int_{H\in\mathcal{H}} H M H^{-1}\, dH\geq 0$ satisfies
$[I, H]=0$ for all $H\in\mathcal{H}$. Thus, by Schur's Lemma, we have $I=c\cdot\Id_{\hat A}$ for some $c\in\R$.
By taking the trace of both sides, we see that $c=\Tr M/(K_A-1)$.
Since $\left(X\circ H(\omega)\right)^2 = \langle\hat X | H |\hat\omega\rangle\langle\hat\omega |H^{-1}|\hat X\rangle$, we get
\begin{eqnarray*}
   \int_{H\in\mathcal{H}} \left(X\circ H(\omega)\right)^2\, dH&=&
   \langle\hat X|\left( \int_{H\in\mathcal{H}}  H |\hat\omega\rangle\langle\hat\omega| H^{-1}\, dH\right)|\hat X\rangle
   =\langle\hat X|\frac{\Tr |\hat\omega\rangle\langle\hat\omega|}{K_A-1}\Id_{\hat A}|\hat X\rangle
   =\frac{\langle\hat\omega|\hat\omega\rangle\langle\hat X|\hat X\rangle}{K_A-1}.
\end{eqnarray*}
Note that in the quantum case, the vectors themselves are \emph{operators}, hence the matrices and linear maps
appearing in this calculation are \emph{superoperators}.
\qed

The result becomes particularly interesting if the subgroup $\mathcal{H}\subseteq\G_A$ is finite: it will finally give the natural
analog of Pauli matrices in more general theories.
\begin{corollary}
\label{CorPauli}
Let $A$ be an irreducible transitive dynamical state space of dimension $K_A$, and let $\mathcal{H}\subseteq \G_A$
be a \emph{finite} subgroup which acts irreducibly on $\hat A$. Fix any Pauli map $X_1$ on $A$, and let $\mathcal{X}$ be
the orbit of $\mathcal{H}$ on $X_1$, disregarding the sign of each map. That is, $\mathcal{X}:=\{X_1\circ H\,\,|\,\, H\in\mathcal{H}\}/\{+1,-1\}$.
Then
\[
   \frac 1 {|\mathcal{X}|} \sum_{X\in\mathcal{X}}\left(X(\omega)\right)^2 = \frac{\p(\omega)}{K_A-1},
\]
and we call $\mathcal{X}$ a \emph{complete set of Paulis} for $A$. Note that $X(\omega)=\langle\hat X,\hat \omega\rangle$.
\end{corollary}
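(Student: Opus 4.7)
The strategy is to apply Lemma~\ref{LemFormulaPauliPurity} with the compact group $\mathcal{H}$, use the fact that the Haar measure on a finite group is the normalized counting measure, and then regroup the resulting sum according to the orbit $\mathcal{X}$, handling the sign quotient by a routine orbit--stabilizer argument.

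First, I would observe that for the finite group $\mathcal{H}$ the Haar integral collapses to a normalized sum, so Lemma~\ref{LemFormulaPauliPurity} directly gives
\[
   \frac{1}{|\mathcal{H}|}\sum_{H\in\mathcal{H}}\bigl(X_1\circ H(\omega)\bigr)^2 \;=\; \frac{\p(\omega)}{K_A-1}.
\]
Note that each $X_1\circ H$ is indeed a Pauli map on $A$: it vanishes on $\mu^A$ because $H\mu^A=\mu^A$ and $X_1(\mu^A)=0$, and its Bloch-space representative has unit norm because $H$ is orthogonal with respect to $\langle\cdot,\cdot\rangle$.

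Next, I would regroup the terms according to the orbit $\mathcal{X}$. Consider the surjection $f:\mathcal{H}\to\mathcal{X}$ defined by $f(H):=[X_1\circ H]$, where $[\cdot]$ denotes the equivalence class modulo sign. The set $S:=\{H\in\mathcal{H}\,:\,X_1\circ H=\pm X_1\}$ is a subgroup of $\mathcal{H}$ (the composition of two sign-preserving maps is sign-preserving), and a standard coset argument shows that every fiber of $f$ is a left coset of $S$, hence has the same cardinality $|S|=|\mathcal{H}|/|\mathcal{X}|$. Since $(X(\omega))^2$ depends only on the class $[X]\in\mathcal{X}$, the sum regroups as
\[
   \frac{1}{|\mathcal{H}|}\sum_{H\in\mathcal{H}}\bigl(X_1\circ H(\omega)\bigr)^2 \;=\; \frac{1}{|\mathcal{H}|}\cdot\frac{|\mathcal{H}|}{|\mathcal{X}|}\sum_{X\in\mathcal{X}}\bigl(X(\omega)\bigr)^2 \;=\; \frac{1}{|\mathcal{X}|}\sum_{X\in\mathcal{X}}\bigl(X(\omega)\bigr)^2.
\]
Combining the two displayed identities yields the claimed formula. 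The concluding remark $X(\omega)=\langle\hat X,\hat\omega\rangle$ is nothing but the Riesz representation of $X$ on $\hat A$, which was already recorded right after Definition~\ref{DefPauli}.

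There is essentially no obstacle here: the corollary is a direct bookkeeping consequence of Lemma~\ref{LemFormulaPauliPurity} together with elementary finite-group considerations. The only step worth a moment's care is checking that $S$ is genuinely a subgroup and that all fibers of $f$ have the common size $|\mathcal{H}|/|\mathcal{X}|$, both of which are routine orbit--stabilizer facts once one notes that composition respects the $\pm 1$ identification.
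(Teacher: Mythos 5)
Your proposal is correct and matches the paper's (implicit) argument: the paper presents this as an immediate corollary of Lemma~\ref{LemFormulaPauliPurity}, obtained by replacing the Haar integral over the finite group $\mathcal{H}$ with the normalized counting sum and regrouping by the sign-quotiented orbit, exactly as you do. Your filling-in of the orbit--stabilizer bookkeeping (that $S=\{H:X_1\circ H=\pm X_1\}$ is a subgroup and all fibers share the cardinality $|\mathcal{H}|/|\mathcal{X}|$) is sound, modulo the immaterial left-versus-right coset convention.
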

For a classical probability distribution $p=(p_1,\ldots,p_n)$, the expression $\sum_{i=1}^n p_i^2$
is sometimes called the \emph{collision probability}. It is directly related to our notion of purity $\p(p)$ by eq.~(\ref{eqPurityClassical}),
and can be interpreted as the probability that two identically prepared copies of $p$ give the same outcome, if the random
variable $i\in\{1,\ldots,n\}$ is measured. A similar interpretation exists in the quantum case: we can ask for the probability of getting the same outcome,
if we measure two copies of a state $\rho$ in a fixed basis. Maximizing this collision probability over all bases yields $\Tr(\rho^2)$,
with the maximum being attained in the eigenbasis.

The following lemma generalizes this observation to other probabilistic theories.
\begin{lemma}[Operational Interpretation of Purity]
Any Pauli map $X$ can be interpreted as a measurement, giving outcomes $\pm 1$ on
a state $\omega$ with probabilities $(1\pm X(\omega))/2$. The corresponding expectation value is exactly $X(\omega)$.

Denote by $\mathbb{P}_\omega^c(X)$ the probability that two successive measurements of $X$ on two identically prepared
copies of $\omega$ give the same outcome (``$c$'' is for ``collision probability''). Then it turns out that
\[
   \max_{X\mbox{ Pauli map}} \mathbb{P}_\omega^c(X)=\frac 1 2 \left(\strut 1+\p(\omega)\right).
\]
\end{lemma}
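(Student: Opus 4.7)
The plan is to reduce the maximization over Pauli maps to a single variational problem on the invariant inner product and then solve it by Cauchy--Schwarz. First I would unpack the collision probability: since the two outcomes occur with probabilities $p_\pm = (1\pm X(\omega))/2$, two independent measurements on identically prepared copies of $\omega$ agree with probability
\[
\mathbb{P}_\omega^c(X) \;=\; p_+^2 + p_-^2 \;=\; \frac{1+X(\omega)^2}{2}.
\]
So maximizing $\mathbb{P}_\omega^c(X)$ over Pauli maps is exactly the same as maximizing $X(\omega)^2$.

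Next I would use the representation noted just after Definition~\ref{DefPauli}: every Pauli map corresponds to a vector $\hat X\in\hat A$ with $\|\hat X\|_2=1$, and the condition $X(\mu^A)=0$ together with linearity yields $X(\omega)=\langle\hat X,\hat\omega\rangle$ for every state $\omega\in\Omega_A$. Applying the Cauchy--Schwarz inequality for the invariant inner product then gives
\[
|X(\omega)| \;\leq\; \|\hat X\|_2\,\|\hat\omega\|_2 \;=\; \sqrt{\p(\omega)},
\]
and hence $\mathbb{P}_\omega^c(X)\leq(1+\p(\omega))/2$ for every Pauli map $X$.

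Finally I would exhibit a Pauli map that saturates this bound. For $\omega\neq\mu^A$ the natural candidate is $\hat X:=\hat\omega/\|\hat\omega\|_2$, which has unit norm in $\hat A$, so the associated linear functional $X(a):=\langle\hat X,\hat a\rangle$ satisfies both conditions of Definition~\ref{DefPauli}; substituting yields $X(\omega)=\|\hat\omega\|_2=\sqrt{\p(\omega)}$, attaining the upper bound. The case $\omega=\mu^A$ is trivial, since then $\hat\omega=0$, $\p(\omega)=0$, and every Pauli map trivially yields $\mathbb{P}_\omega^c(X)=1/2$. I do not anticipate a genuine obstacle; the only point worth a sanity check is that $(1\pm X(\omega))/2$ really lies in $[0,1]$ so that the ``measurement'' interpretation is legitimate, but this follows from $|X(\omega)|\leq\sqrt{\p(\omega)}\leq 1$, where the last inequality is property~1 of Lemma~\ref{LemPropPur}.
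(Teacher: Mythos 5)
Your proposal is correct and follows essentially the same route as the paper's proof: compute the collision probability as $\tfrac12(1+X(\omega)^2)$, bound $X(\omega)^2=\langle\hat X,\hat\omega\rangle^2$ by Cauchy--Schwarz with respect to the invariant inner product, and saturate with $\hat X=\hat\omega/\|\hat\omega\|_2$. Your explicit treatment of the degenerate case $\omega=\mu^A$ and the check that the outcome probabilities lie in $[0,1]$ are small additions the paper omits, but they change nothing substantive.
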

\proof
We use the Cauchy-Schwarz inequality
\[
   \mathbb{P}_\omega^c(X)=\left(\frac{1+X(\omega)} 2 \right)^2 +\left(\frac{1-X(\omega)} 2 \right)^2 =\frac 1 2 \left(1+X(\omega)^2\right)
   =\frac 1 2 +\frac 1 2 \langle\hat X,\hat\omega\rangle^2\leq \frac 1 2 +\frac 1 2 \|\hat X\|_2^2\cdot \|\hat\omega\|_2^2
   =\frac 1 2 \left(\strut 1+\p(\omega)\right).
\]
This upper-bound is attained on the Pauli map corresponding to $\hat X:=\hat \omega/\|\hat\omega\|_2$.
\qed

In quantum theory on $k$ qubits, our notion of a ``complete set of Paulis'' reduces to the usual Pauli operators:
\begin{example}[Paulis on $k$ Qubits in Quantum Theory]
\label{ExQuantumPauli}
Recall the quantum situation described in Example~\ref{ExPurityQuantum}, but now on $n=2^k$-dimensional Hilbert space,
i.e.\ $A$ is the quantum state space of $k$ qubits. A particular finite subgroup of $\G_A$ is given by the \emph{Clifford group}~\cite{Gottesman98}
\[
   C_k:=\{U\in U(2^k)\,\,|\,\, U P U^\dagger\in P_k\mbox{ for all }P\in P_k\},
\]
where $P_k$ is the \emph{Pauli group} on $k$ qubits, i.e.\ $P_k=\{\pm\sigma_1\otimes\ldots\otimes\sigma_k\,\,|\,\,\sigma_i
\in\{\id,X,Y,Z\}\}$. This group acts irreducibly by conjugation on $\hat A$, the set of traceless Hermitian matrices on $(\C^2)^{\otimes k}$;
we show this in Lemma~\ref{LemCliffordIrrep} in Appendix~\ref{AppIrr}. Consider $X^{\otimes k}$, the $k$-fold tensor product of the Pauli matrix $X$.
We would like to find a constant $c>0$ such that $X_1(\rho):=c\cdot\Tr(X^{\otimes k}\rho)$ becomes a Pauli map.
First, we calculate the vector (i.e.\ matrix) $\hat X_1$ which describes $X_1$:
\[
   X_1(\rho)=c\cdot\Tr(X^{\otimes k}\rho)=c\cdot\Tr(X^{\otimes k}\hat \rho)\stackrel ! = \langle\hat X_1,\hat \rho\rangle
   =\frac{2^k}{2^k-1} \Tr(\hat X_1\hat \rho),
\]
and we see that $\hat X_1=c(2^k-1)2^{-k}X^{\otimes k}$. The constant $c$ is determined by normalization:
\[
   1\stackrel ! = \langle\hat X_1,\hat X_1\rangle=\frac{2^k}{2^k-1}\Tr\left(\hat X_1^2\right)=\frac{2^k}{2^k-1}\cdot
   \frac{c^2(2^k-1)^2}{2^{2k}} \Tr\left((X^{\otimes k})^2\right)=c^2(2^k-1),
\]
hence $c=1/\sqrt{2^k-1}$. Now if $H=U\cdot U^\dagger$ with $U\in C_k$, then
\[
   (X_1\circ H)(\rho)=\frac 1 {\sqrt{2^k-1}}\cdot \Tr(X^{\otimes k}H(\rho))=\frac 1 {\sqrt{2^k-1}}\cdot \Tr(X^{\otimes k}U\rho U^\dagger)
   =\frac 1 {\sqrt{2^k-1}}\cdot \Tr(U^\dagger X^{\otimes k} U \rho)
\]
By choosing appropriate elements $H=U^\dagger\cdot U$ of the Clifford group, the matrix $X^{\otimes k}\in P_n$ is mapped
to every other element of $P_n$ except the identity. Ignoring the sign as suggested in Corollary~\ref{CorPauli},
we get the orbit
\[
   \mathcal{X}=\left\{ \rho\mapsto \frac 1 {\sqrt{2^k-1}}\Tr(\sigma_1\otimes\ldots\otimes\sigma_k\,\rho)\,\,\left|
   \,\, \sigma_i\in\{\id,X,Y,Z\},\mbox{ not all }\sigma_i=\id\right.\right\}.
\]
This is a ``complete set of Paulis'' according to Corollary~\ref{CorPauli}: these are the (maps corresponding to)
the usual Pauli matrices.
Therefore, purity can be expressed as
\[
   \frac 1 {4^k-1}\sum_{(\sigma_1,\ldots,\sigma_k)\neq(\id,\ldots,\id)} \frac{\left(\Tr(\sigma_1\otimes\ldots\otimes\sigma_k \rho)\right)^2}
   {2^k-1}=\frac{\p(\rho)}{4^k-1}.
\]
The standard result $\displaystyle \Tr(\rho^2)=2^{-k}\sum_{\sigma_1,\ldots,\sigma_k} \Tr(\sigma_1\otimes
\ldots\otimes\sigma_k\rho)^2$ follows from some further simplification.
\end{example}

For a classical $n$-level system introduced in Example~\ref{ExPurityClassical}, a complete set of Paulis consists of $n$ maps that
basically read out a probability vector's components:
\begin{example}[Paulis in Classical Probability Theory]
With the notation of Example~\ref{ExPurityClassical}, let $X_1:B\to\R$ be the map
\[
   X_1(p):=p_1 - \frac 1 {n-1}\sum_{i=2}^n p_i.
\]
For the maximally mixed state $\mu^B=\left( \frac 1 n,\ldots,\frac 1 n \right)$, we have $X_1(\mu^B)=0$. It is easy to check
that $X_1(\hat p)=\langle \hat X_1,\hat p\rangle$ for the invariant inner product on $\hat B$ if
$\displaystyle \hat X_1=\left(\frac{n-1} n , -\frac 1 n,\ldots,-\frac 1 n\right)$. Moreover, we have $\langle\hat X_1,\hat X_1\rangle=1$,
hence $X_1$ is a Pauli map according to Definition~\ref{DefPauli}. Now let $\mathcal{H}=\G_B=S_n$ be the full permutation group.
If $\sigma\in\mathcal{H}$ is any permutation with, say, $\sigma(i)=1$, then
$X_i(p):=\displaystyle X_1\circ \sigma(p)=p_i - \frac 1 {n-1}\sum_{j\neq i} p_j$ (for normalized probability vectors $p\in\Omega_B$,
this is just $X_i(p)=\frac n {n-1} p_i - \frac 1 {n-1}$).
Thus, a complete set of Paulis $\mathcal{X}$ is given by the set of maps
\[
   \mathcal{X}=\left\{\left.
      p\mapsto p_i - \frac 1 {n-1}\sum_{j\neq i}p_j\,\,\right|\,\,
      i\in\{1,\ldots,n\}
   \right\}.
\]
Then the formula from Corollary~\ref{CorPauli} reproduces eq.~(\ref{eqPurityClassical}).
\end{example}
\begin{example}[Paulis for Polygonal State Spaces]
Consider state spaces which are regular polygons; that is, $\hat\Omega$ is a regular $n$-gon inscribed in the unit circle
as in Figure~\ref{fig_gbit}. Then complete sets of Paulis (in the sense of Corollary~\ref{CorPauli}) look very differently, depending
on properties of the symmetry group $D_n$, the dihedral group. We illustrate this for the cases $n=4$ and $n=5$.

Let $\hat X_1:=(1,0)$, such that the corresponding Pauli map acts on the Bloch space $\R^2$ via $X_1(\hat \omega)=
\langle \hat X_1,\hat\omega\rangle=\omega_1$,
if $\hat \omega=(\omega_1,\omega_2)$. First, consider a ``gbit'' system $A$ where the Bloch representation of state space, $\hat \Omega_A$,
is a square, inscribed into a unit circle as in Figure~\ref{fig_gbit}.
The symmetry group is the dihedral group $D_4$; its orbit on $X_1$ consists of the maps
\[
   \{X_1\circ G\,\,|\,\, G\in D_4\}=\{(\omega_1,\omega_2)\mapsto \omega_1, (\omega_1,\omega_2)\mapsto -\omega_1,
   (\omega_1,\omega_2)\mapsto \omega_2, (\omega_1,\omega_2)\mapsto -\omega_2
   \}.
\]
Disregarding the sign, we get a complete set of Paulis in the sense of Corollary~\ref{CorPauli}, which is $\mathcal{X}=\{ X_1, X_2\}$,
where $X_1(\hat\omega)=\omega_1$ and $X_2(\omega)=\omega_2$. Since $K_A=3$, the formula from Corollary~\ref{CorPauli} becomes
\begin{equation}
   \frac 1 2 (\omega_1^2+\omega_2^2)=\frac{\p(\omega)} 2 \mbox{ for all }\omega\in\Omega_A,
   \label{eqGeometricPurity}
\end{equation}
which just expresses the fact that the purity equals the squared Euclidean length of the Bloch vector.

Now consider the case $n=5$, that is, a state space $B$ where $\hat\Omega_B$ is a regular pentagon. A smallest irreducible
subgroup is given by $\mathcal{H}:=\left\{R(2 k\pi/5)\,\,|\,\, k=0,1,\ldots,4\right\}$ (denoting its action on Bloch vectors), where $R(\alpha)$
denotes rotation by angle $\alpha$ in $\R^2$. In general, $(X_1\circ R(2 k\pi/5)(\hat \omega))^2$ gives different values for
all $k$, which means that a complete set of Paulis necessarily contains all the five maps. Thus, all we get is
\begin{equation}
   \frac 1 5 \sum_{k=0}^4 \left(\strut X_1(R(2k\pi/5)\hat \omega\right)^2=\frac{\p(\omega)}2 \mbox{ for all }\omega\in\Omega_B.
   \label{eqFive}
\end{equation}
In a sense, this is ``inefficient'': in order to compute $\p(\omega)=\|\hat\omega\|^2$, we could as well use eq.~(\ref{eqGeometricPurity}),
which involves only two addends instead of five. However, the advantage of~(\ref{eqFive}) compared to~(\ref{eqGeometricPurity}) is
that all involved maps $X_k:=X_1\circ R(2k\pi/5)$ are ``equivalent'' for the state space $B$: they are all connected by reversible
transformations. In other words, in order to build a device that measures $X_k$, it is sufficient to have a device measuring $X_1$.
All other measurements can then be accomplished by composing $X_1$ with a reversible transformation $T_k:=R(2k\pi/5)$,
as sketched in Figure~\ref{fig_detector}. Within state space $B$, this is not possible for the two maps $X_1(\omega):=\omega_1$ and
$X_2(\omega):=\omega_2$ that appear in eq.~(\ref{eqGeometricPurity}).
\end{example}
 \begin{figure}[!hbt]
 \begin{center}
 \includegraphics[angle=0, width=8cm]{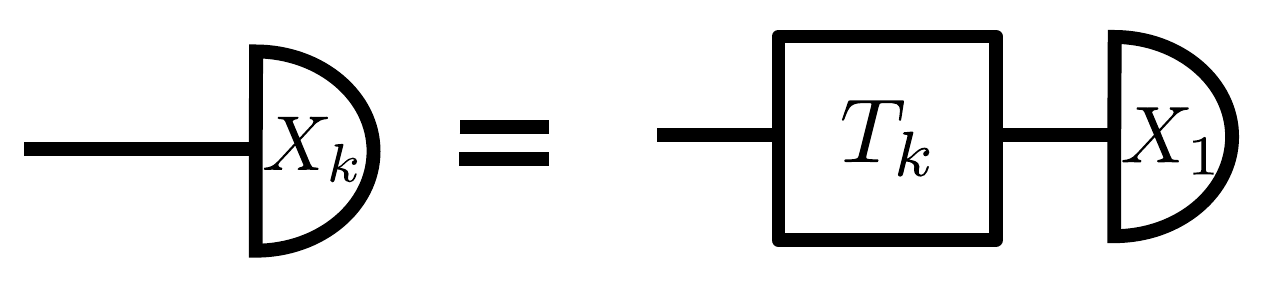}
 \caption{All elements $X_k$ of a complete set of Paulis (in the sense of Corollary~\ref{CorPauli}) on a general state space are connected
 by reversible transformations. That is, every $X_k$ can be measured
 by first applying a reversible transformation $T_k$, and then measuring a fixed Pauli $X_1$.}
 \label{fig_detector}
 \end{center}
 \end{figure}

\subsection{Bipartite systems: local purity as an entanglement measure}
\label{SubsecBipartite}
The main goal of this paper is to investigate typical states on composite state spaces. When we have state spaces $A$
and $B$, there are in general many different possible ways to combine them into a joint state space $AB$.
However, there is a minimal set of assumptions that necessarily must hold in order to interpret $AB$ as a ``joint state
space'' in a physically meaningful way. The most important assumption is \emph{no-signalling}: measurements
on one subsystem do not affect the outcome probabilities on other subsystems. In this section, we make
an additional simplifying assumption which is often (but not always) imposed in the framework of general probabilistic theories:
that of \emph{local tomography}. However, we will later drop this assumption in Subsection~\ref{SubsecNotLocallyTomo}.

{\bf Assumption: Local tomography.} If $A$ and $B$ are state spaces, then the joint state space $AB$ has the property
that \emph{states $\omega^{AB}\in\Omega_{AB}$ are uniquely characterized by the outcome probabilities of the
local measurements on $A$ and $B$.}

From a physics point of view, this assumption means that the content of bipartite states consists of the \emph{correlations}
of outcome probabilities of local measurements. This is equivalent to the multiplicativity of the state space dimension:
$K_{AB}=K_A K_B$. It can be shown~\cite{Barrett07} that this assumption implies the tensor product formalism:
The linear space which carries the global unnormalized states is the algebraic tensor product of the local spaces:
$AB=A\otimes B$. We have the notion of product states $\omega^A\otimes\omega^B\in\Omega_{AB}$ for
states $\omega^A\in\Omega_A$, $\omega^B\in\Omega_B$, and similarly for effects, with the same interpretation as
in the quantum case. In particular, the unit effect on $AB$ is $u^{AB}=u^A\otimes u^B$. For global states $\omega^{AB}\in\Omega_{AB}$,
we can define the \emph{reduced state} $\omega^A\in\Omega_A$ by $L^A(\omega^A):=L^A\otimes u^B(\omega^{AB})$ for all
linear functionals $L^A$ on $A$ (in particular, for all effects).

In accordance with~\cite{Barrett07}, we give a list of additional assumptions that naturally follow from the physical interpretation
of a composite state space. First, if $\omega^A\in\Omega_A$ and $\omega^B\in\Omega_B$, then we assume that
$\omega^A\otimes\omega^B\in\Omega_{AB}$. That is, we assume that it is possible to prepare states independently
on $A$ and $B$. Second, if $\omega^{AB}\in\Omega_{AB}$ is any global state, we assume that the local reduced states
are valid states on $A$ and $B$: $\omega^A\in \Omega_A$ and $\omega^B\in\Omega_B$. Since this work is on
\emph{dynamical} state spaces, we also postulate that reversible transformations can always be applied locally. That is,
$\G_A\otimes\G_B\subseteq \G_{AB}$.

Similarly as in the quantum case, a global state $\omega^{AB}\in\Omega_{AB}$ will be called \emph{entangled} if it
cannot be written as a convex combination of product states. Now suppose $\omega^{AB}$ is pure.
\begin{itemize}
\item If the purity of the local reduced state is one, i.e.\ $\p(\omega^A)=1$, then $\omega^A$ must be pure. From this,
it follows~\cite{HardyFoliable} that $\omega^{AB}=\omega^A\otimes\omega^B$ -- that is, the global state is unentangled.
\item On the other hand, if $\p(\omega^A)<1$, then $\omega^{AB}$ cannot be written as a product $\omega^A\otimes\omega^B$,
since $\omega^A$ would necessarily have to be pure. Thus, $\omega^{AB}$ is entangled.
\end{itemize}
That is, the local purity $\p(\omega^A)$ can be understood as an entanglement measure:
the smaller $\p(\omega^A)$, the ``more entangled'' $\omega^{AB}$. If $\p(\omega^A)=0$, or equivalently
$\omega^A=\mu^A$, we may call $\omega^{AB}$ \emph{maximally entangled}. It turns out that a PR box is
an example of a maximally entangled post-quantum state in this sense~\cite{boxworld}.

It is natural to ask for the \emph{typical entanglement} of random pure states on a composite state space $AB$.
As discussed in the context of Definition~\ref{DefTransitivity} above, in order for this notion to make sense, we
need the property of \emph{transitivity}: for every pair of pure states $\alpha,\omega\in\Omega_{AB}$,
there must be a reversible transformation $T\in\G_{AB}$ such that $T\alpha=\omega$. It is important to note that
transitivity of the local state spaces $A$ and $B$ does \emph{not} imply transitivity of the joint state space $AB$.
A simple example is given by a state space called ``boxworld''~\cite{boxworld}: suppose that $A$ and $B$
are both square state spaces as in the left of Figure~\ref{fig_gbit}, and $AB$ is the state space which contains all
no-signalling behaviours (including, for example, PR-box states).
That is, $\Omega_{AB}$ is assumed to be the largest possible subset of $AB$ that is
consistent with the assumptions mentioned above ($\Omega_{AB}$ is sometimes called the ``no-signalling polytope'',
or the ``maximal tensor product'' of the local state spaces). Then it turns out that the global state space is not transitive:
for example, no reversible transformation takes a pure product state to a pure PR-box state.

Thus, in the following, we will only consider composite state spaces $AB$ that are themselves transitive. As a first observation,
it turns out that the maximally mixed state on $AB$ is the product of the maximally mixed states of $A$ and $B$.
The proof is given in~\cite{MasanesMueller}.
\begin{lemma}
If $A$, $B$, and $AB$ are transitive dynamical state spaces, then $\mu^{AB}=\mu^A\otimes \mu^B$.
\end{lemma}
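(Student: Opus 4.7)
The plan is to exploit $\G_A\otimes\G_B\subseteq\G_{AB}$: invariance of $\mu^{AB}$ under the local subgroup alone already pins down its product form. First I would show that for any transitive state space $A$ one has $A^{\G_A}=\R\mu^A$, equivalently $\hat A^{\G_A}=\{0\}$. From Definition~\ref{DefMaxMix} and linearity of the Haar integral, $\int_{\G_A} G\omega\,dG=\mu^A$ for every $\omega\in\Omega_A$. Subtracting $\mu^A$ gives $\int_{\G_A} G\hat\omega\,dG=0$ for all $\hat\omega\in\hat\Omega_A$; since normalized states affinely span the hyperplane $\{u^A=1\}$, their Bloch vectors span $\hat A$ linearly, so the Haar projection $v\mapsto\int_{\G_A} Gv\,dG$ vanishes on all of $\hat A$. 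Any $\G_A$-fixed $\hat a\in\hat A$ therefore satisfies $\hat a=\int_{\G_A} G\hat a\,dG=0$, and in particular the only $\G_A$-invariant state is $\mu^A$.

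The second step is a tensor-product lemma: every element of $A\otimes B$ invariant under $\G_A\otimes\{\id_B\}$ lies in $\mu^A\otimes B$, because
\[
(A\otimes B)^{\G_A\otimes\{\id_B\}}=A^{\G_A}\otimes B=\R\mu^A\otimes B
\]
by the first step.

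Combining these, since $\G_A\otimes\G_B\subseteq\G_{AB}$, the state $\mu^{AB}$ is in particular $\G_A\otimes\{\id_B\}$-invariant, so $\mu^{AB}=\mu^A\otimes\beta^B$ for some $\beta^B\in B$. Invariance of $\mu^{AB}$ under $\{\id_A\}\otimes\G_B$ then yields $G_B\beta^B=\beta^B$ for every $G_B\in\G_B$. The composite-system assumptions in Subsection~\ref{SubsecBipartite}, together with $u^A(\mu^A)=1$, ensure that the $B$-marginal $(\mu^{AB})^B=\beta^B$ is a valid state in $\Omega_B$; applying the first step to $B$ then forces $\beta^B=\mu^B$, whence $\mu^{AB}=\mu^A\otimes\mu^B$.

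The main obstacle is the first step: transitivity alone (without irreducibility) must force $\hat A^{\G_A}=\{0\}$. The cylinder state space of Figure~\ref{fig_cylinder} is transitive but not irreducible, so Schur's lemma is unavailable; the Bloch-decomposition spanning argument for $\hat\Omega_A$ inside $\hat A$ is the essential ingredient that rules out nonzero invariant Bloch vectors. Once this is in hand, everything else is bookkeeping with the local tensor-product structure postulated for $AB$.
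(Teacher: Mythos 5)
Your proof is correct. Note that the paper does not actually prove this lemma itself---it defers to the reference \cite{MasanesMueller}---so there is no in-paper argument to compare against line by line. Your three steps are each sound: the observation that the Haar projection $v\mapsto\int_{\G_A}Gv\,dG$ annihilates all of $\hat A$ for any \emph{transitive} (not necessarily irreducible) state space, because the Bloch vectors of normalized states span $\hat A$ linearly, is exactly the right replacement for Schur's lemma, and indeed the same averaging trick is used by the authors later in Subsection~\ref{SubsecBipartite} to prove orthogonality of the subspaces in eq.~(\ref{eqDecomp}). The tensor-product step $(A\otimes B)^{\G_A\otimes\{\id_B\}}=A^{\G_A}\otimes B$ is standard linear algebra (expand in a basis of $B$), and it legitimately uses the local-tomography assumption $AB=A\otimes B$ and the postulate $\G_A\otimes\G_B\subseteq\G_{AB}$, both of which are in force in the subsection where the lemma is stated. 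The final normalization $\beta^B=\mu^B$ follows either from your marginal argument or simply from $u^{AB}=u^A\otimes u^B$ applied to $\mu^{AB}=\mu^A\otimes\beta^B$. You have correctly identified the one genuinely nontrivial point (ruling out nonzero $\G_A$-fixed Bloch vectors without irreducibility) and handled it properly.
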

If $AB$ is transitive, we can decompose it into the Bloch subspace and multiples of the maximally mixed state:
$AB=(AB)^\wedge\oplus \R\mu^{AB}$. On the other hand, if $A$ and $B$ are transitive, we can substitute their
local decompositions into the tensor product:
\[
   AB\equiv A\otimes B=(\hat A\oplus\R\mu^A)\otimes(\hat B\oplus\R\mu^B)
   =(\hat A\otimes \hat B)\oplus(\hat A\otimes\mu^B)\oplus(\mu^A\otimes\hat B)\oplus\R\mu^{AB}.
\]
Since $u^{AB}=u^A\otimes u^B$, the unit effect is zero on the first three addends in this decomposition.
This shows that
\begin{equation}
   (AB)^\wedge=(\hat A\otimes \hat B)\oplus(\hat A\otimes\mu^B)\oplus(\mu^A\otimes\hat B).
   \label{eqDecomp}
\end{equation}
This decomposition is reminiscent of another ``Bloch representation''~\cite{DakicBrukner,MasanesMueller} which writes global states in terms of three
vectors: the two local reduced states, and a correlation matrix. Now suppose that, in addition, $AB$ is irreducible.
Then there is a unique inner product on $(AB)^\wedge$ such that all transformations $T\in\G_{AB}$ are orthogonal.
Moreover, the three subspaces in eq.~(\ref{eqDecomp}) are preserved by local transformations $T_A\otimes T_B$, and they
are mutually orthogonal. To see this, first note that for pure states $\psi\in\Omega_A$, we have $\int_{G_A\in\G_A} G_A \hat\psi\, dG_A
=\int_{G_A\in\G_A} G_A\psi\, dG_A - \mu^A=0$. Since the pure states span $A$, the $\hat\psi$ span $\hat A$, and this integral
is zero for all vectors $\hat\psi\in\hat A$. Now mutual orthogonality of the subspaces, for example $\hat A \otimes\hat B \perp \hat A \otimes \mu^B$,
follows from
\begin{eqnarray*}
   \langle\hat a\otimes\hat b,\hat a'\otimes\mu^B\rangle&=&\langle(\Id_A\otimes G_B)(\hat a\otimes\hat b),(\Id_A\otimes G_B)
   (\hat a'\otimes \mu^B)\rangle=\langle\hat a\otimes G_B \hat b,\hat a'\otimes\mu^B\rangle
   =\int_{G_B\in\G_B} \langle\hat a\otimes G_B \hat b,\hat a'\otimes\mu^B\rangle\, dG_B \\
   &=& \langle \hat a\otimes 0,\hat a'\otimes\mu^B\rangle=0
\end{eqnarray*}
(the other pairs of subspaces can be treated similarly). The value of the inner product on $\hat A\otimes\mu^B$ (and similarly
on $\mu^A\otimes\hat B$) can be calculated explicitly:
\begin{lemma}
\label{LemInnerProdMaxMix}
Let $A$, $B$, and $AB$ be transitive dynamical state spaces, where $A$ and $AB$ are irreducible. Then
\[
   \langle \hat x\otimes\mu^B,\hat y\otimes\mu^B\rangle=\p(\varphi^A\otimes\mu^B)\langle\hat x,\hat y\rangle
   \qquad\mbox{for all }\hat x,\hat y\in\hat A,
\]
where $\varphi^A$ is any pure state on $A$.
\end{lemma}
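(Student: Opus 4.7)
The plan is to view $B(\hat x, \hat y) := \langle \hat x \otimes \mu^B, \hat y \otimes \mu^B\rangle$ as a symmetric bilinear form on $\hat A$ and then invoke the uniqueness (up to a scalar) of the invariant inner product on the irreducible $\G_A$-module $\hat A$ guaranteed by Lemma~\ref{LemInvariantInnerProduct}. Once $B$ is established as an honest invariant inner product on $\hat A$, one has $B(\hat x, \hat y) = c \, \langle \hat x, \hat y\rangle$ for some scalar $c$, and evaluation at a pure state $\hat\varphi^A$ (which has $\langle \hat\varphi^A, \hat\varphi^A\rangle = 1$ by the normalization convention of Definition~\ref{DefPurity}) will pin down $c = \p(\varphi^A \otimes \mu^B)$.

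The invariance step is direct. For every $G_A \in \G_A$, the standing assumption $\G_A \otimes \G_B \subseteq \G_{AB}$ applied with $G_B = \Id_B$ puts $G_A \otimes \Id_B$ in $\G_{AB}$, so this map is orthogonal with respect to the invariant inner product on $(AB)^\wedge$ (which exists because $AB$ is irreducible). Since $(G_A \otimes \Id_B)(\hat x \otimes \mu^B) = G_A \hat x \otimes \mu^B$, we conclude $B(G_A \hat x, G_A \hat y) = B(\hat x, \hat y)$ for all $\hat x, \hat y \in \hat A$ and all $G_A \in \G_A$.

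The one genuinely delicate point is non-degeneracy, since the Schur-type uniqueness in Lemma~\ref{LemInvariantInnerProduct} is stated for \emph{inner products} rather than arbitrary invariant symmetric bilinear forms. The form $B$ is positive semi-definite as the pullback of an inner product along the linear embedding $\hat x \mapsto \hat x \otimes \mu^B$, so its radical $\{\hat x \in \hat A : B(\hat x, \hat x) = 0\}$ is a $\G_A$-invariant subspace of $\hat A$, and by irreducibility of $A$ it is either $\{0\}$ or all of $\hat A$. To rule out the latter, note that for any pure state $\varphi^A$ the Bloch vector of $\varphi^A \otimes \mu^B$ equals $\hat\varphi^A \otimes \mu^B$ (using $\mu^{AB} = \mu^A \otimes \mu^B$ from the preceding lemma), so $B(\hat\varphi^A, \hat\varphi^A) = \p(\varphi^A \otimes \mu^B)$, which is strictly positive by property 2 of Lemma~\ref{LemPropPur} because $\varphi^A \otimes \mu^B \neq \mu^A \otimes \mu^B = \mu^{AB}$ whenever $\hat A \neq \{0\}$. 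Hence $B$ is a bona fide invariant inner product on $\hat A$, and the strategy of the first paragraph goes through, giving the claimed identity with $c = \p(\varphi^A \otimes \mu^B)$.
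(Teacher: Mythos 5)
Your proof is correct and follows essentially the same route as the paper: define the pulled-back bilinear form, check $\G_A$-invariance via $G_A\otimes\Id_B\in\G_{AB}$, invoke the Schur-type uniqueness of Lemma~\ref{LemInvariantInnerProduct}, and fix the constant by evaluating at $\hat\varphi^A$ using $(\varphi^A\otimes\mu^B)^\wedge=\hat\varphi^A\otimes\mu^B$. The only difference is that you carefully justify non-degeneracy (radical is an invariant subspace, hence trivial by irreducibility, since $\p(\varphi^A\otimes\mu^B)>0$), a point the paper dismisses with ``clearly''; this is a welcome tightening rather than a different argument.
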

\proof
For any pair of vectors $\hat x,\hat y\in\hat A$, define $(\hat x,\hat y):=\langle \hat x\otimes\mu^B,\hat y\otimes \mu^B\rangle$.
Clearly, this is an inner product on $\hat A$. Moreover, it is invariant with respect to all reversible transformations on $A$. Explicitly,
for all $T\in\G_A$,
\[
   (T\hat x,T \hat y)=\langle T\hat x \otimes\mu^B,T\hat y\otimes\mu^B\rangle = \langle (T\otimes\Id)(\hat x\otimes\mu^B),
   (T\otimes\Id)(\hat y \otimes\mu^B)\rangle=\langle \hat x\otimes \mu^B,\hat y\otimes \mu^B\rangle=(\hat x,\hat y),
\]
since local transformations are in particular orthogonal with respect to the global invariant inner product. According to
Lemma~\ref{LemInvariantInnerProduct}, it follows that there exists some global constant $c>0$ such that
$(\hat x,\hat y)=c\langle\hat x,\hat y\rangle$. Choosing $\hat x=\hat y=\hat\varphi^A$ for some pure state $\varphi^A\in\Omega_A$
shows that $c=(\hat\varphi^A,\hat\varphi^A)=\langle\hat\varphi^A\otimes\mu^B,\hat\varphi^A\otimes\mu^B\rangle$. But
\begin{equation}
   (\varphi^A\otimes\mu^B)^\wedge = (\hat\varphi^A+\mu^A)\otimes \mu^B-\mu^A\otimes\mu^B=\hat\varphi^A\otimes\mu^B,
   \label{eqBlochMaxMix}
\end{equation}
hence $c=\langle(\varphi^A\otimes\mu^B)^\wedge,(\varphi^A\otimes\mu^B)^\wedge\rangle=\p(\varphi^A\otimes\mu^B)$.
\qed

We would like to construct Pauli maps on $AB$ from Pauli maps on $A$ and $B$. In particular, if $X^A$ is a Pauli map
on $A$, a natural idea is to use the map $X^A\otimes u^B$ on the global state space. Is this a Pauli map? First, we
have $X^A\otimes u^B(\mu^{AB})=X^A(\mu^A)u^B(\mu^B)=0$, so the first condition of Definition~\ref{DefPauli} is satisfied.
But there is a second condition, demanding that the vector $(X^A\otimes u^B)^\wedge$ representing this map must be
normalized. As it turns out, this vector has norm larger than one in general. The following lemma says how the map
has to be normalized in order to obtain a Pauli map on $AB$.

\begin{lemma}
\label{LemGlobalPauli}
Let $A$, $B$, and $AB$ be transitive dynamical state spaces, where $A$ and $AB$ are irreducible.
If $X^A:A\to\R$ is a Pauli map on $A$, then the following identity holds and describes a Pauli map on $AB$:
\[
   \frac{X^A\otimes u^B}{\|(X^A\otimes u^B)^\wedge\|_2}=\sqrt{\p(\varphi^A\otimes\mu^B)}\, X^A\otimes u^B,
\]
where $\varphi^A$ is an arbitrary pure state on $A$, and $\mu^B$ is the maximally mixed state on $B$.
\end{lemma}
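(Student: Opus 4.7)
The plan is to identify the vector $(X^A\otimes u^B)^\wedge\in (AB)^\wedge$ that represents the functional $X^A\otimes u^B$ via the global invariant inner product, compute its norm, and then check that the rescaled map satisfies both conditions of Definition~\ref{DefPauli}. The first condition is immediate: $(X^A\otimes u^B)(\mu^{AB}) = X^A(\mu^A)\, u^B(\mu^B) = 0$ since $X^A$ is a Pauli map on $A$. So the work lies entirely in computing the norm of the representing vector.

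To pin down $(X^A\otimes u^B)^\wedge$, I would use the orthogonal decomposition~(\ref{eqDecomp}):
\[
   (AB)^\wedge=(\hat A\otimes \hat B)\oplus(\hat A\otimes\mu^B)\oplus(\mu^A\otimes\hat B),
\]
together with the mutual orthogonality of the three summands that was established just before Lemma~\ref{LemInnerProdMaxMix}. A direct evaluation on each summand shows that $X^A\otimes u^B$ vanishes on $\hat A\otimes\hat B$ (because $u^B(\hat b)=0$) and on $\mu^A\otimes\hat B$ (because $X^A(\mu^A)=0$), while on $\hat a\otimes\mu^B$ it evaluates to $X^A(\hat a) = \langle\hat X^A,\hat a\rangle$. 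By mutual orthogonality, the representing vector must therefore lie in $\hat A\otimes\mu^B$, so write $(X^A\otimes u^B)^\wedge = \hat Z\otimes\mu^B$ for some $\hat Z\in\hat A$.

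Now I would apply Lemma~\ref{LemInnerProdMaxMix} to determine $\hat Z$. For every $\hat a\in\hat A$,
\[
   \langle \hat Z\otimes\mu^B,\hat a\otimes\mu^B\rangle
   =\p(\varphi^A\otimes\mu^B)\,\langle\hat Z,\hat a\rangle
   \stackrel!=\langle \hat X^A,\hat a\rangle,
\]
forcing $\hat Z = \hat X^A/\p(\varphi^A\otimes\mu^B)$. Using Lemma~\ref{LemInnerProdMaxMix} once more and $\|\hat X^A\|_2^2=1$,
\[
   \|(X^A\otimes u^B)^\wedge\|_2^2
   =\frac{1}{\p(\varphi^A\otimes\mu^B)^2}\,\langle\hat X^A\otimes\mu^B,\hat X^A\otimes\mu^B\rangle
   =\frac{1}{\p(\varphi^A\otimes\mu^B)}.
\]
Taking the square root gives the claimed identity, and since by construction the rescaled functional has a unit-norm representing vector, Definition~\ref{DefPauli} is satisfied, confirming that it is a Pauli map on $AB$.

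The only subtle step is the second one: recognizing that the representing vector must live in the single summand $\hat A\otimes\mu^B$. This requires both the decomposition~(\ref{eqDecomp}) and the mutual orthogonality of its summands under the global invariant inner product (which in turn uses irreducibility of $AB$ via Lemma~\ref{LemInvariantInnerProduct} and the averaging argument $\int_{G_B\in\G_B}G_B\hat b\,dG_B=0$). Once this is in hand, everything reduces to Lemma~\ref{LemInnerProdMaxMix} and a two-line computation. No new ideas are needed.
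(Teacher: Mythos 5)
Your proof is correct and follows essentially the same route as the paper's: both rest on the decomposition~(\ref{eqDecomp}), the mutual orthogonality of its summands, and Lemma~\ref{LemInnerProdMaxMix}. The only difference is presentational --- you explicitly construct the representing vector $(X^A\otimes u^B)^\wedge=\hat X^A\otimes\mu^B/\p(\varphi^A\otimes\mu^B)$ and compute its norm directly, whereas the paper obtains the same norm variationally as $\max_{\hat\varphi}|X^A\otimes u^B(\hat\varphi)|/\|\hat\varphi\|_2$ without identifying the vector itself.
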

\proof
Let $c:=\|(X^A\otimes u^B)^\wedge\|_2$ and $X:=\frac 1 c X^A \otimes u^B$, then clearly $\|\hat X\|_2=1$ and
$X(\mu^{AB})=\frac 1 c X^A(\mu^A) u^B(\mu^B)=0$, so $X$ is a Pauli map according to Definition~\ref{DefPauli}.
It remains to show that $\frac 1 {c^2}=\p(\varphi^A\otimes\mu^B)$. Recall the decomposition of $(AB)^\wedge$ from eq.~(\ref{eqDecomp}).
The functional $X^A\otimes u^B$ acts as the zero map on $\hat A\otimes\hat B$ and $\mu^A\otimes \hat B$, hence it achieves its maximal
value on unit vectors on the subspace $\hat A\otimes\mu^B$. Thus, by elementary analysis,
\[
   c=\|(X^A\otimes u^B)^\wedge\|_2=\max_{\hat\varphi\in(AB)^\wedge\setminus\{0\}} \frac{|X^A\otimes u^B(\hat \varphi)|}{\|\hat\varphi\|_2}
   =\max_{\hat\varphi\in \hat A\otimes \mu^B\setminus\{0\}} \frac{|X^A\otimes u^B(\hat \varphi)|}{\|\hat\varphi\|_2}
   =\max_{\hat a\in \hat A\setminus\{0\}}\frac{|X^A\otimes u^B(\hat a \otimes\mu^B)|}{\|\hat a\otimes\mu^B\|_2}.
\]
According to Lemma~\ref{LemInnerProdMaxMix}, we have $\|\hat a\otimes\mu^B\|_2=\sqrt{\langle\hat a\otimes\mu^B,\hat a\otimes\mu^B\rangle}
=\sqrt{\p(\varphi^A\otimes\mu^B)}\|\hat a\|_2$, hence
\[
   c=\frac 1 {\sqrt{\p(\varphi^A\otimes\mu^B)}} \max_{\hat a\in\hat A \setminus\{0\}}\frac{|X^A(\hat a)|}{\|\hat a\|_2}
   =\frac{\|\hat X^A\|_2}{\sqrt{\p(\varphi^A\otimes\mu^B)}}=\frac 1{\sqrt{\p(\varphi^A\otimes\mu^B)}}.
\]
This proves the claim.
\qed

Suppose we draw a pure state $\omega^{AB}\in\Omega_{AB}$ at random. This can be alternatively understood as a two-part process:
first, we fix an arbitrary pure state $\alpha^{AB}\in\Omega_{AB}$. Then, we apply a random reversible transformation $T\in\G_{AB}$
to it (drawn according to the Haar measure): the result $\omega^{AB}=T\alpha^{AB}$ will be a random pure state.
Similarly, we can fix a mixed state
$\alpha^{AB}\in\Omega_{AB}$ with $\p(\alpha^{AB})=\p_0<1$, and apply
a Haar-random reversible transformation to it: $\omega^{AB}=T\alpha^{AB}$.

Having an initially mixed global state describes, for example, classical coin tossing, with $A$ the coin and $B$ the environment.
We will loosely describe this situation as ``drawing a random state $\omega^{AB}$ of purity $\p_0:=\p(\alpha^{AB})=\p(\omega^{AB})$'',
but this description is not quite correct: there is no natural invariant measure on the set
of all states with fixed purity $\p_0<1$, since those states are in general not all connected by reversible transformations (an obvious example
is given by the square state space in Figure~\ref{fig_gbit}).
Thus, not all properties of the random state $\omega^{AB}$ will be independent of the initial state $\alpha^{AB}$. However, as we
shall see, the expected local purity \emph{will} be independent of $\alpha^{AB}$, and this is all we are interested in here.

\begin{theorem}
\label{TheMain1}
Let $A$, $B$, and $AB$ be transitive dynamical state spaces, where $A$ and $AB$ are irreducible. Draw a state
$\omega^{AB}\in\Omega_{AB}$ of fixed purity $\p(\omega^{AB})$ randomly. Then, the expected purity of the local reduced state $\omega^A$ is
\[
   \mathbb{E}_\omega \p(\omega^A)=\frac{K_A-1}{K_A K_B-1}\cdot\frac{\p(\omega^{AB})}{\p(\varphi^A\otimes \mu^B)},
\]
where $\varphi^A$ is an arbitrary pure state on $A$, and $\mu^B$ is the maximally mixed state on $B$.
\end{theorem}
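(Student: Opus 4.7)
The plan is to mimic the Pauli-based quantum proof of Subsection~\ref{SubsecQProof} in the general-probabilistic setting, using the generalized Pauli-map machinery of Subsection~\ref{SubsecPaulis}. The strategy has three stages: rewrite the local purity $\p(\omega^A)$ as an integral of squared Pauli-map values on $A$, reinterpret each such value as a squared value of a properly normalized Pauli map on $AB$, and then take the global Haar average via Lemma~\ref{LemFormulaPauliPurity} applied to $(AB,\G_{AB})$.

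Concretely, I would first pick any Pauli map $X^A$ on $A$ and apply Lemma~\ref{LemFormulaPauliPurity} with $\mathcal{H}=\G_A$ (which acts irreducibly on $\hat A$ by hypothesis), giving
\begin{equation*}
\p(\omega^A) \;=\; (K_A-1)\int_{G\in\G_A} \bigl(X^A\circ G(\omega^A)\bigr)^2\, dG.
\end{equation*}
Taking expectations and invoking Fubini yields $\mathbb{E}_\omega \p(\omega^A)=(K_A-1)\int_{\G_A} \mathbb{E}_\omega\bigl(X^A\circ G(\omega^A)\bigr)^2\, dG$. By the defining property of the reduced state, $X^A\circ G(\omega^A) = \bigl((X^A\circ G)\otimes u^B\bigr)(\omega^{AB})$, and since $G$ is a symmetry, $Y^A := X^A\circ G$ is again a Pauli map on $A$. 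Lemma~\ref{LemGlobalPauli} then rewrites $Y^A\otimes u^B = \tilde Y^{AB}/\sqrt{\p(\varphi^A\otimes\mu^B)}$ for a genuine Pauli map $\tilde Y^{AB}$ on $AB$.

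The third stage is the global average. Writing $\omega^{AB}=T\alpha^{AB}$ for fixed $\alpha^{AB}$ with $\p(\alpha^{AB})=\p(\omega^{AB})$ and Haar-random $T\in\G_{AB}$, a second application of Lemma~\ref{LemFormulaPauliPurity} on the irreducible pair $(AB,\G_{AB})$ delivers
\begin{equation*}
\mathbb{E}_\omega \bigl(\tilde Y^{AB}(\omega^{AB})\bigr)^2 \;=\; \int_{T\in\G_{AB}} \bigl(\tilde Y^{AB}\circ T(\alpha^{AB})\bigr)^2\, dT \;=\; \frac{\p(\omega^{AB})}{K_A K_B-1}.
\end{equation*}
The crucial observation is that this value is independent of $G$, so the outer $\G_A$ integral trivializes against the normalized Haar measure, and the constants combine to give the claimed expression.

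The main delicate point will be verifying that the rescaling constant supplied by Lemma~\ref{LemGlobalPauli} is in fact $\sqrt{\p(\varphi^A\otimes\mu^B)}$ for \emph{every} $G\in\G_A$, independent of which Pauli map $Y^A=X^A\circ G$ is used; this rests on the $\G_A$-invariance of the inner product on $\hat A$ and on the orthogonality of the decomposition~(\ref{eqDecomp}) of $(AB)^\wedge$ under local transformations, both of which are already in place. Beyond this, the two uses of Fubini are harmless since $\G_A$ and $\G_{AB}$ are compact and the integrands are bounded continuous, so no analytic subtleties arise.
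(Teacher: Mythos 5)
Your proposal is correct and follows essentially the same route as the paper's own proof: Lemma~\ref{LemFormulaPauliPurity} applied on $A$ with $\mathcal{H}=\G_A$, the identification $X^A\circ G(\omega^A)=\bigl((X^A\circ G)\otimes u^B\bigr)(\omega^{AB})$, normalization via Lemma~\ref{LemGlobalPauli}, and a second application of Lemma~\ref{LemFormulaPauliPurity} on $(AB,\G_{AB})$. The only (immaterial) difference is bookkeeping: the paper collapses the $\G_A$-integral early by Haar-invariance of the distribution of $\omega^{AB}$ under $G\otimes\Id$, whereas you carry the $G$-dependence through and observe at the end that the global average is independent of $G$.
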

\proof
Let $X^A$ be any Pauli map on $A$, then $X:=\sqrt{\p(\varphi^A\otimes\mu^B)}\, X^A\otimes u^B$ is a Pauli map
on $AB$ according to Lemma~\ref{LemGlobalPauli}. Using the invariance of the Haar measure
and Lemma~\ref{LemFormulaPauliPurity}, we calculate
\begin{eqnarray*}
   \mathbb{E}_\omega \frac{\p(\omega^A)}{K_A-1} &=& \mathbb{E}_\omega \int_{G\in\G_A} \left(X^A\circ G(\omega^A)\right)^2\, dG
   =\int_{G\in\G_A}\mathbb{E}_\omega \left( X^A\otimes u^B(G\otimes\Id(\omega^{AB}))\right)^2\, dG\\
   &=&\mathbb{E}_\omega\left(X^A\otimes u^B(\omega^{AB})\right)^2
   =\int_{G\in\G_{AB}} \mathbb{E}_\omega \left(X^A\otimes u^B(G\omega^{AB})\right)^2\, dG\\
   &=&\frac 1 {\p(\varphi^A\otimes \mu^B)}\, \mathbb{E}_\omega\int_{G\in\G_{AB}} \left(X\circ G(\omega^{AB})\right)^2\, dG
   = \frac 1 {\p(\varphi^A\otimes \mu^B)}\cdot \frac{\mathbb{E}_\omega \p(\omega^{AB})}{K_A K_B-1}.
\end{eqnarray*}
The symbol $\mathbb{E}_\omega$ denoting the expected value with respect to $\omega$ disappears since
$\omega^{AB}$ is drawn from a uniform distribution of a set of states with
fixed purity (as described before the lemma).
\qed

It is easily checked that this result contains the well-known quantum result that random pure bipartite states are
almost maximally entangled with high probability if $|B|\gg |A|$, but we will not demonstrate this here. Instead,
we ask whether the expression $\p(\varphi^A\otimes \mu^B)$ appearing in Theorem~\ref{TheMain1} can be simplified.
It turns out that this is possible under some additional assumptions, and that this expression is related to the
\emph{information carrying capacity} of the involved state spaces. This will be shown in the next subsection.

\subsection{Classical subsystems and capacity}
\label{SubsecCapacity}
How can we quantify the ability of a system (or state space) to carry classical information? A classical bit $A$ and a quantum
bit $B$ both carry one bit of classical information, even though the state space dimensions are quite different: $K_A=2$,
while $K_B=4$. The relevant quantity turns out to be the maximal number of perfectly distinguishable states~\cite{Hardy01}, denoted $N$.
In order to define it, we have to talk about measurements.

Single measurement outcomes on a state space $A$ are described by \emph{effects}, which are linear maps $E:A\to\R$ with the
property that $E(\omega)\geq 0$ for all $\omega\in A_+$. The set of all effects~\footnote{We will not consider the possibility
to have only a \emph{subcone} of $A_+^*$ as the cone of allowed effects. This more general setting would describe
a situation where some mathematically well-defined effects are physically impossible to measure, similar to ``superselection
rules'' forbidding certain superpositions in quantum mechanics. In this paper, we assume that all effects can be physically implemented.
} is known as the \emph{dual cone} $A_+^*$ of the cone of unnormalized states $A_+$ in
convex geometry~\cite{Aliprantis}. An \emph{$n$-outcome measurement} is a collection of effects $E_1,\ldots,E_n$ that sum to the
order unit: $\sum_{i=1}^n E_i=u^A$. The probability of obtaining outcome $i$ on state $\omega\in\Omega_A$ is $E_i(\omega)$.

\begin{definition}
\label{DefClassSubs}
Let $A$ be any state space.
\begin{itemize}
\item A set of pure states $\omega_1,\ldots,\omega_n\in\Omega_A$ is called a \emph{classical subsystem} if there is a measurement
$E_1,\ldots,E_n$ such that $E_i(\omega_j)=\delta_{ij}$ (which is $1$ for $i=j$ and $0$ otherwise); that is, if the states are
perfectly distinguishable by a single-shot measurement.
\item The \emph{capacity} $N_A$ is defined to be the maximal size of any classical subsystem of $A$.
\item If $A$ is a transitive state space, then a classical subsystem $\omega_1,\ldots,\omega_n$ will be called \emph{centered}
if $\displaystyle \frac 1 n \sum_{i=1}^n \omega_i=\mu^A$.
\item  If $A$ is a transitive dynamical state space, then a classical subsystem $\omega_1,\ldots,\omega_n$ will be
called \emph{dynamical} if for every permutation $\pi$ on $\{1,\ldots,n\}$, there is a reversible transformation $T_\pi\in\G_A$
such that $T_\pi(\omega_i)=\omega_{\pi(i)}$ for all $i$.
\end{itemize}
\end{definition}
A ``classical subsystem'' is a subset of a state space which, in many respects, behaves like a classical system from
probability theory. For example, given orthonormal vectors $|\psi_1\rangle,\ldots,|\psi_n\rangle\in\C^d$ with $\langle \psi_i|\psi_j\rangle=\delta_{ij}$,
the corresponding quantum states $\omega_i:=|\psi_i\rangle\langle\psi_i|$ constitute a classical subsystem. It is centered if and only if
$n=d$, and the quantum state space capacity is its Hilbert space dimension $d$. A classical subsystem is \emph{dynamical}
if it also carries all of the reversible dynamics of classical probability theory -- that is, all the permutations. This is clearly the case in
quantum theory, where every permutation of the orthonormal basis vectors can be implemented by a unitary transformation.

Is is easy to see that to any set of mixed states $\omega_1,\ldots,\omega_n$ with effects $E_1,\ldots,E_n$ such that $E_i(\omega_j)=\delta_{ij}$,
there exists a set of \emph{pure} states $\omega'_1,\ldots,\omega'_n$ such that $E_i(\omega'_j)=\delta_{ij}$. Thus, the requirement
of purity in this definition introduces no restriction.
Here are some simple consequences of this definition:
\begin{lemma}
\label{LemPropClassSubs}
We have the following properties of capacity and classical subsystems:
\begin{itemize}
\item[(i)] Capacity satisfies $N_A\leq K_A$, and we have equality if and only if $A$ is a classical state space, i.e.\ $\hat\Omega_A$
is a simplex.
\item[(ii)] If $\omega_1,\ldots,\omega_n$ is a centered classical subsystem, then necessarily $n=N_A$.
\item[(iii)] If $A$ and $B$ carry centered classical subsystems, then so does $AB$, and we have $N_{AB}=N_A N_B$.
\item[(iv)] If a dynamical classical subsystem contains the maximally mixed state in its affine hull, then it is centered.
\end{itemize}
\end{lemma}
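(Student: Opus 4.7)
The plan is to handle the four parts in the stated order, since (iii) reduces to (ii) through an explicit product construction and (iv) uses the linear-independence fact from (i).

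For (i), the relations $E_i(\omega_j)=\delta_{ij}$ immediately force the $\omega_j$ to be linearly independent (apply $E_i$ to $\sum_j c_j\omega_j=0$), giving $N_A\leq K_A$. If equality holds, the $\omega_i$ form a basis of $A$; for any normalized state $\omega$ the coefficients $\lambda_i:=E_i(\omega)$ are nonnegative and sum to $u^A(\omega)=1$, so $\Omega_A$ is the convex hull of $K_A$ affinely independent points, i.e.\ a simplex. The converse is just counting vertices against affine dimension.

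The crux is (ii). The preliminary claim I would establish is that in a transitive state space $\mu^A$ lies in the relative interior of $\Omega_A$, equivalently that every nonzero positive effect $E$ satisfies $E(\mu^A)>0$. Indeed, if $E\neq 0$ then $E(\omega^\star)>0$ for some pure state $\omega^\star$ (otherwise $E$ would vanish on the convex hull of the pure states, hence on all of $A_+$); transitivity means the orbit of any fixed pure state $\omega_0$ under $\G_A$ is the whole set of pure states, so $\mu^A=\int_G G\omega_0\,dG$ picks up the pointwise positivity of $E$ near $\omega^\star$ and gives $E(\mu^A)>0$. Now suppose for contradiction that a centered classical subsystem $\omega_1,\ldots,\omega_n$ could be extended by a further perfectly distinguishable pure state $\omega_{n+1}$, with new distinguishing effects $E'_1,\ldots,E'_{n+1}$. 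The centered condition forces $E'_{n+1}(\mu^A)=\tfrac{1}{n}\sum_{j=1}^n E'_{n+1}(\omega_j)=0$, yet $E'_{n+1}$ is a nonzero positive effect (it outputs $1$ on $\omega_{n+1}$), contradicting the interior property. So no extension is possible and $n=N_A$.

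For (iii), I would take centered classical subsystems $(\omega_a,E_a)$ on $A$ and $(\eta_b,F_b)$ on $B$ and form the product states $\xi_{ab}:=\omega_a\otimes\eta_b$ with measurement $G_{ab}:=E_a\otimes F_b$. Each $\xi_{ab}$ is pure on $AB$ (products of pure states are pure under the composite-system assumptions used elsewhere in the excerpt), the orthogonality $G_{ab}(\xi_{cd})=\delta_{ac}\delta_{bd}$ factors through the local orthogonalities, and the uniform average is $\mu^A\otimes\mu^B=\mu^{AB}$ by the preceding lemma. This exhibits a centered classical subsystem on $AB$ of size $N_A N_B$, so (ii) applied to $AB$ then pins down $N_{AB}=N_A N_B$. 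For (iv), I would write $\mu^A=\sum_{i=1}^n c_i\omega_i$ with $\sum_i c_i=1$. For any permutation $\pi$, applying $T_\pi\in\G_A$ to both sides and using $T_\pi\mu^A=\mu^A$ yields $\mu^A=\sum_i c_{\pi^{-1}(i)}\omega_i$. Linear independence of the $\omega_i$ from (i) makes the decomposition unique, forcing $c_i=c_{\pi^{-1}(i)}$ for every $\pi$; hence all $c_i$ agree and equal $1/n$, which is exactly the centered condition.

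The main obstacle is the interior-point claim in (ii): the statement that $E(\mu^A)>0$ for every nonzero positive effect $E$ is what drives both (ii) and, through it, (iii), so identifying the right way to combine transitivity with the existence of a pure state on which $E$ is strictly positive is the only place where one needs to do more than formal manipulation. Everything else is bookkeeping once this fact and the linear independence from (i) are in hand.
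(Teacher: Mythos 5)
Parts (i) and (iv) of your proposal, and the product construction in part (iii), are correct and essentially identical to the paper's own argument (for (iv) the paper extracts the coefficients with the effects $E_j$ rather than invoking linear independence directly, but this is the same idea). The problem lies in part (ii); and since your (iii) obtains $N_{AB}=N_AN_B$ by applying (ii) to the composite system, the gap propagates there as well.

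Concretely: what you actually prove in (ii) is that the given centered classical subsystem $\omega_1,\ldots,\omega_n$ cannot be \emph{extended} by one further perfectly distinguishable state. But $N_A$ is defined as the maximal size of \emph{any} classical subsystem, and the negation of $n=N_A$ is not ``the family $\{\omega_i\}$ is extendable''; it is the existence of some entirely different family $\alpha_1,\ldots,\alpha_{N_A}$ of perfectly distinguishable states with $N_A>n$. Maximality with respect to inclusion does not imply maximality of cardinality for distinguishable sets in a general convex body --- that implication is exactly the content of the claim, so your contradiction hypothesis is not the negation of what you want to prove. Your interior-point lemma ($E(\mu^A)>0$ for every nonzero positive effect $E$, which you establish correctly from transitivity and the Haar integral) is true, but it is aimed at the wrong comparison. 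The paper closes the gap by comparing the centered subsystem directly with an arbitrary maximal classical subsystem $\alpha_1,\ldots,\alpha_{N_A}$ with effects $E_1,\ldots,E_{N_A}$: transitivity supplies $T_k\in\G_A$ with $T_k\omega_1=\alpha_k$, whence $E_k(\mu^A)=E_k(T_k\mu^A)=\frac{1}{n}\sum_{i}E_k(T_k\omega_i)\geq\frac{1}{n}E_k(\alpha_k)=\frac{1}{n}$, and summing over $k$ using $\sum_k E_k=u^A$ gives $1\geq N_A/n$, i.e.\ $N_A\leq n$. Your qualitative statement $E(\mu^A)>0$ is a weaker version of the quantitative bound $E_k(\mu^A)\geq 1/n$ that the argument actually needs; replacing the former by the latter along these lines repairs the proof.
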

\proof
(i) It follows from the definition that sets of perfectly distinguishable states are linearly independent. Since the number of linearly independent
vectors is upper-bounded by the dimension $K_A$, this proves that $N_A\leq K_A$. Now suppose we have equality, then the perfectly
distinguishable states $\omega_1,\ldots,\omega_n$ with $n=N_A$ are a basis of $A$. Every state $\omega\in\Omega_A$ can thus
be written $\omega=\sum_{i=1}^{N_A} \alpha_i \omega_i$ with $\alpha_i\in\R$. Since $u^A(\omega)=1=u^A(\omega_i)$, we get
$\sum_{i=1}^{N_A} \alpha_i=1$. Moreover, we have $0\leq E_j(\omega)=\sum_{i=1}^{N_A}\alpha_i E_j(\omega_i)=\alpha_j$.
That is, $\omega$ is in the convex hull of $\omega_1,\ldots,\omega_n$; in other words, $\Omega_A$ is the simplex generated
by the $\omega_i$.

(ii) Clearly, $N_A\geq n$. In order to see the converse inequality, let $\alpha_1,\ldots,\alpha_{N_A}$ be a maximal classical subsystem with
corresponding effects $E_1,\ldots,E_{N_A}$. Due to transitivity, for every $k$, there is a reversible transformation $T_k\in\G_A$ such
that $T_k\omega_1=\alpha_k$. Using the invariance of the maximally mixed state, we get
\[
   E_k(\mu^A)=E_k(T_k\mu^A)=E_k\left( T_k\frac 1 n \sum_{i=1}^n \omega_i\right)=\frac 1 n \sum_{i=1}^n E_k(T_k\omega_i)
   \geq \frac 1 n E_k(T_k\omega_1)=\frac 1 n E_k(\alpha_k)=\frac 1 n.
\]
On the other hand, we have $\displaystyle 1=\sum_{k=1}^{N_A} E_k(\mu^A)\geq \frac{N_A} n$. This proves that $N_A\leq n$.

(iii) If $\{\omega_i^A\}_{i=1}^{N_A}$ and $\{\omega_j^B\}_{j=1}^{N_B}$ are centered classical subsystems on $A$ and $B$ respectively,
then all states $\omega_i^A\otimes\omega_j^B$ are pure. Moreover, they are perfectly distinguishable by the corresponding
product measurement, and
\[
   \frac 1 {N_A N_B}\sum_{i,j} \omega_i^A\otimes \omega_j^B = \left(\frac 1 {N_A} \sum_{i=1}^{N_A}\omega_i^A\right)
   \otimes\left(\frac 1 {N_B}\sum_{j=1}^{N_B} \omega_j^B\right)=\mu^A\otimes\mu^B=\mu^{AB}.
\]
Thus, $\{\omega_i^A\otimes\omega_j^B\}_{i,j}$ is a centered classical subsystem on $AB$ of size $N_A N_B$, and it follows
from part (ii) that $N_{AB}=N_A N_B$.

(iv) Suppose that $\omega_1,\ldots,\omega_n$ is a dynamical classical subsystem on $A$, and $\displaystyle \mu^A=
\sum_{i=1}^n r_i \omega_i$ for some real numbers $r_i\in\R$ with $\sum_{i=1}^n r_i=1$. Let $E_1,\ldots,E_n$ be the corresponding
effects with $E_i(\omega_j)=\delta_{ij}$. Let $j,k\in\{1,\ldots,n\}$ be arbitrary, and $\pi$ a permutation with $k=\pi^{-1}(j)$. Then
\[
   r_j=\sum_{i=1}^n r_i \delta_{ij} = \sum_{i=1}^n r_i E_j(\omega_i)=E_j(\mu^A)=E_j(T_\pi \mu^A)
   =\sum_{i=1}^n r_i E_j(T_\pi \omega_i) = \sum_{i=1}^n r_i E_j(\omega_{\pi(i)})=r_{\pi^{-1}(j)} = r_k.
\]
Thus, all $r_i$ are equal, and since $\sum_{i=1}^n r_i=1$, we must have $r_i=\frac 1 n$ for all $i$.
This proves the claim.
\qed

Not every state space carries a centered classical subsystem. This is illustrated in Figure~\ref{fig_gbit}: both the square state space
$A$ and the pentagon $B$ have capacity $N_A=N_B=2$. Any pair of antipodal pure states of the square constitutes a centered
classical subsystem of $A$, but the pentagon does not possess any centered classical subsystem. A polygonal state space with
$n\geq 4$ sides carries a centered classical subsystem if and only if $n$ is even.

Why is it natural to assume the existence of a \emph{centered} classical subsystem? We will now discuss three good reasons for
a centered classical subsystem to exist in physically relevant state spaces. A first motivation comes from dynamical considerations in group
theory. Consider a qubit. The north and south pole (say,
$\omega_1=|0\rangle\langle 0|$ and $\omega_2=|1\rangle\langle 1|$) constitute a classical subsystem -- and it is one with
rich dynamics: we can do ``classical computation'' in this subsystem, that is, implement all the permutations
(which is just a bit flip in the case of a qubit, but involves many more transformations for higher-dimensional quantum systems).

More generally, we may ask what transformations preserve this classical subsystem. Together with the bit flips,
these are the rotations around the $z$-axis, and there are many of them: only the maximally mixed state (and no other)
is preserved by those transformations. It turns out that this property forces the classical subsystem to be centered:

\begin{lemma}
Let $\omega_1,\ldots,\omega_n$ be a classical subsystem on a transitive dynamical state space $A$,
and let $\G_\omega\subseteq\G_A$ be its stabilizer subgroup. If the maximally mixed state
$\mu^A$ is the only $\G_\omega$-invariant state, then the subsystem is centered.
\end{lemma}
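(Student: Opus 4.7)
The plan is to let $\bar\omega:=\frac{1}{n}\sum_{i=1}^n \omega_i$, show that $\bar\omega$ is a $\G_\omega$-invariant state, and then invoke the uniqueness hypothesis to conclude $\bar\omega=\mu^A$, which is exactly the defining condition for the classical subsystem to be centered.

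First I would verify that $\bar\omega\in\Omega_A$: since $\Omega_A$ is convex and each $\omega_i\in\Omega_A$ is a (pure) state, their uniform convex combination is again a normalized state. So $\bar\omega$ is a legitimate candidate to test against the invariance hypothesis.

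Next, the key observation: every $T\in\G_\omega$ is, by definition of the stabilizer subgroup, a reversible transformation that maps the finite set $\{\omega_1,\ldots,\omega_n\}$ bijectively onto itself. Hence there exists a permutation $\pi_T$ of $\{1,\ldots,n\}$ with $T\omega_i=\omega_{\pi_T(i)}$ for all $i$. Linearity then gives
\begin{equation*}
   T\bar\omega \;=\; \frac{1}{n}\sum_{i=1}^n T\omega_i \;=\; \frac{1}{n}\sum_{i=1}^n \omega_{\pi_T(i)} \;=\; \frac{1}{n}\sum_{j=1}^n \omega_j \;=\; \bar\omega,
\end{equation*}
since reindexing by a permutation leaves a finite sum unchanged. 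Thus $\bar\omega$ is $\G_\omega$-invariant.

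Finally, by the hypothesis of the lemma, the only $\G_\omega$-invariant state on $A$ is the maximally mixed state $\mu^A$. Therefore $\bar\omega=\mu^A$, which is exactly what Definition~\ref{DefClassSubs} requires for $\{\omega_1,\ldots,\omega_n\}$ to be centered. No serious obstacle is anticipated; the only subtlety is the bookkeeping that $T\in\G_\omega$ genuinely permutes the $\omega_i$ (rather than, say, sending some $\omega_i$ to a state outside the set), but this is immediate from the definition of the stabilizer of the set. Notice that this argument does not invoke part (iv) of Lemma~\ref{LemPropClassSubs}, though it could be compared with it as a consistency check; here the conclusion follows directly from uniqueness of the invariant state.
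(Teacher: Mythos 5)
Your proof is correct and is essentially the paper's own argument: the paper also observes that every $T\in\G_\omega$ preserves the uniform mixture $\frac1n\sum_i\omega_i$ and then invokes the uniqueness hypothesis to identify it with $\mu^A$. You have merely spelled out the permutation bookkeeping that the paper leaves implicit.
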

\proof
Every $T\in\G_\omega$ preserves $\mu_\omega:=\frac 1 n\sum_{i=1}^n \omega_i$. If the lemma's condition is
satisfied, this must be the maximally mixed state $\mu^A$.
\qed

This means that if the state space is symmetric enough to allow for a rich group of dynamics (leaving the classical
subsystem invariant), and if that subsystem is ``large'' enough such that the corresponding group ``mixes'' basically
all of state space, then the subsystem must be centered.

As a second motivation, consider any maximal classical subsystem $\omega_1,\ldots,\omega_N$.
We can think of the convex hull ${\rm conv}\{\omega_1,\ldots,\omega_N\}$ as a classical state space (a simplex)
embedded in the more general, larger state space. This simplex carries its own ``classical'' maximally mixed state,
which is $\mu^{\rm classical}:=\frac 1 N\sum_{i=1}^N \omega_i$. The property of being \emph{centered} just means
that this classical maximally mixed state equals the maximally mixed state of the larger theory, $\mu^{\rm classical}=\mu$:
classical probability theory is embedded in a ``symmetric'' way.

From a physics point of view, this is to expect whenever we have some kind of ``decoherence mechanism'' which effectively
reduces observations to the embedded classical system. On an $n$-level quantum system, for example, decoherence can
effectively reduce the observable state space to that of an $n$-simplex, which corresponds to diagonal density matrices in
the Hamiltonian's eigenbasis. Now suppose that decoherence has taken place, and \emph{in addition}, we have total
ignorance about the \emph{classical} state of our system, such that we hold the state $\mu^{\rm classical}$.

Physically, we expect that we are left with \emph{no remaining information at all}: if we have perfect decoherence, followed by
perfect classical ignorance of the state, there should be no more remaining information that we could read out by measurement.
This implies that $\mu^{\rm classical}=\mu$; that is, the existence of a centered classical subsystem. This subsystem determines
a ``preferred basis'' for decoherence.

A third, more operational way to understand this property is a principle\cite{LluisPrivate} of ``information saturation'': suppose that
Alice obtains a message $i\in\{1,\ldots,N\}$ randomly, with uniform distribution. She encodes this message into the state $\omega_i$
of the state space's maximal classical subsystem, and sends it to Bob. The principle of information saturation asserts that Alice
can use this to send the message $i$ to Bob with perfect success probability, \emph{but not more}. This amounts to saying that the
mixed state that she effectively sends, $\frac 1 N \sum_{i=1}^N \omega_i$, should be the maximally mixed state $\mu$ of the theory.

Before turning to the main result of this section, we need to consider one more property of state spaces. So far, we have mainly talked
about classical subsystems on \emph{single} state spaces. However, if we are interested in classical subsystems on \emph{composite}
state spaces $AB$, we expect that our theory can imitate another computational feature of classical probability theory: that
dynamical classical subsystems on $A$ and $B$ combine to \emph{dynamical} classical subsystems on $AB$. In other words, we expect
that $AB$ carries a dynamical classical subsystem which can be \emph{decomposed} into $A$- and $B$-parts.

\begin{definition}[Composite Classical Subsystem]
A composite transitive dynamical state space $AB$ is said to carry a \emph{composite classical subsystem}
if there are centered dynamical classical subsystems $\omega^A_1,\ldots,\omega^A_{N_A}$ on $A$
and $\omega^B_1,\ldots,\omega^B_{N_B}$ on $B$ such that the corresponding classical subsystem
containing the states $\omega^{AB}_{i,j}:=\omega^A_i \otimes \omega^B_j$ is dynamical.
\end{definition}

We know from Lemma~\ref{LemPropClassSubs} that the states $\omega^{AB}_{i,j}$ are automatically a centered
classical subsystem, and $N_{AB}=N_A N_B$. However, it is not automatically clear that all permutations on this
classical subsystems can be implemented reversibly, that is, that this classical subsystem is \emph{dynamical}.
If it is, it will be called a composite classical subsystem.

Intuitively, this means that $A$ and $B$ contain classical probability distributions as subsystems, in the ``friendliest''
possible way: all permutations can be applied; the classical states of $AB$ are combinations of those
of $A$ and $B$; the local ``classical'' maximally mixed states correspond to the maximally mixed states of $A$ and $B$.
The philosophy of this assumption is that physical state spaces should always be generalizations of classical probability theory,
reducing to the latter in the case of decoherence.

Centered dynamical classical subsystems have a nice symmetry property:
\begin{lemma}
\label{LemSymmClass}
Let $\{\omega_1,\ldots,\omega_N\}$ be a centered dynamical classical subsystem on some state space. Then
$\displaystyle \langle \hat\omega_i,\hat\omega_j\rangle=-\frac 1 {N-1}$ for all $i\neq j$.
\end{lemma}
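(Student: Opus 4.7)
The plan is to exploit the two defining properties of a centered dynamical classical subsystem in sequence: first dynamicality, to show that all cross inner products $\langle \hat\omega_i,\hat\omega_j\rangle$ ($i\neq j$) are equal, and then centeredness, to pin down their common value.

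First I would observe that since the subsystem is dynamical, for any pair $(i,j)$ with $i\neq j$ and any other pair $(i',j')$ with $i'\neq j'$, there exists a permutation $\pi$ with $\pi(i)=i'$, $\pi(j)=j'$, and hence a reversible transformation $T_\pi\in\G_A$ with $T_\pi\hat\omega_i=\hat\omega_{i'}$ and $T_\pi\hat\omega_j=\hat\omega_{j'}$ (the action on Bloch vectors follows since $T_\pi\mu^A=\mu^A$). By Lemma~\ref{LemInvariantInnerProduct}, $T_\pi$ is orthogonal with respect to $\langle\cdot,\cdot\rangle$, so
\[
   \langle\hat\omega_{i'},\hat\omega_{j'}\rangle=\langle T_\pi\hat\omega_i,T_\pi\hat\omega_j\rangle=\langle\hat\omega_i,\hat\omega_j\rangle.
\]
Thus there is a single constant $c\in\R$ with $\langle\hat\omega_i,\hat\omega_j\rangle=c$ for all $i\neq j$.

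Next I would invoke that each $\omega_i$ is pure, so by Lemma~\ref{LemPropPur}(3) we have $\langle\hat\omega_i,\hat\omega_i\rangle=\p(\omega_i)=1$. Finally, since the subsystem is centered, $\frac{1}{N}\sum_{i=1}^N\omega_i=\mu^A$, which translates in Bloch form to $\sum_{i=1}^N\hat\omega_i=0$. Taking the squared norm of this vanishing sum,
\[
   0=\Bigl\langle\sum_{i=1}^N\hat\omega_i,\sum_{j=1}^N\hat\omega_j\Bigr\rangle=\sum_{i=1}^N\langle\hat\omega_i,\hat\omega_i\rangle+\sum_{i\neq j}\langle\hat\omega_i,\hat\omega_j\rangle=N+N(N-1)\,c,
\]
and solving yields $c=-1/(N-1)$, as claimed.

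There is no real obstacle here; the only thing to watch is that the permutation-symmetry argument is carried out on the Bloch vectors (not on the states themselves), which is clean because $T\mu^A=\mu^A$ for all $T\in\G_A$ makes the map $\omega\mapsto\hat\omega$ commute with the group action. Everything else is a two-line computation once dynamicality and centeredness are used in the right order.
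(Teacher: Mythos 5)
Your proof is correct and follows essentially the same route as the paper's: dynamicality plus orthogonality of the reversible transformations gives a common value $c$ for all off-diagonal inner products, and centeredness (the vanishing of $\sum_i\hat\omega_i$, equivalently $\p(\mu)=0$) combined with $\langle\hat\omega_i,\hat\omega_i\rangle=1$ for the pure states $\omega_i$ forces $c=-1/(N-1)$. No gaps; your remark about the Bloch map commuting with the group action is the same observation the paper makes when introducing Bloch vectors.
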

\proof
By definition, for every permutation $\pi$ on $\{1,\ldots,N\}$, there exists a reversible transformation $T_\pi$ such
that $T_\pi \omega_i =\omega_{\pi(i)}$. Hence $\langle \hat\omega_i,\hat\omega_j\rangle=\langle T_\pi \hat\omega_i,
T_\pi\hat \omega_j\rangle=\langle \hat \omega_{\pi(i)},\hat\omega_{\pi(j)}\rangle$. This proves that there is some constant
$\xi\in\R$ such that $\langle \hat\omega_i,\hat\omega_j\rangle=\xi$ for all $i\neq j$. Now use the fact that the classical subsystem
is centered:
\[
   0=\langle\hat\mu,\hat\mu\rangle=\frac 1 {N^2} \sum_{i,j=1}^N \langle \hat\omega_i,\hat\omega_j\rangle
   =\frac 1 {N^2}\left(
      \sum_{i=1}^N \langle\hat\omega_i,\hat\omega_i\rangle+\sum_{i=1}^N \sum_{j\neq i} \langle \hat\omega_i,\hat\omega_j\rangle
   \right)=\frac 1 {N^2}\left(\strut N+N(N-1)\xi\right).
\]
This equation can be used to infer that $\xi=-1/(N-1)$.
\qed

Now we are ready to prove the main result of this subsection.
\begin{theorem}
\label{TheCompositePurity}
Let $A$, $B$, and $AB$ be irreducible, and suppose that $AB$ carries a composite classical subsystem. Then
\[
   \p(\varphi^A\otimes\mu^B)=\frac{N_A-1}{N_A N_B-1}\qquad\mbox{for every pure state }\varphi^A\in\Omega_A.
\]
\end{theorem}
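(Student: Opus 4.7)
The plan is to reduce the computation to an explicit Bloch-space calculation using the composite classical subsystem guaranteed by hypothesis, and then invoke the orthogonality identity of Lemma~\ref{LemSymmClass}.

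First I would observe that the quantity $\p(\varphi^A\otimes\mu^B)$ is independent of which pure state $\varphi^A$ one plugs in. This follows because, by transitivity of $A$, any two pure states are related by some $T\in\G_A$, and the local transformation $T\otimes\Id_B$ belongs to $\G_A\otimes\G_B\subseteq\G_{AB}$, hence preserves the global invariant inner product and the maximally mixed state $\mu^B$. Thus I am free to pick $\varphi^A$ to be any pure state I like. I choose $\varphi^A:=\omega_1^A$, the first state of the centered dynamical classical subsystem $\{\omega_i^A\}_{i=1}^{N_A}$ on $A$ provided by the composite classical subsystem assumption.

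Next I expand $\mu^B$ as a uniform mixture over the centered classical subsystem on $B$: $\mu^B=\frac 1 {N_B}\sum_{j=1}^{N_B}\omega_j^B$. This gives
\[
   \varphi^A\otimes\mu^B=\frac 1 {N_B}\sum_{j=1}^{N_B}\omega_1^A\otimes\omega_j^B=\frac 1 {N_B}\sum_{j=1}^{N_B}\omega_{1,j}^{AB},
\]
a convex combination of states from the composite classical subsystem on $AB$. Using the linearity rule for Bloch vectors under convex combinations noted in Subsection~\ref{SubsecGenBloch}, I get $(\varphi^A\otimes\mu^B)^\wedge=\frac 1 {N_B}\sum_j \hat\omega_{1,j}^{AB}$.

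Now I just expand the squared norm:
\[
   \p(\varphi^A\otimes\mu^B)=\frac 1 {N_B^2}\sum_{j,k=1}^{N_B}\langle\hat\omega_{1,j}^{AB},\hat\omega_{1,k}^{AB}\rangle.
\]
By Lemma~\ref{LemPropClassSubs}(iii), the $\omega_{i,j}^{AB}$ form a centered classical subsystem of size $N_{AB}=N_A N_B$, and by hypothesis it is dynamical. Lemma~\ref{LemSymmClass} then gives $\langle\hat\omega_{1,j}^{AB},\hat\omega_{1,j}^{AB}\rangle=1$ (pure states have purity $1$) and $\langle\hat\omega_{1,j}^{AB},\hat\omega_{1,k}^{AB}\rangle=-\frac 1 {N_A N_B-1}$ for $j\neq k$. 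Substituting the $N_B$ diagonal and $N_B(N_B-1)$ off-diagonal terms and simplifying yields
\[
   \p(\varphi^A\otimes\mu^B)=\frac 1 {N_B}\Big(1-\frac{N_B-1}{N_A N_B-1}\Big)=\frac{N_A-1}{N_A N_B-1},
\]
as claimed. There is no real obstacle here: the whole proof is driven by the identification of the right mixture; once one writes $\mu^B$ as a uniform mixture of a classical subsystem and recognizes $\{\omega_1^A\otimes\omega_j^B\}_j$ as part of the composite classical subsystem on $AB$, Lemma~\ref{LemSymmClass} does all the work. The only subtlety worth flagging is ensuring independence of the choice of $\varphi^A$, so that specializing to $\omega_1^A$ is legitimate.
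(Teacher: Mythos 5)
Your proof is correct and takes essentially the same route as the paper's: write $\mu^B$ as the uniform mixture over the centered classical subsystem on $B$, recognize the states $\omega_1^A\otimes\omega_j^B$ as members of the composite classical subsystem on $AB$, and apply Lemma~\ref{LemSymmClass} to the cross terms. The only (harmless) difference is that you explicitly justify independence of the choice of $\varphi^A$ via transitivity, which the paper leaves implicit.
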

\proof
By definition, there are centered dynamical classical subsystems $\omega^A_1,\ldots,\omega^A_{N_A}$ on $A$, and
$\omega^B_1,\ldots,\omega^B_{N_B}$ on $B$ such that the states $\omega^{AB}_{i,j}:=\omega^A_i\otimes\omega^B_j$
constitute a centered dynamical classical subsystem on $AB$. We know from Lemma~\ref{LemSymmClass} that
$\langle \hat\omega^{AB}_{i,j},\hat\omega^{AB}_{k,l}\rangle=-1/(N_A N_B-1)$ if $(i,j)\neq (k,l)$. Decomposing $\mu^B$,
we get $\omega^A_1\otimes\mu^B=\frac 1 {N_B} \sum_{j=1}^{N_B} \omega_1^A\otimes\omega_j^B$ and thus
$(\omega^A_1\otimes\mu^B)^\wedge=\frac 1 {N_B} \sum_{j=1}^{N_B} (\omega_1^A\otimes\omega_j^B)^\wedge$. Consequently,
\begin{eqnarray*}
   \p(\omega_1^A\otimes \mu^B)&=&\langle (\omega_1^A\otimes\mu^B)^\wedge,(\omega_1^A\otimes\mu^B)^\wedge\rangle
   =\frac 1 {N_B^2} \sum_{j,k=1}^{N_B} \langle (\omega_1^A\otimes \omega_j^B)^\wedge,(\omega_1^A\otimes\omega_k^B)^\wedge\rangle\\
   &=&\frac 1 {N_B^2}\left(
      \sum_{j=1}^{N_B} \langle \hat\omega_{1j}^{AB},\hat\omega_{1j}^{AB}\rangle + \sum_{j=1}^{N_B} \sum_{k\neq j}
      \langle\hat\omega_{1j}^{AB},\hat\omega_{1k}^{AB}\rangle
   \right)=\frac 1 {N_B^2}\left(N_B\cdot 1 +N_B(N_B-1)\left(-\frac 1 {N_A N_B-1}\right)\right).
\end{eqnarray*}
Some simplification completes the proof.
\qed

Substituting Theorem~\ref{TheCompositePurity} into Theorem~\ref{TheMain1} proves
\begin{theorem}
\label{TheMainMain}
Let $A$, $B$, and $AB$ be irreducible, and suppose that $AB$ carries a composite classical subsystem.
Draw a state $\omega^{AB}\in\Omega_{AB}$ of fixed purity $\p(\omega^{AB})$ randomly, then
\[
   \mathbb{E}_\omega\p(\omega^A)=\frac{K_A-1}{K_A K_B-1}\cdot\frac{N_A N_B-1}{N_A-1}\cdot\p(\omega^{AB}).
\]
\end{theorem}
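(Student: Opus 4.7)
The plan is essentially to combine two results already established in the excerpt: Theorem~\ref{TheMain1} expresses the expected local purity in terms of the quantity $\p(\varphi^A\otimes\mu^B)$, and Theorem~\ref{TheCompositePurity} evaluates that quantity in terms of the capacities $N_A,N_B$ whenever $AB$ carries a composite classical subsystem. Since the hypotheses of Theorem~\ref{TheMainMain} are exactly the union of the hypotheses of these two results (irreducibility of $A$, $B$, $AB$, plus the existence of a composite classical subsystem, which by Lemma~\ref{LemPropClassSubs}(iii) forces transitivity of the relevant classical substructure), the proof reduces to substitution.

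Concretely, I would proceed in three short steps. First, verify that the assumptions of Theorem~\ref{TheMain1} are met: $A$ and $AB$ are irreducible by hypothesis (and irreducibility, as defined via the invariant inner product on $\hat A$, implicitly requires transitivity, so $A,B,AB$ qualify as transitive dynamical state spaces). Second, apply Theorem~\ref{TheMain1} to obtain
\[
   \mathbb{E}_\omega\p(\omega^A)=\frac{K_A-1}{K_A K_B-1}\cdot\frac{\p(\omega^{AB})}{\p(\varphi^A\otimes\mu^B)}
\]
for any pure $\varphi^A\in\Omega_A$. Third, invoke Theorem~\ref{TheCompositePurity}, whose hypotheses are satisfied, to replace $\p(\varphi^A\otimes\mu^B)$ by $(N_A-1)/(N_A N_B-1)$. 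Inverting this ratio and multiplying yields the stated formula.

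There is essentially no obstacle of substance here; the heavy lifting was done in the derivations of Theorem~\ref{TheMain1} (the Pauli-map/Haar-averaging argument using Lemma~\ref{LemFormulaPauliPurity} and Lemma~\ref{LemGlobalPauli}) and Theorem~\ref{TheCompositePurity} (the orthogonality relation for centered dynamical classical subsystems from Lemma~\ref{LemSymmClass}). The only thing worth double-checking while writing this proof is that $\varphi^A$ can indeed be chosen as one of the pure states $\omega^A_1$ of the classical subsystem featured in Theorem~\ref{TheCompositePurity}, so that the value of $\p(\varphi^A\otimes\mu^B)$ computed there actually matches the factor appearing in Theorem~\ref{TheMain1}; by transitivity of $A$ and the $\G_A$-invariance of purity (Lemma~\ref{LemPropPur}, item 4), the value $\p(\varphi^A\otimes\mu^B)$ is in any case independent of which pure $\varphi^A$ one picks, so this is automatic.
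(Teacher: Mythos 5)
Your proposal is correct and is exactly the paper's argument: the paper proves Theorem~\ref{TheMainMain} in one line by substituting Theorem~\ref{TheCompositePurity} into Theorem~\ref{TheMain1}. Your additional checks (that the hypotheses of both ingredients are met, and that the choice of pure state $\varphi^A$ is immaterial by transitivity and invariance of purity) are sound and merely make explicit what the paper leaves implicit.
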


This is the sought-for specialization of Theorem~\ref{TheMain1}. Both Theorem~\ref{TheMain1} and Theorem~\ref{TheMainMain}
give explicit expressions for the expected local purity of random bipartite states. While Theorem~\ref{TheMain1} is more general (it does
not assume the existence of a composite classical subsystem), it has the disadvantage of containing a term $\p(\varphi^A\otimes\mu^B)$
with no simple operational meaning. The statement of Theorem~\ref{TheMainMain} is operationally simpler, but makes stronger
assumptions on the state spaces.

Note also that a further simplification may be made in the case where $K=N^r$ for some integer $r$, a class of theories discussed in~\cite{Wootters86,Hardy01}. Then $r$ becomes the only parameter that determines the expected purity of a subsystem and $\mathbb{E}_\omega\p(\omega^A)\approx N_B^{r-1}$ (where the approximation is good if $N\gg 1$ for all systems/  subsystems under consideration).

\subsection{$GG'$-invariant faces: entanglement in symmetric subspaces}
\label{SubsecSym}
So far, we have computed the expected amount of entanglement (that is, the purity of the local reduced state) only
for the case that we draw the initial pure state from the full state space $AB$. In many cases, however, it is useful
to consider drawing random states \emph{under constraints}. As a paradigmatic ensemble, suppose we draw
a random pure quantum state $|\psi\rangle$ from the symmetric or antisymmetric subspace of $\C^n\otimes\C^n$. What can we
say about the expected local purity in this case?

We will see that Hilbert subspaces correspond to faces of the state space in the sense of
convex geometry. This will enable us to compute the average reduced purity with geometric methods, using
the invariant inner product introduced in earlier subsections.
Moreover, both symmetric and antisymmetric subspace are invariant under all transformations of the form $U\otimes U$.
This behaviour is a special case of the following general-probabilistic definition.
\begin{definition}[$GG'$-invariant face]
Let $AB$ be a composite dynamical state space. A face $\hat\F$ of $\hat\Omega_{AB}$ will be called \emph{$GG'$-invariant}
if for every $G\in\G_A$ there is some $G'\in\G_B$ such that $G\otimes G'$ maps $\F$ into itself.
The stabilizer subgroup $\{G\in\G_{AB}\,\,|\,\, G\F=\F\}$ will be called $\G_\F$. The face $\F$ will be called \emph{transitive}
if for every pair of extreme points (pure states) $\alpha,\omega\in\F$ there is some $G\in\G_\F$ such that $G\alpha=\omega$.
If $\F$ is transitive, we define the \emph{$\F$-maximally mixed state} $\mu_\F$ as
\[
   \mu_\F:=\int_{G\in\G_\F} G\omega\, dG,
\]
where $\omega$ is any pure state in $\F$.
For every $\omega\in\F$, we set $\bar\omega:=\omega-\mu_\F$, and $\bar\F:=\{\bar\omega\,\,|\,\,\omega\in\F\}$.
$\F$ is called \emph{irreducible} if $\G_\F$ acts irreducibly on $\bar \F$.
\end{definition}
Note that $GG'$-invariance is not a symmetric notion: if for every $G$, there is some $G'$ such that $G\otimes G'$ stabilizes $\F$,
then it is not necessarily the case that to every $G'$, there is some $G$ such that $G\otimes G'$ stabilizes $\F$.

\begin{example}
\label{ExGG}
Here are some examples of transitive irreducible $GG'$-invariant faces:
\begin{itemize}
\item The symmetric subspace $\F_{\rm SYM}$ on $n$-level quantum systems $A$ and $B$. If $\pi$ is the projector onto the
symmetric subspace, then $\F_{\rm SYM}=\{\rho\,\,|\,\, \Tr(\rho\pi)=1\}$. This shows that $\F_{\rm SYM}$ is in fact a face of
the state space on $AB$. If $G=U\cdot U^\dagger\in \G_A$ is some unitary transformation, then $G\otimes G\F_{\rm SYM}=\F_{\rm SYM}$,
so it is $GG'$-invariant with $G'=G$.

There is a
one-to-one correspondence between the symmetric subspace and the Hilbert space $\mathcal{H}:=\C^{n(n+1)/2}$: every state in $\F_{\rm SYM}$
corresponds to a density matrix on $\mathcal{H}$, and every map reversible transformation in $\G_{\F_{\rm SYM}}$ corresponds to
a unitary on $\mathcal{H}$. We know that the unitaries act transitively on $\mathcal{H}$, and we have already shown that this action
is irreducible (cf.\ Lemma~\ref{LemClifford} in the appendix), so $\F_{\rm SYM}$ is transitive and irreducible.
\item The totally antisymmetric subspace in $A\otimes B$, where $A\simeq \C^n$ and $B\simeq \C^n\otimes \C^n$.
If $G=U\cdot U^\dagger$, then this set of quantum states is invariant with respect to $G\otimes G'$, where
$G'=U\otimes U \cdot U^\dagger\otimes U^\dagger$.
\item The face $\F$ of $AB$ with $A=B\simeq\C^n$ which consists only of the maximally entangled state,
$\F=\{|\psi_+\rangle\langle\psi_+|\}$, where $|\psi_+\rangle=\frac 1 {\sqrt{n}} \sum_{i=1}^n |i\rangle\otimes|i\rangle$. It is $U\otimes\bar U$-invariant.
\item \emph{Coin tossing in environment with record.} Suppose we have a classical coin (corresponding to one bit), and
an environment whose state can be described by a bit string of length $n-1$. Initially, the joint system is in an uncorrelated state
$\varphi^{AB}=\varphi^A\otimes \varphi^B$. Since the coin's state is known to use (say, it shows heads), $\varphi^A$ is pure; on the other
hand, we may not have full knowledge about the environment, meaning that $\varphi^B$ is mixed.

In contrast to the usual coin tossing example of Subsection~\ref{SubsecUnify}, we additionally assume that the environment always
contains a perfect record of the coin's state. In other words, if the coin's state is $0$ (or heads), the environment's state must be some
bit string from a set $S_0$; if the coin's state is $1$ (tails), it must be some bit string from a set $S_1$. Both $S_0$ and $S_1$ are
subsets of $\{0,1\}^{n-1}$, have empty intersection, and we assume that they have the same cardinality.

As a consequence, the possible configurations of the joint system are restricted to be either of the form $0s_0$ or $1s_1$, where
$s_0\in S_0$ and $s_1\in S_1$. The possible states (that is, probability distributions) have their full support on those configurations.
This defines a face $\F$ of the joint state space $AB$.

Since permutations can map every configuration of this kind to every other, $\F$ is a transitive. Moreover, it is $GG'$-invariant:
if $G\in\G_A$ is a reversible transformation, there are only two possibilities. First, $G$ is the identity. Then, setting $G'$ also equal
to identity yields a map $G\otimes G'$ which preserves $\F$. Second, $G$ is a bit flip. Then, let $T$ be a permutation which
swaps $S_0$ and $S_1$ (leaving all other strings invariant). Then $G\otimes T$ preserves $\F$.

We will study this scenario further in Example~\ref{ExCCT2} below.
\end{itemize}
\end{example}
The following lemma will be useful.
\begin{lemma}
\label{LemFaceMixture}
If $\F$ is a transitive $GG'$-invariant face, and if $A$ is transitive, then $\mu_\F^A=\mu^A$.
\end{lemma}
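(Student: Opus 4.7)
The plan is to show that $\mu_\F^A$ is fixed by every $G \in \G_A$, and then invoke the uniqueness of the $\G_A$-invariant state (noted right after Definition~\ref{DefMaxMix}) to conclude $\mu_\F^A = \mu^A$.

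By construction, $\mu_\F$ is $\G_\F$-invariant: left-invariance of the Haar measure on $\G_\F$ gives $T\mu_\F = \mu_\F$ for every $T \in \G_\F$. To use this, I would unpack the $GG'$-invariance hypothesis. Given $G \in \G_A$, pick $G' \in \G_B$ such that $(G \otimes G')\F \subseteq \F$. Since the assumptions of Subsection~\ref{SubsecBipartite} give $\G_A \otimes \G_B \subseteq \G_{AB}$, the element $G \otimes G'$ lies in the compact group $\G_{AB}$, so the powers $\{(G \otimes G')^n\}_{n\geq 0}$ have a subsequence converging to the identity. Applying these powers to the closed set $\F$ promotes the inclusion to the equality $(G \otimes G')\F = \F$, i.e.\ $G \otimes G' \in \G_\F$.

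Therefore $(G \otimes G')\mu_\F = \mu_\F$. Passing to the reduced state on $A$ via any linear functional $L^A$, and using that $G'$ is a symmetry (so $u^B \circ G' = u^B$),
\[
   L^A(\mu_\F^A) = (L^A \otimes u^B)\bigl((G \otimes G')\mu_\F\bigr) = \bigl((L^A \circ G) \otimes (u^B \circ G')\bigr)(\mu_\F) = (L^A \circ G)(\mu_\F^A) = L^A(G\mu_\F^A).
\]
Since this holds for all $L^A$, we get $G \mu_\F^A = \mu_\F^A$ for every $G \in \G_A$. Transitivity of $A$ identifies $\mu^A$ as the unique $\G_A$-invariant state, so $\mu_\F^A = \mu^A$.

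The only subtlety is the compactness argument upgrading $(G \otimes G')\F \subseteq \F$ to equality; everything else is a one-line reduction applied to a single invariance equation, relying on the fact that tracing out $B$ intertwines a local transformation $G \otimes G'$ on $AB$ with $G$ on $A$ precisely because $G'$ preserves the unit effect.
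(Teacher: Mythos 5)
Your proof is correct and follows essentially the same route as the paper's: it reduces the invariance $(G\otimes G')\mu_\F=\mu_\F$ to $G\mu_\F^A=\mu_\F^A$ using $u^B\circ G'=u^B$, and then invokes the uniqueness of the $\G_A$-invariant state. Your compactness argument upgrading $(G\otimes G')\F\subseteq\F$ to $(G\otimes G')\F=\F$ makes explicit a step the paper's proof takes for granted when it silently treats $G\otimes G'$ as an element of the stabilizer $\G_\F$ (so that the Haar-integral definition of $\mu_\F$ is preserved).
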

\proof
Let $G\in\G_A$ be arbitrary, and let $E^A$ be any effect on $A$, then
\begin{eqnarray*}
   E^A(G^{-1} \mu_\F^A) &=& (E^A\circ G^{-1})\otimes u^B(\mu_\F)=E^A\otimes u^B\left(\strut G^{-1}\otimes\Id(G\otimes G'(\mu_\F))\right)
   =E^A\otimes (u^B\circ G')(\mu_\F)\\
   &=& E^A\otimes u^B(\mu_\F)=E^A(\mu_\F^A).
\end{eqnarray*}
Since this is true for all $E^A$, we must have $G^{-1}\mu_\F^A=\mu_\F^A$. But the only state which is invariant with respect
to all reversible transformations on $A$ is $\mu^A$, hence $\mu_\F^A=\mu^A$.
\qed

Another technical ingredient is this:
\begin{lemma}
\label{LemPerp}
Let $AB$ be a transitive dynamical state space and $\bar F$ a transitive irreducible $GG'$-invariant face. Then
$\bar\F\perp\hat\mu_\F$.
\end{lemma}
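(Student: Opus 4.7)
The plan is to apply the group-averaging trick that already powers the Schur-style arguments in this section (compare the derivation immediately preceding Lemma~\ref{LemInnerProdMaxMix}). The single observation needed is that $\hat\mu_\F$ is \emph{pointwise} fixed by every $G\in\G_\F$. Indeed, $G\mu_\F=\mu_\F$ by invariance of the Haar measure used to define $\mu_\F$, and $G\mu^{AB}=\mu^{AB}$ because $G\in\G_\F\subseteq\G_{AB}$ and the global maximally mixed state is preserved by every reversible transformation; subtracting gives $G\hat\mu_\F=\hat\mu_\F$.

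First I would fix an arbitrary $\omega\in\F$ and set $\bar\omega:=\omega-\mu_\F\in\bar\F$. Using orthogonality of each $G\in\G_\F$ with respect to the invariant inner product together with $G\hat\mu_\F=\hat\mu_\F$,
\[
   \langle\bar\omega,\hat\mu_\F\rangle=\langle G\bar\omega,G\hat\mu_\F\rangle=\langle G\bar\omega,\hat\mu_\F\rangle
   \qquad (G\in\G_\F).
\]
Averaging over $\G_\F$ against the Haar measure and pulling the integral inside the (continuous, linear) first slot of the inner product,
\[
   \langle\bar\omega,\hat\mu_\F\rangle=\Bigl\langle\int_{G\in\G_\F}G\bar\omega\,dG,\;\hat\mu_\F\Bigr\rangle.
\]

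Next I would show that this inner integral is zero. Splitting $\bar\omega=\omega-\mu_\F$ and using linearity,
\[
   \int_{G\in\G_\F} G\bar\omega\,dG=\int_{G\in\G_\F}G\omega\,dG-\int_{G\in\G_\F}G\mu_\F\,dG.
\]
The second integral equals $\mu_\F$ since $G\mu_\F=\mu_\F$. For the first, if $\omega$ is pure then $\int_G G\omega\,dG=\mu_\F$ is exactly the definition of $\mu_\F$ (the result is independent of which pure state is used, by transitivity of $\G_\F$ on the pure states of $\F$, in direct analogy to Definition~\ref{DefMaxMix}); for a general (possibly mixed) $\omega\in\F$ one writes $\omega$ as a convex combination of pure states of $\F$, which is possible because $\F$ is a face of $\Omega_{AB}$, and extends by linearity. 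Hence $\int_G G\bar\omega\,dG=\mu_\F-\mu_\F=0$, and therefore $\langle\bar\omega,\hat\mu_\F\rangle=0$, proving $\bar\F\perp\hat\mu_\F$.

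There is no real obstacle here: the argument is a direct specialization of the ``average against an invariant vector'' trick, and it does not even require irreducibility of $\F$. The only minor subtlety to flag is that the group-average formula $\int_G G\omega\,dG=\mu_\F$ must be extended from pure to mixed $\omega\in\F$, and this is where the convex (face) structure of $\F$ is used.
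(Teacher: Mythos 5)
Your proof is correct and follows essentially the same route as the paper: orthogonality of each $G\in\G_\F$, pointwise invariance of $\hat\mu_\F$, and a Haar average giving $\int_{G\in\G_\F}G\bar\omega\,dG=0$. The only difference is that you spell out why that integral vanishes (reducing mixed states in $\F$ to pure ones via the face structure), a step the paper leaves implicit, and you correctly observe that irreducibility is not actually needed.
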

\proof
Suppose that $\bar a\in\bar\F$ and $G\in\G_\F$, then
\[
   \langle\bar a,\hat\mu_\F\rangle=\langle\hat G \bar a,\hat G \hat\mu_\F\rangle=\langle\hat G \bar a,\hat\mu_\F\rangle
   =\left\langle \int_{G\in\G_\F}\hat G \bar a\, dG,\hat\mu_\F\right\rangle=\langle 0,\hat\mu_\F\rangle=0.
\]
This proves the claim.
\qed

\begin{theorem}
\label{TheFace}
Let $\F$ be a transitive and irreducible $GG'$-invariant face on an irreducible dynamical state space $AB$, where $A$ is also transitive and irreducible.
Drawing a state $\omega^{AB}\in\F$ with fixed purity $\p(\omega^{AB})$ randomly, the expected local purity is
\[
   \mathbb{E}_\omega^\F \p(\omega^A) = \left\|\pi_{\bar \F} (X^A\otimes u^B)^\wedge\right\|_2^2
   \cdot\frac{K_A-1}{K_\F -1}\cdot\left(\strut\p\left(\omega^{AB}\right)-\p(\mu_\F)\right),
\]
where $K_\F$ denotes the dimension of $\F$, $X^A$ is any Pauli map on $A$, and $\pi_{\bar \F}$ denotes the orthogonal
projection onto the span of $\bar F$ (using the invariant inner product on $(AB)^\wedge$).
\end{theorem}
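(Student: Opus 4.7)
The plan is to mimic the proof of Theorem~\ref{TheMain1}, replacing $\G_{AB}$ by the face stabilizer $\G_\F$ throughout and restricting the relevant Pauli-type functional to the invariant subspace $\bar\F$.

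First, for any Pauli map $X^A$ on $A$, Lemma~\ref{LemFormulaPauliPurity} (with $\mathcal{H}=\G_A$) gives
\[
   \frac{\p(\omega^A)}{K_A-1}=\int_{G_A\in\G_A}\bigl(X^A\circ G_A(\omega^A)\bigr)^2\,dG_A.
\]
By $GG'$-invariance, for each $G_A\in\G_A$ there exists $G_B\in\G_B$ with $G_A\otimes G_B\in\G_\F$; since $u^B\circ G_B=u^B$, the integrand equals $\bigl(X^A\otimes u^B\bigr)\bigl((G_A\otimes G_B)\omega^{AB}\bigr)^2$. The random $\omega^{AB}$ is produced by applying a Haar-random element of $\G_\F$ to a fixed seed, hence is $\G_\F$-invariant in distribution. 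Taking $\mathbb{E}_\omega^\F$ thus makes the integrand independent of $G_A$, and the integral trivializes:
\[
   \mathbb{E}_\omega^\F\,\frac{\p(\omega^A)}{K_A-1}=\mathbb{E}_\omega^\F\bigl(X^A\otimes u^B(\omega^{AB})\bigr)^2.
\]

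Next I evaluate the right-hand side geometrically. Decompose $\hat\omega^{AB}=\hat\mu_\F+\bar\omega^{AB}$ with $\bar\omega^{AB}\in\bar\F$; by Lemma~\ref{LemPerp} these summands are orthogonal. Writing $\hat Y:=(X^A\otimes u^B)^\wedge$, we have
\[
   X^A\otimes u^B(\omega^{AB})=\langle\hat Y,\hat\mu_\F\rangle+\langle\pi_{\bar\F}\hat Y,\bar\omega^{AB}\rangle.
\]
The first term vanishes: Lemma~\ref{LemFaceMixture} gives $\mu_\F^A=\mu^A$, so $X^A\otimes u^B(\mu_\F)=X^A(\mu^A)=0$, and similarly $X^A\otimes u^B(\mu^{AB})=0$, so $\langle\hat Y,\hat\mu_\F\rangle=X^A\otimes u^B(\hat\mu_\F)=0$. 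Consequently the square reduces to $\langle\pi_{\bar\F}\hat Y,\bar\omega^{AB}\rangle^2$ with no surviving cross-term.

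Finally, I compute the remaining expectation by Schur's lemma \emph{inside} $\bar\F$, on which $\G_\F$ acts irreducibly by hypothesis. Writing $\bar\omega^{AB}=G\bar\varphi^{AB}$ for Haar-random $G\in\G_\F$ and noting $\pi_{\bar\F}\hat Y\in\bar\F$, the argument of Lemma~\ref{LemFormulaPauliPurity} applied to the $(K_\F-1)$-dimensional space $\bar\F$ yields
\[
   \mathbb{E}_\omega^\F\langle\pi_{\bar\F}\hat Y,\bar\omega^{AB}\rangle^2=\frac{\|\pi_{\bar\F}\hat Y\|_2^2\,\|\bar\omega^{AB}\|_2^2}{K_\F-1}.
\]
Using the orthogonal decomposition $\p(\omega^{AB})=\|\hat\omega^{AB}\|_2^2=\|\hat\mu_\F\|_2^2+\|\bar\omega^{AB}\|_2^2=\p(\mu_\F)+\|\bar\omega^{AB}\|_2^2$ and multiplying by $K_A-1$ gives the stated identity. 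I expect the main obstacle to be the middle step: simultaneously managing Lemma~\ref{LemPerp}, Lemma~\ref{LemFaceMixture}, and the identity $\mu_\F^A=\mu^A$ to force $\langle\hat Y,\hat\mu_\F\rangle=0$, after which the Schur step proceeds essentially as in Theorem~\ref{TheMain1}.
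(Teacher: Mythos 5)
Your proposal is correct and follows essentially the same route as the paper's proof: reduce $\p(\omega^A)$ to $\mathbb{E}\left(X^A\otimes u^B(\omega^{AB})\right)^2$ via $GG'$-invariance and Haar invariance of the ensemble, kill the $\mu_\F$ component using Lemma~\ref{LemFaceMixture} and Lemma~\ref{LemPerp}, and finish with the Schur-type average of Lemma~\ref{LemFormulaPauliPurity} on the irreducible $\G_\F$-action on the span of $\bar\F$. The only cosmetic difference is that the paper packages the projection step by normalizing $\frac 1 c X^A\otimes u^B$ into a ``Pauli map on $\F$'' with $c=\left\|\pi_{\bar\F}(X^A\otimes u^B)^\wedge\right\|_2$, whereas you carry the projector $\pi_{\bar\F}$ explicitly through the computation.
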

Taking Lemma~\ref{LemGlobalPauli} into account, it is clear that this theorem reduces to Theorem~\ref{TheMain1}
in the case of $\F=\Omega_{AB}$.
\proof Abbreviate $\omega:=\omega^{AB}$.
Similarly as in Definition~\ref{DefPauli}, call a linear map $X:AB\to\R$ a \emph{Pauli map on $\F$} if $X(\mu_\F)=0$ and
$\langle \bar X,\bar X\rangle=1$, where $\bar X\in\bar\F$ is the vector with $\langle\bar X,\bar\omega\rangle=X(\omega)$
for all $\omega\in\F$. If $X$ is a Pauli map on $\F$, the same calculation as in the proof of Lemma~\ref{LemFormulaPauliPurity}
shows that
\[
   \int_{G\in\G_\F} \left(\strut X\circ G(\omega)\right)^2\, dG=\frac{\langle\bar\omega,\bar\omega\rangle}{K_\F-1}
   \qquad\mbox{for all }\omega\in\F.
\]
Due to Lemma~\ref{LemPerp}, we also have $\langle\hat\omega,\hat\omega\rangle=\langle\bar\omega-\hat\mu_\F,
\bar\omega-\hat\mu_\F\rangle=\langle\bar\omega,\bar\omega\rangle+\langle\hat\mu_\F,\hat\mu_\F\rangle$, hence
$\langle\bar\omega,\bar\omega\rangle=\p(\omega)-\p(\mu_\F)$.
According to Lemma~\ref{LemFaceMixture}, we have $X^A\otimes u^B(\mu_\F)=X^A(\mu_\F^A)=X^A(\mu^A)=0$,
hence $\frac 1 c X^A \otimes u^B$ is a Pauli map on $\F$, where $c=\left\| \overline{X^A\otimes u^B}\right\|_2
=\left\| \pi_{\bar \F}(X^A\otimes u^B)^\wedge\right\|_2$. Similarly as in the proof of Lemma~\ref{TheMain1}, we have for all $\omega\in\F$
\begin{eqnarray*}
   \mathbb{E}_\omega^\F \frac{\p(\omega^A)}{K_A-1}&=&\mathbb{E}_\omega^\F \int_{G\in\G_A} \left(\strut
   X^A\circ G(\omega^A)\right)^2\, dG=\mathbb{E}_\omega^\F \int_{G\in\G_A}\left(\strut X^A\otimes u^B(G\otimes G'(\omega))\right)^2\, dG\\
   &=& \int_{G\in\G_A}\mathbb{E}_\omega^\F\left(\strut X^A\otimes u^B(G\otimes G'(\omega))\right)^2\, dG
   =\mathbb{E}_\omega^\F\left(X^A\otimes u^B(\omega)\right)^2 = \int_{G\in\G_\F}\mathbb{E}_\omega^\F\left(X^A\otimes u^B(\omega)\right)^2\, dG\\
   &=&\mathbb{E}_\omega^\F\, c^2\cdot\int_{G\in\G_\F} \left(\frac 1 c X^A\otimes u^B(G\omega)\right)^2\, dG
   =c^2\frac{\langle\bar\omega,\bar\omega\rangle}{K_\F -1}.
\end{eqnarray*}
Combining all the little results proves the claim.
\qed

In the quantum case, we can give an explicit description of the projector $\pi_{\bar\F}$:
\begin{lemma}
\label{LemProjQ}
Suppose that $A$ is a quantum state space, and $\pi$ is a projector onto some subspace. This subspace
defines a face $\F$ of $\Omega_A$ by $\F=\{\rho\,\,|\,\, \Tr(\pi\rho)=1\}=\{\rho\,\,|\,\,\pi\rho\pi=\rho\}$. Then $\pi_{\bar \F}(M)
=\pi M \pi-\pi \Tr(\pi M \pi)/(\Tr \pi)$.
\end{lemma}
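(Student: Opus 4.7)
The plan is to verify directly that the proposed formula $R(M) := \pi M \pi - \pi \Tr(\pi M \pi)/\Tr\pi$ satisfies the two defining properties of the orthogonal projection onto $\mathrm{span}(\bar\F)$: (i) $R(M) \in \mathrm{span}(\bar\F)$ for every $M \in \hat A$, and (ii) $M - R(M) \perp \mathrm{span}(\bar\F)$ with respect to the invariant inner product. Since the invariant inner product on $\hat A$ is a positive multiple of the Hilbert--Schmidt inner product (cf.\ Example~\ref{ExPurityQuantum}), orthogonality can be checked via the trace form $\Tr(X Y)$.

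First I would identify $\mathrm{span}(\bar\F)$ concretely. The face $\F = \{\rho : \pi\rho\pi = \rho\}$ consists of density matrices supported on the range of $\pi$, and the subgroup of unitaries stabilising this subspace acts transitively on its pure states, so the $\F$-maximally mixed state is $\mu_\F = \pi/\Tr\pi$. Every Bloch vector $\bar\omega = \omega - \mu_\F$ then satisfies $\pi\bar\omega\pi = \bar\omega$ and $\Tr\bar\omega = 0$. Taking real linear combinations of such vectors, and conversely writing any traceless Hermitian $Y$ with $\pi Y \pi = Y$ as a difference of two positive, normalised, $\pi$-supported matrices minus appropriate multiples of $\mu_\F$, gives the clean characterisation
\[
  \mathrm{span}(\bar\F) = \{Y \in \hat A : \pi Y \pi = Y\}.
\]

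For step (i), applying $\pi\cdot\pi$ to $R(M)$ leaves both summands $\pi M \pi$ and $\pi$ invariant, and $\Tr R(M) = \Tr(\pi M \pi) - \Tr(\pi M \pi) = 0$, so $R(M) \in \mathrm{span}(\bar\F)$. For step (ii), take any $Y \in \mathrm{span}(\bar\F)$; cyclicity of the trace together with $\pi Y \pi = Y$ yields $\Tr(\pi M \pi \, Y) = \Tr(M\, \pi Y \pi) = \Tr(MY)$, while $\Tr(\pi Y) = \Tr Y = 0$. Therefore
\[
  \Tr\bigl((M - R(M))\,Y\bigr) = \Tr(MY) - \Tr(\pi M \pi\, Y) + \frac{\Tr(\pi M \pi)}{\Tr\pi}\,\Tr(\pi Y) = 0.
\]
Uniqueness of the orthogonal projection then gives $\pi_{\bar\F}(M) = R(M)$.

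There is no real obstacle here: the only mildly delicate point is the identification $\mu_\F = \pi/\Tr\pi$, which one may either invoke from Definition~\ref{DefMaxMix} applied to $\G_\F$ or verify directly by noting that $\pi/\Tr\pi$ is the unique state in $\F$ fixed by every $G \in \G_\F$. Once this is in hand, the argument is a three-line computation and purely algebraic.
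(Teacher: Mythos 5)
Your proof is correct, but it is organized differently from the paper's. The paper verifies that $Q(M):=\pi M\pi-\pi\Tr(\pi M\pi)/\Tr\pi$ is an orthogonal projector in the abstract sense --- it checks idempotency $Q\circ Q=Q$ and self-adjointness of $\Tr(MQ(N))$ under exchange of $M$ and $N$ --- and then identifies ${\rm ran}\,Q$ with ${\rm span}\,\bar\F$ by proving the inclusion $\bar\F\subseteq{\rm ran}\,Q$ and closing the gap with a dimension count ($\dim{\rm ran}\,Q\leq m^2-1=\dim{\rm span}\,\bar\F$ for $m=\Tr\pi$). You instead pin down the target subspace explicitly as $\{Y\in\hat A: \pi Y\pi=Y\}$ (your converse inclusion, splitting a traceless $\pi$-supported Hermitian $Y$ into positive parts, is the one step the paper's dimension count replaces) and then verify the two defining properties of the orthogonal projection onto that subspace: $R(M)$ lands in it, and the residual $M-R(M)$ is trace-orthogonal to it. This lets you skip both the idempotency check and the symmetry check, and your observation that the invariant inner product is a positive multiple of the Hilbert--Schmidt form correctly reduces orthogonality to the trace form. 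Your treatment of $\mu_\F=\pi/\Tr\pi$ matches the paper's (which also asserts it without detailed proof). Both arguments are sound; yours is slightly more self-contained in that it never needs to count dimensions, at the price of the explicit characterisation of ${\rm span}\,\bar\F$.
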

\proof
Define $Q(M):=\pi M \pi-\pi \Tr(\pi M \pi)/(\Tr \pi)$ for all $M\in\hat A$, i.e.\ for all $M$ with $M=M^\dagger$ and $\Tr M=0$.
Clearly, $Q(M)^\dagger=Q(M)$ and $\Tr Q(M)=0$, hence we have a map $Q:\hat A \to \hat A$. Furthermore,
\[
   Q(Q(M))=\pi Q(M)\pi-\frac\pi{\Tr\pi} \Tr(\pi Q(M)\pi)=Q(M)-\frac\pi{\Tr\pi} \Tr(Q(M))=Q(M),
\]
hence $Q$ is a projector. Denote the Hilbert space dimension by $d$, then we get for the inner product on $\hat A$
\[
   \frac{d-1}d\langle M,Q(N)\rangle=\Tr(M Q(N))= \Tr\left[
      M\left(
         \pi N \pi - \frac\pi{\Tr \pi} \Tr(\pi N \pi)
      \right)
   \right]
   =\Tr(M\pi N\pi)-\frac d {d-1}\Tr(M\pi)\Tr(N\pi),
\]
and this expression is symmetric with respect to interchanging $M$ and $N$ (for the first addend due to the cyclicity of the trace).
Thus $Q$ is an orthogonal projector on $\hat A$.

The maximally mixed state $\mu_\F$ on the face is $\mu_\F=\pi/(\Tr \pi)$. Suppose that $M\in\bar\F$, i.e.\ there is some $\rho\in\Omega_A$
such that $M=\bar\rho=\rho-\pi/(\Tr \pi)$. Then direct calculation shows that $Q(M)=M$, i.e.\ $\bar\F\subseteq {\rm ran}\, Q$, and thus
${\rm span}\,\bar\F\subseteq {\rm ran}\, Q$. Now let $m:=\Tr\pi$ (the dimension of the subspace), then the term $\pi M \pi$ in the definition
of $Q$ creates an $m\times m$ block matrix, and the subsequent term $\pi \Tr(\pi M \pi)/(\Tr \pi)$ removes the trace of this block matrix,
leaving $m^2-1$ parameters. Thus, $\dim({\rm ran}\, Q)\leq m^2-1$. On the other hand, density matrices in $\F$ are described by $m^2-1$
parameters, so $\dim({\rm span}\,\bar\F)=m^2-1$. This proves that $\dim({\rm ran}\, Q)\leq \dim({\rm span}\, \bar\F)$. Altogether, this
proves that ${\rm span}\,\bar\F={\rm ran}\, Q$, so that $Q$ is the orthogonal projector onto the span of $\bar\F$ as claimed.
\qed

\begin{theorem}
\label{TheQFace}
Let $S$ be a subspace of dimension $N_S$ on a bipartite quantum state space $AB$ with Hilbert space
dimensions $N_A$ and $N_B$, with the property that for every unitary $U$ on $A$ there is a unitary $U'$ on $B$
such that $U\otimes U' S=S$.
 Drawing a state $\rho^{AB}$ on $S$ with fixed purity $\Tr\left[(\rho^{AB})^2\right]$ randomly, the expected
local quantum purity is
\[
  \mathbb{E}_\rho^S \Tr\left[(\rho^A)^2\right]=\frac 1 {N_A} + \frac{N_A^2-1}{N_S^2-1}\cdot\Tr\left[ \left( \pi(E_A\otimes \Id_B)\pi\right)^2\right] \cdot
  \left(\Tr\left[(\rho^{AB})^2\right]-\frac 1 {N_S}\right),
\]
where $E_A=E_A^\dagger$ is any matrix on $A$ with $\Tr E_A=0$ and $\Tr E_A^2=1$.
\end{theorem}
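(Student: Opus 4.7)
The plan is to apply Theorem~\ref{TheFace} to the face of $\Omega_{AB}$ cut out by the subspace $S$, translate the general-probabilistic ingredients into concrete quantum-mechanical expressions via Lemma~\ref{LemProjQ}, and convert from generalized to standard quantum purity at the end. Concretely, I take $\F:=\{\rho\in\Omega_{AB}\,:\,\Tr(\pi\rho)=1\}$, the set of density matrices supported on $S$. By hypothesis $\F$ is $GG'$-invariant. Moreover, every unitary of the form $V\oplus W$ relative to $\C^{N_A N_B}=S\oplus S^\perp$ preserves $\F$ and hence lies in $\G_\F$; as $V$ ranges over $U(N_S)$ one recovers the full unitary group acting on $\F$, identified with density matrices on $\C^{N_S}$. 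The standard quantum argument (Example~\ref{ExPurityQuantum}) then shows that $\F$ is transitive and irreducible, with $\mu_\F=\pi/N_S$ and $K_\F=N_S^2$.

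Next I invoke Theorem~\ref{TheFace} to obtain
\[
 \mathbb{E}_\omega^\F\,\p(\omega^A) = \|\pi_{\bar\F}(X^A\otimes u^B)^\wedge\|_2^2 \cdot \frac{N_A^2-1}{N_S^2-1}\cdot\bigl(\p(\omega^{AB})-\p(\mu_\F)\bigr),
\]
choosing the Pauli map $X^A(\rho)=\sqrt{N_A/(N_A-1)}\,\Tr(E_A\rho)$, with the normalization derived exactly as in Example~\ref{ExQuantumPauli}. Then $(X^A\otimes u^B)^\wedge$ is an explicit scalar multiple of $E_A\otimes\Id_B$ viewed in $(AB)^\wedge$. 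Lemma~\ref{LemProjQ} gives, for any $M\in(AB)^\wedge$, the projector $\pi_{\bar\F}(M)=\pi M\pi-\pi\Tr(\pi M\pi)/N_S$, from which a direct computation using $\pi M\pi\cdot\pi=\pi M\pi$ yields $\Tr[\pi_{\bar\F}(M)^2]=\Tr[(\pi M\pi)^2]-(\Tr(\pi M\pi))^2/N_S$. Specialising to $M=E_A\otimes\Id_B$ and multiplying by the Hilbert-Schmidt rescaling factor of the invariant inner product delivers the required norm in closed form.

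The step requiring the most care is showing that the trace correction $\Tr(\pi(E_A\otimes\Id_B)\pi)$ vanishes, so that only the squared-trace term $\Tr[(\pi(E_A\otimes\Id_B)\pi)^2]$ survives as claimed. I would exploit the $GG'$-invariance: for each $U\in U(N_A)$ there is some $U'$ with $(U\otimes U')\pi(U^\dagger\otimes U'^\dagger)=\pi$; tracing out $B$ gives $U\,\Tr_B(\pi)\,U^\dagger=\Tr_B(\pi)$ for every $U\in U(N_A)$, whence Schur's lemma forces $\Tr_B(\pi)=(N_S/N_A)\,\Id_A$, so $\Tr((E_A\otimes\Id_B)\pi)=(N_S/N_A)\Tr E_A=0$. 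Finally I translate back using $\p(\rho)=\frac{d}{d-1}\Tr(\rho^2)-\frac{1}{d-1}$ (with $d=N_A N_B$ globally and $d=N_A$ for the reduction); $\p(\mu_\F)$ evaluates to $(N_AN_B/N_S-1)/(N_AN_B-1)$, and a straightforward bookkeeping of the prefactors collapses Theorem~\ref{TheFace} to the stated identity.
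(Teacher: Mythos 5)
Your proposal is correct and follows essentially the same route as the paper: apply Theorem~\ref{TheFace} to the face cut out by $S$, use the Pauli map $X^A(\rho)=\sqrt{N_A/(N_A-1)}\Tr(E_A\rho)$ together with Lemma~\ref{LemProjQ}, kill the trace correction via the $GG'$-invariance, and convert purities at the end. The only (harmless) variation is that you establish $\Tr(\pi(E_A\otimes\Id_B)\pi)=0$ by showing $\Tr_B\pi=(N_S/N_A)\Id_A$ via Schur's lemma, whereas the paper averages $U^\dagger E_A U$ over the unitary group; both arguments are valid and equivalent in substance.
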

\proof
The set of states on $AB$ that have full support on $S$ is a face $\F$ on the quantum state space $\Omega_{AB}$.
Since $E_A$ is traceless, we have $\int_U U^\dagger E_A U \, dU=0$. Hence, if $\pi$ is the orthogonal projector onto $S$, we have
\begin{eqnarray*}
  \Tr(\pi(E_A\otimes\Id_B)\pi)&=&\Tr(\pi(E_A\otimes\Id_B)=\Tr\left(\strut U\otimes U' \pi U^\dagger\otimes U'^\dagger (E_A\otimes\Id_B)\right)
  =\Tr\left(\pi U^\dagger\otimes U'^\dagger (E_A\otimes\Id_B)U\otimes U'\right)\\
&=&\Tr\left(\strut\pi(U^\dagger E_A U)\otimes \Id_B\right)=\int_U \Tr\left(\strut\pi(U^\dagger E_A U)\otimes \Id_B\right)\, dU
  =\Tr\left(\strut\pi\left(\int_U U^\dagger E_A U\, dU\right)\otimes \Id_B\right)=0.
\end{eqnarray*}
It is easy to check that $X^A(\rho):=\sqrt{\frac{N_A}{N_A -1}}\Tr(E_A\rho)$ is a Pauli map on $A$.
Thus, $X^A\otimes u^B(\rho)=\sqrt{\frac{N_A}{N_A -1}}\Tr(E_A\otimes\Id_B \rho)$,
and so $\left(X^A\otimes u^B\right)^\wedge=\xi_{AB} E_A\otimes\Id_B$, where $\xi_{AB}=\frac{N_A N_B-1}{N_A N_B}\sqrt{\frac{N_A}{N_A -1}} $.
Using Lemma~\ref{LemProjQ}, this proves that
\[
   \pi_{\bar\F}\left(X^A\otimes u^B\right)^\wedge=\xi_{AB}\, \pi_{\bar F}\left(E_A\otimes\Id_B\right)
   =\xi_{AB}\left(
      \pi(E_A\otimes \Id_B)\pi-\frac{\pi \Tr\left(\strut\pi(E_A\otimes\Id_B)\pi\right)}{\Tr\pi}
   \right),
\]
such that
\[
  \left\|\pi_{\bar\F}\left(X^A\otimes u^B\right)^\wedge\right\|_2^2=\xi_{AB}^2
  \|\pi(E_A\otimes\Id_B)\pi\|_2^2=\xi_{AB}^2 \Tr\left[
     \left(\strut \pi(E_A\otimes\Id_B)\pi\right)^2
  \right]\cdot\frac{N_A N_B}{N_A N_B-1}.
\]
In order to apply Theorem~\ref{TheFace}, note that $K_A=N_A^2$ and $K_\F=N_S^2$, and the maximally mixed state on $\F$ is
$\mu_\F=\pi/(\Tr\pi)$, such that
\[
  \p(\mu_\F)=\frac{N_A N_B}{N_A N_B-1}\Tr(\mu_\F^2)-\frac 1 {N_A N_B-1}=\frac 1 {N_A N_B-1}\left(\frac{N_A N_B}{N_S} -1\right).
\]
Expressing all the purities $\p(\sigma)$ in terms of $\Tr(\sigma^2)$ via eq.~(\ref{eqQuantumPurity})
and some algebraic simplification proves the claim.
\qed

In Theorem~\ref{ThePuritySymm} in Subsection~\ref{SubsecSymm}, we apply this result to compute the average
entanglement in symmetric and antisymmetric quantum subspaces. For the remainder of this subsection, we discuss the case of classical
probability theory.
In this case, we can explicitly compute the norm of the projector appearing in Theorem~\ref{TheFace}:
\begin{lemma}
Suppose that $A$ and $B$ are classical state spaces over $N_A$ and $N_B$ outcomes, and $\F$ is any $GG'$-invariant
face on $AB$, corresponding to $N_\F$ outcomes. Then
$\displaystyle \left\| \pi_{\bar\F}\left(X^A \otimes u^B\right)^\wedge\right\|_2^2 = \frac{N_\F (N_A N_B -1)}{N_A N_B (N_A-1)}$
and $\displaystyle \p(\mu_\F)=\frac{N_A N_B/N_\F -1}{N_A N_B-1}$.
\end{lemma}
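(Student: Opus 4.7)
In classical probability theory $AB$ is the simplex on $N_A N_B$ outcomes, so any face $\F$ corresponds to the probability distributions supported on some subset $S\subseteq\{1,\ldots,N_A\}\times\{1,\ldots,N_B\}$ with $|S|=N_\F$. My first step would be to extract the combinatorial content of $GG'$-invariance: for every row permutation $G\in\G_A=S_{N_A}$ there must exist a column permutation $G'\in\G_B=S_{N_B}$ with $(G\otimes G')(S)=S$, and since a column permutation preserves row cardinalities this forces the row count $n_i:=|\{j:(i,j)\in S\}|$ to be invariant under $G$, hence $n_i=N_\F/N_A$ is constant in $i$. This row-uniformity is exactly what makes the final answer depend only on $N_\F$ and not on the shape of $S$; I expect it to be the main conceptual obstacle, after which the rest reduces to linear-algebraic bookkeeping.

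The purity $\p(\mu_\F)$ is then immediate: the $\F$-maximally mixed state is the uniform distribution on $S$, so $\sum_{ij}\mu_\F(i,j)^2=1/N_\F$, and plugging into the classical purity formula~(\ref{eqPurityClassical}) with $n=N_A N_B$ outcomes yields the claimed $(N_A N_B/N_\F-1)/(N_A N_B-1)$.

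For the projection norm I would use the Pauli map from the classical Pauli example, $X^A(p):=p_1-\frac{1}{N_A-1}\sum_{i\geq 2}p_i$, with Bloch representative $\hat X^A_1=(N_A-1)/N_A$ and $\hat X^A_i=-1/N_A$ for $i\geq 2$. Using the invariant inner product $\langle\hat r,\hat s\rangle=\frac{N_A N_B}{N_A N_B-1}\sum_{ij}\hat r_{ij}\hat s_{ij}$ on $(AB)^\wedge$, a direct calculation solving $\langle(X^A\otimes u^B)^\wedge,\hat r\rangle=X^A\otimes u^B(\hat r)$ gives $(X^A\otimes u^B)^\wedge_{ij}=\alpha\,\hat X^A_i$ independent of $j$, with $\alpha:=\frac{N_A(N_A N_B-1)}{N_A N_B(N_A-1)}$. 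As a sanity check, its squared norm equals $(N_A N_B-1)/(N_A-1)=1/\p(\varphi^A\otimes\mu^B)$, consistent with Lemma~\ref{LemGlobalPauli} and Theorem~\ref{TheCompositePurity}.

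Finally, since $\mathrm{span}\,\bar\F=\{\hat r:\hat r_{ij}=0\text{ off }S,\ \sum_S \hat r_{ij}=0\}$, a short verification identifies the orthogonal projection as $\pi_{\bar\F}(\hat r)_{ij}=(\hat r_{ij}-\bar r)\mathbbm{1}_S(i,j)$ with $\bar r:=\frac{1}{N_\F}\sum_S\hat r_{kl}$. Applied to $\hat R_{ij}=\alpha\hat X^A_i$, the key cancellation is
\[
  \bar R=\frac{\alpha}{N_\F}\sum_i n_i\hat X^A_i=\frac{\alpha}{N_A}\sum_i\hat X^A_i=0,
\]
using exactly the row-uniformity from the first step together with $\sum_i\hat X^A_i=0$. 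The projection is therefore simply the restriction of $\hat R$ to $S$, and its squared norm equals $\frac{N_A N_B}{N_A N_B-1}\,\alpha^2\cdot\frac{N_\F}{N_A}\cdot\frac{N_A-1}{N_A}$, which simplifies to the claimed value $N_\F(N_A N_B-1)/(N_A N_B(N_A-1))$.
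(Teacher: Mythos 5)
Your proof is correct, but it proceeds by a genuinely different route than the paper. The paper's own argument is indirect: it computes $\p(\mu_\F)$ exactly as you do, but then obtains the projector norm by applying Theorem~\ref{TheFace} \emph{backwards} to the special case of a random pure state in $\F$ --- since classical pure bipartite states have pure marginals, $\mathbb{E}_\omega^\F\p(\omega^A)=1$ is known a priori, and substituting this together with $K=N$ and the value of $\p(\mu_\F)$ into the formula of Theorem~\ref{TheFace} lets one solve for $\left\|\pi_{\bar\F}(X^A\otimes u^B)^\wedge\right\|_2^2$. That argument is shorter but leans entirely on the already-established machinery and on the absence of classical entanglement. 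Your direct computation buys two things the paper's proof does not make explicit: (i) the combinatorial content of $GG'$-invariance in CPT, namely that the support $S$ of the face must have uniform row counts $n_i=N_\F/N_A$ (in particular $N_A$ divides $N_\F$), which is precisely why the answer depends only on $N_\F$ and not on the shape of $S$; and (ii) an independent verification of the formula that does not presuppose Theorem~\ref{TheFace}. All the individual steps check out: the Bloch representative $(X^A\otimes u^B)^\wedge_{ij}=\alpha\hat X^A_i$ with $\alpha=\frac{N_A(N_AN_B-1)}{N_AN_B(N_A-1)}$ is consistent with Lemma~\ref{LemGlobalPauli} and Theorem~\ref{TheCompositePurity}, the projection onto ${\rm span}\,\bar\F$ is correctly identified as restriction to $S$ minus the mean over $S$, the mean vanishes by row-uniformity and $\sum_i\hat X^A_i=0$, and the final arithmetic gives $\frac{N_\F(N_AN_B-1)}{N_AN_B(N_A-1)}$ as claimed.
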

\proof
We use Theorem~\ref{TheFace}. First, the maximally mixed state $\mu_\F$ is just the uniform distribution on the classical outcomes
that generate $\F$, that is, a probability vector with $N_\F$ entries equal to $1/N_\F$ and all others zero. Recalling the formula for
purity in the classical case, eq.~(\ref{eqPurityClassical}), gives $\displaystyle \p(\mu_\F)=\frac{N_A N_B/N_\F -1}{N_A N_B-1}$.
Now we apply Theorem~\ref{TheFace} to the special case where the initial state is pure: $\p(\omega^{AB})=1$. Since there are
no entangled states in classical probability theory, we know that $\omega^A$ must be pure as well, i.e.\ $\p(\omega^A)=1$, and so is
its expectation value.
Using that $K=N$ classically, substituting all these identities into the statement of Theorem~\ref{TheFace} yields the norm of the projector.
\qed

Substituting this result back into Theorem~\ref{ThePuritySymm}, we get a very simple statement regarding $GG'$-invariant faces in
classical probability theory. The proof involves only simple algebra and is thus omitted.
\begin{theorem}
\label{TheClassGG}
Suppose that $A$ and $B$ are classical state spaces, and $\F$ is any $GG'$-invariant face on $AB$. If we draw a random state $\omega^{AB}$
in $\F$ of fixed purity $\p(\omega^{AB})$, then the expected purity of the local marginal is
\[
   \mathbb{E}_\omega^\F \p(\omega^A)=\p(\omega^{AB}\upharpoonright_\F),
\]
where the right-hand side denotes the purity of $\omega^{AB}$, computed by treating $\omega^{AB}$ as a state on the smaller
state space $\F$.
\end{theorem}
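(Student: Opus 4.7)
The plan is to derive Theorem~\ref{TheClassGG} by specializing Theorem~\ref{TheFace} to the classical setting, using the explicit evaluations provided in the lemma that immediately precedes it, and then recognizing the resulting affine function of $\p(\omega^{AB})$ as exactly the restricted purity $\p(\omega^{AB}\upharpoonright_\F)$.

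First I would write down Theorem~\ref{TheFace} in the classical case. Since classical state spaces satisfy $K=N$, we have $K_A=N_A$ and $K_\F=N_\F$, so the prefactor becomes
\[
  \left\|\pi_{\bar\F}(X^A\otimes u^B)^\wedge\right\|_2^2\cdot\frac{N_A-1}{N_\F-1}.
\]
The preceding lemma gives $\left\|\pi_{\bar\F}(X^A\otimes u^B)^\wedge\right\|_2^2 = \frac{N_\F(N_AN_B-1)}{N_AN_B(N_A-1)}$ and $\p(\mu_\F)=\frac{N_AN_B/N_\F-1}{N_AN_B-1}$. Substituting these into Theorem~\ref{TheFace} and cancelling the factor $N_A-1$ yields an explicit affine expression
\[
  \mathbb{E}_\omega^\F\p(\omega^A) \;=\; \frac{N_\F(N_AN_B-1)}{N_AN_B(N_\F-1)}\,\p(\omega^{AB}) \;-\; \frac{N_AN_B-N_\F}{N_AN_B(N_\F-1)}.
\]

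Next I would unpack the right-hand side. Because $\omega^{AB}$ lies in $\F$, its probability vector $(p_1,\ldots,p_{N_AN_B})$ is supported on the $N_\F$ outcomes that generate $\F$. Treating it as a state on the smaller classical space $\F$ and applying the classical purity formula~(\ref{eqPurityClassical}) with $n=N_\F$ gives $\p(\omega^{AB}\upharpoonright_\F)=\frac{N_\F}{N_\F-1}\sum_i p_i^2-\frac{1}{N_\F-1}$, while applying the same formula with $n=N_AN_B$ gives $\p(\omega^{AB})=\frac{N_AN_B}{N_AN_B-1}\sum_i p_i^2-\frac{1}{N_AN_B-1}$. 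Solving the second identity for $\sum_i p_i^2$ and substituting into the first expresses $\p(\omega^{AB}\upharpoonright_\F)$ as an affine function of $\p(\omega^{AB})$.

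Finally I would check that the two affine expressions agree: both the coefficient of $\p(\omega^{AB})$ and the constant term match after cancellation. This is bookkeeping rather than a genuine difficulty; no inequality or limit argument is needed, since everything reduces to the identity $\sum_i p_i^2$ appearing in both purity formulas. The only thing to be careful about is the special case $N_\F=1$, which is trivial (the face contains a single pure state and both sides equal $1$), and the consistency of treating $\omega^{AB}$ as a state on $\F$, which is justified because $\F$ is a face and so its supporting outcomes form a classical subsystem on which $\omega^{AB}$ restricts to a genuine probability distribution.
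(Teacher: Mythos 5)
Your proposal is correct and follows exactly the route the paper intends: substitute the preceding lemma's values for $\left\|\pi_{\bar\F}(X^A\otimes u^B)^\wedge\right\|_2^2$ and $\p(\mu_\F)$ into Theorem~\ref{TheFace}, use $K=N$ for classical state spaces, and verify by elementary algebra that the resulting affine function of $\p(\omega^{AB})$ coincides with $\p(\omega^{AB}\upharpoonright_\F)$ (the paper omits this algebra as "simple"). Your explicit verification, including the degenerate case $N_\F=1$, fills in precisely the bookkeeping the authors left out.
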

Explicitly, if $\{\omega^{AB}_j\}_{j=1}^{N_\F}$ denote the entries of the probability vector, then
\[
   \p(\omega^{AB}\upharpoonright_\F) = \frac{N_\F}{N_\F -1} \sum_{j=1}^{N_F} \left(\omega^{AB}_j\right)^2 - \frac 1 {N_\F -1}
\]
(compare this with eq.~(\ref{eqPurityClassical})). The result of Theorem~\ref{TheClassGG}  is no surprise at all: we get the same result in
the unconstrained case, Theorem~\ref{TheMainMain}, where the prefactors are cancelled due to $N=K$.

\begin{example}[Coin tossing in environment with record, part 2]
\label{ExCCT2}
Recall the scenario from the last paragraph of Example~\ref{ExGG}. Does the record in the environment affect the randomization of
the coin? Suppose the coin is initially in the pure state $0$ (or heads). Then the environment's initial state $\varphi^B$ must have
full support on $S_0$; for simplicity, we assume that it is otherwise completely unknown, i.e.\ the uniform mixture over $S_0$.
Applying Theorem~\ref{TheClassGG}, a little calculation shows that
\[
   \mathbb{E}_\omega^\F \p(\omega^A)=\frac 1 {2\# S_0 -1}.
\]
This is exactly the same result as Theorem~\ref{TheMainMain} gives us for an unconstrained environment $B$ with $N_B=\# S_0$. This
is an environment which has half as many possible states as in the first scenario, where the possible environment configurations are
in $S_0\cup S_1$ with cardinality $2\# S_0$. Intuitively, the informed environment loses one bit of randomization power due to redundancy.
The same conclusion holds for correlated initial states.
\end{example}

\subsection{Theories which are not locally tomographic}
\label{SubsecNotLocallyTomo}
In the previous sections, we have considered certain types of composite state spaces: transitive locally tomographic compositions $AB$
of state spaces $A$ and $B$. At present date, there are no known examples of such theories beyond quantum theory and subspaces within it such as classical probability theory. The search for such theories has just started recently, but preliminary results suggest that theories of this kind might be rare~\cite{Singularity}.

On the other hand, it is known that there is a multitude of transitive composite state spaces $AB$ if the requirement of local tomography is
dropped~\cite{HowardCozmin}. As it turns out, some of our results are easily generalized to theories without local tomography. We will sketch
this in this subsection, but leave a more detailed analysis of such theories to future work.
We start with a trial definition of arbitrary compositions of state spaces which need not be locally tomographic.
\begin{definition}
\label{DefComposition}
If $A$ and $B$ are state spaces, a \emph{composition} $AB$ is any state space which can be decomposed as $AB=(A\otimes B)\oplus C$ such
that the following properties hold:
\begin{itemize}
\item If $\omega^A\in\Omega_A$ and $\omega^B\in\Omega_B$, then $\omega^A\otimes\omega^B\in\Omega_{AB}$.
\item For $\omega^{AB}\in\Omega_{AB}$, define the vector $\omega^A$ via $L(\omega^A):= L\otimes u^B (\omega^{AB})$ for all
linear maps $L:A\to\R$. (An analogous definition yields $\omega^B$.) Then $\omega^A\in\Omega_A$ and $\omega^B\in\Omega_B$.
\end{itemize}
Moreover, if $A$ and $B$ are dynamical state spaces, a \emph{dynamical} composition $AB$ is assumed to have the following property:
if $T_A\in\G_A$ and $T_B\in\G_B$, then $(T_A\otimes T_B)\oplus\Id_C\in\G_{AB}$.
\end{definition}
Physically, this means that the global state space $AB$ has some degrees of freedom (collected in $C$) that cannot be accessed locally at
$A$ or $B$, not even by comparing correlations of measurement outcomes.
It follows from the second property that $u^{AB}=u^A\otimes u^B$, because $u^A\otimes u^B$ is a linear functional which gives
unity on all global states.
In this notation, the tensor product of two linear functionals on $A$ and $B$ is assumed to act as the zero functional on $C$, i.e.\
$L^A\otimes L^B\equiv L^A\otimes L^B\oplus 0^C$.

The most famous example of a composite state space which is not locally tomographic is quantum theory over the reals~\cite{HardyWootters2010}:
\begin{example}[Real quantum theory]
Let $A=\{\rho\in\R^{m\times m}\,\,|\,\, \Tr\rho=1, \rho=\rho^T, \rho\geq 0\}$, that is, the set of $(m\times m)$-density matrices with all
real entries. The order unit is
$u^A(\rho)=\Tr\rho$. This is a state space of dimension $K_A=m(m+1)/2$. Similarly, let $B$ be the state space of $(n\times n)$-density matrices
with all real entries. We assume $m,n\geq 2$.

Then, a composition of $A$ and $B$ is given by the set of all $(mn)\times(mn)$-density matrices with all real entries.
Since $K_{AB}>K_A K_B$, this is not a locally tomographic composition, but it is easy to check that it satisfies all the properties of
Definition~\ref{DefComposition}.
\end{example}

Since $A$, $B$, and $AB$ are state spaces in the usual sense, the results of Subsections~\ref{SubsecGenBloch} to~\ref{SubsecPaulis} apply
without any modification. As usual, if $AB$ is transitive, it has a decomposition $AB=(AB)^\wedge\oplus \R\cdot\mu^{AB}$. Moreover,
we claim that the locally inaccessible subspace $C$ is part of the Bloch subspace, $C\subseteq (AB)^\wedge$. To see this, let $c\in C$, then
$u^{AB}(c)=u^A\otimes u^B(c)=0$. In more detail, we have the decomposition
\[
   (AB)^\wedge=(\hat A\otimes\hat B)\oplus (\hat A \otimes\mu^B)\oplus  (\mu^A\otimes\hat B)\oplus C
\]
which follows from the fact that the right-hand side is a subspace $V\subset AB$ of dimension $\dim V=\dim(AB)-1$, and $u^{AB}=u^A\otimes u^B$
evaluates to zero on all vectors of $V$.
Now suppose that $AB$ is irreducible -- then all the addends above are mutually orthogonal in the invariant
inner product on $(AB)^\wedge$. For example, to see that $\hat A\otimes\hat B\perp C$, let $\hat a\in \hat A$, $\hat b\in\hat B$ and $c\in C$, and
compute
\[
   \langle \hat a\otimes\hat b,c\rangle=\langle(T_A\otimes \Id_B\oplus\Id_C)\hat a\otimes\hat b,(T_A\otimes\Id_B\oplus\Id_C)c\rangle
   =\langle T_A\hat a\otimes\hat b,c\rangle=\langle\int_{\G_A} T_A \hat a\, dT_A \otimes \hat b,c\rangle=\langle 0,c\rangle=0,
\]
using the same argumentation as in Subsection~\ref{SubsecBipartite}. How is the maximally mixed state $\mu^{AB}$ on $AB$ related
to $\mu^A$ and $\mu^B$? To answer this question, extend the inner product on $(AB)^\wedge$ to an inner product on all of $AB$:
for $v,w\in AB$ with decomposition $v=\hat v+v_0 \mu^{AB}$ and $w=\hat w+w_0\mu^{AB}$, where $v_0=u^{AB}(v)$ and
$w_0=u^{AB}(w)$, we define
\[
   \langle v,w\rangle:=\langle \hat v,\hat w\rangle+v_0 w_0.
\]
This inner product is clearly invariant with respect to all reversible transformations from $\G_{AB}$, and it is constructed such that
$\mu^{AB}\perp (AB)^\wedge$. Taking into account the orthogonality of subspaces mentioned above, this proves that
\[
   C\oplus\R\cdot\mu^{AB}=\left[ (\hat A\otimes\hat B)\oplus (\hat A\otimes \mu^B)\oplus (\mu^A\otimes \hat B)\right]^\perp.
\]
By integration as above, it is also easy to see that $\mu^A\otimes\mu^B$ is perpendicular to all the three subspaces
$\hat A\otimes\hat B$, $\hat A\otimes \mu^B$, and $\mu^A\otimes\hat B$. Thus, $\mu^A\otimes\mu^B\in C\oplus\R\cdot \mu^{AB}$.
In other words, there is some constant $\xi\in\R$ and vector $\mu^C\in C$ such that $\mu^A\otimes\mu^B=\xi\cdot\mu^{AB}-\mu^C$.
Applying $u^{AB}$ to this equation, using that $u^{AB}(\mu^{AB})=u^{AB}(\mu^A\otimes\mu^B)=1$ and $u^{AB}(\mu^C)=0$, we get
$\xi=1$, and thus
\[
   \mu^{AB}=\mu^A\otimes\mu^B+\mu^C.
\]
The Bloch vector $\mu^C$ can be interpreted as the collection of all locally inaccessible degrees of freedom of the maximally
mixed state on $AB$.
For symmetry reasons, we think it is plausible that $\mu^C=0$ for many theories, but we were unable to prove this in generality.

Following the argumentation in Subsection~\ref{SubsecBipartite}, it is interesting to see that both Lemma~\ref{LemInnerProdMaxMix}
and Lemma~\ref{LemGlobalPauli} remain valid if $AB$ is not locally tomographic with only minor modifications. We now assume that $A$,
$B$, and $AB$ are transitive dynamical state spaces, where $A$ and $AB$ are irreducible. Lemma~\ref{LemInnerProdMaxMix} becomes
\[
   \langle\hat x\otimes\mu^B,\hat y\otimes\mu^B\rangle=\left(\p(\varphi^A\otimes \mu^B)-\|\mu^C\|_2^2\right)\langle \hat x,\hat y\rangle,
\]
while Lemma~\ref{LemGlobalPauli} gets modified to stating that
\[
   \frac{X^A\otimes u^B}{\|(X^A\otimes u^B)^\wedge\|_2}=\sqrt{\p(\varphi^A\otimes\mu^B)-\|\mu^C\|_2^2} \,X^A\otimes u^B
\]
is a Pauli map on $AB$.

 The only modification of Theorem~\ref{TheMain1}
is that $K_A K_B$ has to be replaced by $K_{AB}$, the dimension of the composite state space. The rest of the proof remains unaltered.
\begin{theorem}
Let $A$, $B$, and $AB$ be transitive dynamical state spaces, where $A$ and $AB$ are irreducible, and $AB$ is not necessarily
locally tomographic. Draw a state
$\omega^{AB}\in\Omega_{AB}$ of fixed purity $\p(\omega^{AB})$ randomly. Then, the expected purity of the local reduced state $\omega^A$ is
\[
   \mathbb{E}_\omega \p(\omega^A)=\frac{K_A-1}{K_{AB}-1}\cdot\frac{\p(\omega^{AB})}
   {\p(\varphi^A\otimes \mu^B)-\|\mu^C\|_2^2},
\]
where $\varphi^A$ is an arbitrary pure state on $A$, $\mu^B$ is the maximally mixed state on $B$, and $\mu^C$ is the vector which contains
the locally inaccessible degrees of freedom of the maximally mixed state $\mu^{AB}$ on $AB$.
\end{theorem}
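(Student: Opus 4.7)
The plan is to imitate the proof of Theorem~\ref{TheMain1} line by line, adjusting each step to account for the extra locally inaccessible summand $C$ in the Bloch decomposition. The excerpt already establishes the key structural facts I will need: the orthogonal decomposition
\[
   (AB)^\wedge = (\hat A\otimes\hat B)\oplus(\hat A\otimes\mu^B)\oplus(\mu^A\otimes\hat B)\oplus C,
\]
the identity $\mu^{AB}=\mu^A\otimes\mu^B+\mu^C$ with $\mu^C\in C$, and the convention that $L^A\otimes L^B$ vanishes on $C$. Taken together these replace, with minimal changes, the geometric ingredients of Subsection~\ref{SubsecBipartite}.

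First I would redo Lemma~\ref{LemInnerProdMaxMix}. As before, $(\hat x,\hat y):=\langle\hat x\otimes\mu^B,\hat y\otimes\mu^B\rangle$ is a $\G_A$-invariant inner product on $\hat A$, so by Lemma~\ref{LemInvariantInnerProduct} it equals $c\,\langle\hat x,\hat y\rangle$ for some constant $c>0$. To identify $c$, pick a pure $\varphi^A\in\Omega_A$ and compute the Bloch vector of $\varphi^A\otimes\mu^B$. Using $\mu^{AB}=\mu^A\otimes\mu^B+\mu^C$, I get
\[
   (\varphi^A\otimes\mu^B)^\wedge=\hat\varphi^A\otimes\mu^B-\mu^C.
\]
Since $\hat\varphi^A\otimes\mu^B\in\hat A\otimes\mu^B$ is orthogonal to $\mu^C\in C$, Pythagoras gives $\p(\varphi^A\otimes\mu^B)=\|\hat\varphi^A\otimes\mu^B\|_2^2+\|\mu^C\|_2^2$, so $c=\p(\varphi^A\otimes\mu^B)-\|\mu^C\|_2^2$, which is the modified Lemma~\ref{LemInnerProdMaxMix} quoted in the excerpt.

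Next I would prove the modified Lemma~\ref{LemGlobalPauli}. Let $X^A$ be a Pauli map on $A$; then $X^A\otimes u^B$ annihilates $\mu^{AB}$ (since $X^A(\mu^A)=0$ and it vanishes on $C$), is zero on $\hat A\otimes\hat B$ and $\mu^A\otimes\hat B$ and on $C$, and achieves its maximum only on $\hat A\otimes\mu^B$. The same maximization argument as before, together with the modified inner product constant just derived, yields
\[
   \|(X^A\otimes u^B)^\wedge\|_2=\frac{1}{\sqrt{\p(\varphi^A\otimes\mu^B)-\|\mu^C\|_2^2}},
\]
so that $X:=\sqrt{\p(\varphi^A\otimes\mu^B)-\|\mu^C\|_2^2}\,X^A\otimes u^B$ is a Pauli map on $AB$ in the sense of Definition~\ref{DefPauli}. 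Finally I would run the exact chain of equalities from the proof of Theorem~\ref{TheMain1}: use Lemma~\ref{LemFormulaPauliPurity} on $A$ with $\mathcal H=\G_A$ to express $\p(\omega^A)/(K_A-1)$ as a Haar average of $(X^A\circ G(\omega^A))^2$, lift $G$ to $G\otimes\Id_B\in\G_{AB}$, use invariance of the Haar measure on $\G_{AB}$ (together with the local-subgroup embedding guaranteed by the definition of a dynamical composition) to re-average over the full global group, and apply Lemma~\ref{LemFormulaPauliPurity} a second time to the Pauli map $X$ on $AB$. Since Lemma~\ref{LemFormulaPauliPurity} only knows about the global dimension, the resulting denominator is $K_{AB}-1$ rather than $K_AK_B-1$, and the squared normalization factor of $X$ contributes the remaining $1/(\p(\varphi^A\otimes\mu^B)-\|\mu^C\|_2^2)$.

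The main obstacle is bookkeeping rather than conceptual: one must check that the four subspaces in the decomposition of $(AB)^\wedge$ are genuinely mutually orthogonal (the cross-terms involving $C$ need the $\G_A$- and $\G_B$-integration trick the excerpt uses for $\hat A\otimes\hat B\perp C$, applied also to $\hat A\otimes\mu^B\perp C$ and $\mu^A\otimes\hat B\perp C$), and that $X^A\otimes u^B$ is really zero on the entire $C$-summand under the paper's convention $L^A\otimes L^B\equiv L^A\otimes L^B\oplus 0^C$. Once those orthogonalities are in place, the rest of the argument is a mechanical repetition of the locally tomographic proof with $K_AK_B\mapsto K_{AB}$ and $\p(\varphi^A\otimes\mu^B)\mapsto\p(\varphi^A\otimes\mu^B)-\|\mu^C\|_2^2$.
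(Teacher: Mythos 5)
Your proposal is correct and follows essentially the same route as the paper: the paper's own argument consists precisely of the two modified lemmas you derive (the $\|\mu^C\|_2^2$-corrected versions of Lemma~\ref{LemInnerProdMaxMix} and Lemma~\ref{LemGlobalPauli}, both resting on the orthogonal decomposition of $(AB)^\wedge$ with the extra summand $C$ and on $\mu^{AB}=\mu^A\otimes\mu^B+\mu^C$), followed by rerunning the proof of Theorem~\ref{TheMain1} verbatim with $K_AK_B$ replaced by $K_{AB}$. Your computation $(\varphi^A\otimes\mu^B)^\wedge=\hat\varphi^A\otimes\mu^B-\mu^C$ and the Pythagoras step identifying the constant $c=\p(\varphi^A\otimes\mu^B)-\|\mu^C\|_2^2$ match the paper's reasoning exactly.
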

We leave it open whether the results of Subsection~\ref{SubsecCapacity} (including a more operational formulation of the main result
as in Theorem~\ref{TheMainMain} involving only $N$ and $K$) can be generalized to composite state spaces that are not locally tomographic: 
this seems to depend strongly on the question under what circumstances $\mu^{AB}=\mu^A\otimes\mu^B$ remains true such that $\mu^C=0$.

\section{Summary and outlook}
In summary, we considered general probabilistic theories and asked how mixed (impure) subsystems tend to be in such theories after undergoing reversible dynamics. We showed that under certain limited assumptions subsystems tend be close to maximally mixed in appropriate limits, and the amount of purity is given by a simple formula. Showing this involved developing various generalizations of the corresponding quantum concepts, e.g.\ purity, which are of interest in themselves. Our results also apply to subspaces within quantum theory, and we calculated for example the expected purity of subsystems in symmetric and antisymmetric spaces.  
 
We view our results as a significant first step towards formulating the second law as a `meta-theorem', meaning a theorem that requires weaker assumptions than for example all of quantum theory. More generally we envisage a formulation of statistical mechanics independent of theory details. Such a formulation can be expected to be useful for example for black hole thermodynamics, where one cannot be certain that standard quantum theory applies, but may accept some more basic assumptions.

\appendix
\section{Irreducibility of the Clifford group}
\label{AppIrr}
Here we prove a lemma which is used in the main text in Example~\ref{ExQuantumPauli}. It shows that our generalized
definition of a Pauli map reduces to the usual Pauli operators for the case of several qubits in quantum theory. It exploits
the well-known fact that the Clifford group is a 2-design~\cite{DiVincenzo02}.
\begin{lemma} 
 \label{LemClifford}
 The Clifford group $C_k$ on $k$ qubits acts irreducibly by conjugation on the real vector space of traceless Hermitian
 $2^k\times 2^k$-matrices.
 \label{LemCliffordIrrep}
 \end{lemma}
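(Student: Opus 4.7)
The plan is to work in the Pauli basis. Two structural facts are needed. First, taking one Hermitian representative per $\pm$-equivalence class, the non-identity Paulis $\{P_\alpha\}_{\alpha \in \F_2^{2k}\setminus\{0\}}$ form an orthogonal real basis of $\hat A$; there are $4^k-1$ of them, matching $\dim_\R \hat A$. Second, the Pauli group $P_k$ sits inside the Clifford group $C_k$, and any two Paulis commute or anticommute via $P_\beta P_\alpha P_\beta^\dagger = (-1)^{\langle\alpha,\beta\rangle}P_\alpha$, where $\langle\cdot,\cdot\rangle$ is the standard non-degenerate symplectic form on $\F_2^{2k}$ (the sign is well-defined independently of the chosen representatives). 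Moreover, $C_k$ acts transitively on non-identity Paulis modulo sign, the same transitivity already invoked inside Example~\ref{ExQuantumPauli}.

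Let $V\subseteq\hat A$ be a nonzero real $C_k$-invariant subspace; the goal is to show $V=\hat A$. Pick $0\neq M\in V$ and expand $M=\sum_{\alpha\neq 0}c_\alpha P_\alpha$ with some $c_{\alpha_0}\neq 0$ (the coefficient on the identity vanishes because $M$ is traceless). Since $P_\beta\in C_k$, each $P_\beta M P_\beta^\dagger$ lies in $V$, and so does the real linear combination
\[
   N := \sum_{\beta\in\F_2^{2k}}(-1)^{\langle\alpha_0,\beta\rangle}\,P_\beta M P_\beta^\dagger = \sum_{\alpha}c_\alpha\left(\sum_\beta(-1)^{\langle\alpha_0+\alpha,\beta\rangle}\right)P_\alpha.
\]
The inner sum equals $4^k$ when $\alpha=\alpha_0$ and vanishes otherwise (by non-degeneracy of the symplectic form over $\F_2$), so $N=4^k\,c_{\alpha_0}\,P_{\alpha_0}\in V$, which gives $P_{\alpha_0}\in V$.

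Now I invoke transitivity: for every $\alpha\neq 0$ there is $U\in C_k$ with $U P_{\alpha_0}U^\dagger=\pm P_\alpha$, hence $P_\alpha\in V$. Since $\{P_\alpha\}_{\alpha\neq 0}$ spans $\hat A$ over $\R$, we obtain $V=\hat A$, which is irreducibility over $\R$.

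I do not expect a serious obstacle: the core argument is essentially a one-step Fourier transform over $\F_2^{2k}$ followed by an orbit argument. The only care needed is bookkeeping of the $\pm$ sign conventions in the Pauli group, and verifying the two structural facts quoted in the first paragraph, both of which are entirely standard~\cite{Gottesman98,DiVincenzo02}.
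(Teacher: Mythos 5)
Your proof is correct, but it takes a genuinely different route from the paper's. The paper complexifies the invariant subspace, transports it to $\mathcal{H}\otimes\mathcal{H}$ via the Choi--Jamiolkowski-type isomorphism $\Phi$, and then invokes the fact that the Clifford group is a unitary $2$-design (together with partial transposition identities and Schur's Lemma) to conclude that the only invariant subspaces are the span of the maximally entangled vector and its orthogonal complement. You instead stay entirely in the real Pauli basis: conjugation by the Pauli subgroup of $C_k$, followed by a character sum over $\F_2^{2k}$, isolates a single non-identity Pauli inside any nonzero invariant subspace, and transitivity of the Clifford action on non-identity Paulis (modulo sign) -- the same fact the paper already uses in Example~\ref{ExQuantumPauli} -- then fills out the whole space. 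Your argument is more elementary and self-contained, needing only the symplectic commutation structure of the Pauli group rather than the $2$-design theorem of~\cite{DiVincenzo02}; what the paper's heavier machinery buys is the stronger statement that the \emph{complexified} action has no nontrivial invariant subspaces (and, along the way, the explicit twirling identity), which is in the same spirit as the $2$-design considerations appearing elsewhere in the text. Both proofs establish the lemma as stated; the only points you should make explicit if writing this up are the two ``structural facts'' you defer to the literature, in particular that the quotient of $C_k$ by the Pauli group surjects onto $Sp(2k,\F_2)$, which is what underlies the transitivity claim.
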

\proof
We use the notation from Example~\ref{ExQuantumPauli}. If there is a real subspace $S\subseteq \hat A$ which is invariant with
respect to all Clifford maps, i.e.\ $U S U^\dagger\subseteq S$ for all $U\in C_k$, then its complexification
\[
   S':=S+i S :=\{S_1+i S_2\,\,|\,\, S_1,S_2\in S\}\subseteq \mathcal{B(H)}
\]
is a complex subspace of the set of all complex matrices $\mathcal{B(H)}$ on the Hilbert space $\mathcal{H}=\left(\C^2\right)^{\otimes k}$
which is also invariant with respect to all Clifford maps. Fix any orthonormal basis $\{|i\rangle\}_{i=1}^{2^k}$ on $\mathcal{H}$, and
define a complex-linear map $\Phi:\mathcal{B}\to \mathcal{H}\otimes\mathcal{H}$ (which is related to the infamous Choi-Jamiolkowski isomorphism) by
\[
   \Phi(M):=\sum_{i,j=1}^{2^k} \langle i|M|j\rangle\,|i\rangle\otimes |j\rangle.
\]
It is a linear isomorphism which satisfies $\langle\Phi(M),\Phi(N)\rangle=\Tr(M^\dagger N)$. Moreover, we have
$\Phi(U M U^\dagger)=(U\otimes \bar U)\Phi(M)$ for all unitaries $U$, where $\bar U$ denotes the complex-conjugate of $U$
with respect to the given basis. It follows that the complex subspaces of $\mathcal{B(H)}$ which are invariant under conjugation
with respect to all unitaries $U\in C_k$
are in one-to-one correspondence with the complex subspaces of $\mathcal{H}\otimes\mathcal{H}$ which are invariant under
$U\otimes\bar U$ for all unitaries $U\in C_k$.

An obvious invariant subspace in $\mathcal{B(H)}$ consists of all complex
multiples of the identity $\id$. The image is $\Phi(\id)=\sqrt{2^k}|\psi_m\rangle=\sum_i |i\rangle\otimes|i\rangle$, which
reproduces the well-known fact that multiples of the maximally entangled state $|\psi_m\rangle$ are invariant with respect
to transformations of the form $U\otimes \bar U$. In order to prove the lemma, we have to show that this subspace and its
orthogonal complement (consisting of the traceless matrices in $\mathcal{B(H)}$ respectively of the vectors that are
orthogonal to $|\psi_m\rangle$) are the only non-trivial subspaces which are $C_k$-invariant.

It is well-known~\cite{DiVincenzo02} that the Clifford group is a $2$-design, i.e.
\begin{equation}
   \frac 1 {|C_k|} \sum_{U \in C_k} (U\otimes U)M(U^\dagger\otimes U^\dagger)
   =\int_{U\in U(2^k)} (U\otimes U)M(U^\dagger\otimes U^\dagger)\, dU
   =\frac{2\Tr(\pi_s M \pi_s)}{2^k(2^k+1)} \pi_s + \frac{2\Tr(\pi_a M \pi_a)}{2^k(2^k -1)}\pi_a
   \label{eq2Design}
\end{equation}
for all $M\in\mathcal{B(H)}$, where $\pi_s$ and $\pi_a$ denote the projectors onto the symmetric and
antisymmetric subspaces of $\mathcal{H}\otimes\mathcal{H}$, respectively. We can write $\pi_s=(\id+\mathbb{F})/2$
and $\pi_a=(\id-\mathbb{F})/2$, where $\mathbb{F}|i\rangle\otimes|j\rangle=|j\rangle\otimes|i\rangle$ is the swap
operator. It is easy to see that $2^k\langle\psi_m|M^{T_B}|\psi_m\rangle=\Tr(M\mathbb{F})$, and
$\mathbb{F}^{T_B}=2^k |\psi_m\rangle\langle\psi_m|$, if $T_B$ denotes the partial transposition on the second system.
Moreover, it holds $(A\otimes B\rho C\otimes T)^{T_B}=A\otimes D^T \rho^{T_B} C\otimes B^T$~\cite{Bruss}.
Using these identities and applying $T_B$ to eq.~(\ref{eq2Design}), we get
\[
   \frac 1 {|C_k|} \sum_{U \in C_k} (U\otimes \bar U)N(U^\dagger\otimes \bar U^\dagger)
   =\int_{U\in U(2^k)} (U\otimes \bar U)N(U^\dagger\otimes \bar U^\dagger)\, dU
   =\frac{\Tr(\pi_m^\perp N \pi_m^\perp)}{2^{2k}-1}\pi_m^\perp +\Tr(\pi_m N \pi_m) \pi_m,
\]
where $\pi_m:=|\psi_m\rangle\langle\psi_m|$ and $\pi_m^\perp:=\id-\pi_m$. By Schur's Lemma, it follows that the
one-dimensional subspace spanned by $|\psi_m\rangle$ and its orthogonal complement are the only non-trivial
subspaces which are invariant with respect to $U\otimes\bar U$ for all $U\in C_k$.
\qed

\section{Purity in boxworld}
\label{SecPurityBoxworld}
As we show here, it is possible to define a notion of purity for \emph{generalized no-signalling theory}~\cite{Barrett07}, colloquially
called \emph{boxworld}, even though this theory is not transitive~\cite{boxworld}. However, it will turn out that the resulting notion
of purity does not have all the nice properties that hold in the transitive case.

To keep things simple, we will only consider the paradigmatic case of two observers (Alice and Bob), each
carrying a square state space (a so-called ``gbit'', as shown in, and discussed around, Figure~\ref{fig_gbit}). Operationally,
this means that both Alice and Bob carry two measurement devices with two outcomes each (``yes'' and ``no'');
local states are characterized by the two probabilities of the ``yes''-outcomes. Both probabilities can be chosen independently,
giving rise to two coordinates in a square state space.

The two local state spaces are equal: $A=B$ (we use the two different labels for convenience). Now we use a particular
representation of the square state space introduced in~\cite{boxworld}.
We define the set of normalized states $\Omega_A$ as the convex hull of the four pure states
\[
   \omega_{\pm\pm}:=\left(\begin{array}{c} 1 \\ \pm 1/\sqrt{2} \\ \pm 1/\sqrt{2} \end{array}\right).
\]
As usual, the cone of unnormalized states is $A_+:=\R_0^+\cdot \Omega_A$, and the order unit is $u^A=(1,0,0)^T$, if we denote
effects by vectors (such that $u^A(\omega)=\langle u^A,\omega\rangle$ in the usual inner product).
It turns out that the cone of effects $A_+^*$ is generated by the four effects
\[
   Y=\left(\begin{array}{c} 1/2 \\ 1/\sqrt{2} \\ 0 \end{array}\right),\qquad u^A-Y=\left(\begin{array}{c} 1/2 \\ -1/\sqrt{2} \\ 0 \end{array}\right),
   \qquad
   Z=\left(\begin{array}{c} 1/2 \\ 0 \\1/\sqrt{2} \end{array}\right),\qquad u^A-Z=\left(\begin{array}{c} 1/2 \\ 0 \\ -1/\sqrt{2} \end{array}\right).
\]
The square state space is transitive. As discussed in Example~\ref{ExPurityGbit}, the group of reversible transformations
$\G_A$ is the dihedral group $D_4$. In the particular representation chosen here, it acts as on the $y$- and $z$-components of
a state vector $\omega$, and leaves the $x$-component (the normalization) invariant. The maximally mixed state is
$\mu^A=\left(\begin{array}{c} 1 \\ 0 \\ 0 \end{array}\right)$. The set of unnormalized states on $AB$ is defined as follows:
\[
   (AB)_+:=\left\{ \omega\in A\otimes B \,\,|\,\, E^A\otimes F^B(\omega)\geq 0\mbox{ for all }E^A\in A_+^*, F^B\in B_+^*\right\}.
\]
That is, these are all the vectors with the property that all local measurements yield positive outcome probabilities. The bipartite
(normalized) state space $\Omega_{AB}$ consists of all $\omega\in (AB)_+$ with $u^{AB}(\omega)=1$, where the order unit
is, as always, $u^{AB}=u^A\otimes u^B$. Since $AB$ is $9$-dimensional, $\Omega_{AB}$ is an $8$-dimensional polytope, known
as the \emph{no-signalling polytope}.

What are the pure states in $AB$? Clearly, the $16$ product states $\omega_{\pm\pm}\otimes \omega_{\pm\pm}$ are pure.
But there are $8$ additional entangled pure states: one of them is the famous PR box state $\omega_{PR}$, and the others
can be obtained by local transformations from $\omega_{PR}$. We could use vectors with $9$ entries to write down those
states explicitly, but it will be more convenient to use another representation: given the three unit vectors $e_1,e_2,e_3$,
we will denote states (and vectors) $\omega\in AB$ as matrices $(\omega_{ij})$, by using the decomposition
$\omega=\sum_{i,j=1}^3 \omega_{ij} e_i \otimes e_j$. In this representation, the maximally mixed state and the pure product
states are
\[
   \mu^{AB}=\left(
      \begin{array}{ccc}
         1 & 0 & 0 \\
         0 & 0 & 0 \\
         0 & 0 & 0
      \end{array}
   \right),\qquad
   \omega_{rs}\otimes \omega_{tu}=\left(
      \begin{array}{ccc}
          1 & t/\sqrt{2} & u/\sqrt{2} \\
          r/\sqrt{2} & rt /2 & ru/2\\
          s/\sqrt{2} & st/2 & su/2
      \end{array}
   \right)\quad (r,s,t,u\in\{-1,+1\}).
\]
One of the PR-box states is
\[
   \omega_{PR}=\left(
      \begin{array}{ccc}
         1 & 0 & 0 \\
         0 & 1/2 & 1/2 \\
         0 & 1/2 & -1/2
      \end{array}
   \right).
\]
What are the reversible transformations in $AB$? It can be shown~\cite{boxworld} that these are exactly the \emph{local}
transformations, that is, those of the form $G_A\otimes G_B$, together with the swap transformation $S$ which exchanges
the two subsystems. There are no other reversible transformations in $\G_{AB}$. The bipartite space $AB$ decomposes
into $\G_{AB}$-invariant subspaces (the first addend cannot be decomposed further because $D_4$ acts complex-irreducibly on $\hat A$):
\begin{equation}
   AB = \underbrace{(\hat A\otimes \hat B)}_{4-\dim.} \oplus \underbrace{(\mu^A\otimes \hat B \oplus \hat A\otimes \mu^B)}_{4-\dim.}
   \oplus \underbrace{(\R\cdot \mu^A\otimes\mu^B)}_{1-\dim.}.
   \label{eqDecompBoxworld}
\end{equation}
In this notation, $\hat A$ denotes the subspace of vectors $x\in A$ with $u^A(x)=0$ (we called this the ``Bloch subspace'' in
Subsection~\ref{SubsecGenBloch}). This shows that the only state on $AB$ which is invariant with
respect to all reversible transformations is the maximally mixed state $\mu^{AB}:=\mu^A\otimes\mu^B$. We call the subspace
generated by the first two addends above $(AB)^\wedge$, such that
\[
   AB=(AB)^\wedge \oplus\R\cdot\mu^{AB}.
\]
In other words, $(AB)^\wedge$ consists of all vectors $x\in AB$ with $u^{AB}(x)=0$.

Now we proceed as in Section~\ref{SecMath}: to every state $\omega\in\Omega_{AB}$, we define the corresponding Bloch vector
$\hat\omega$ as $\hat\omega:=\omega-\mu^{AB}$. This vector is obtained from the matrix representation above by replacing
the ``$1$'' in the upper-left corner by a zero. Denote the usual Euclidean inner
product by $\langle \cdot,\cdot\rangle$. Then we define the purity of $\omega$ as
\begin{equation}
   \p(\omega):=c\cdot \langle\hat\omega,\hat\omega\rangle,
   \label{eqPurityBoxworld}
\end{equation}
and we choose the constant $c>0$ such that, say, the pure product states have purity $\p(\omega_{\pm\pm})=1$. Using the representation
above, it is easy to see that we must have $c=1/3$. The resulting definition satisfies some of the properties mentioned in
Lemma~\ref{LemPropPur}:
\begin{itemize}
\item $0\leq \p(\omega)\leq 1$ for all $\omega\in\Omega_{AB}$,
\item $\p(\omega)=0$ if and only if $\omega=\mu^{AB}$, i.e.\ if $\omega$ is the maximally mixed state on $AB$,
\item $\sqrt{\p}$ is convex, and
\item $\p(T\omega)=\p(\omega)$ for all reversible transformations $T\in\G_{AB}$ and states $\omega\in\Omega_{AB}$.
\end{itemize}
For example, to prove the last point, note that the local transformations on $A$ and $B$ are orthogonal in the chosen
representation: they rotate and reflect the square. Hence their product is orthogonal as well, and so is the swap.
It follows that $\p(T\omega)=\langle T\hat\omega,T\hat\omega\rangle=\langle\hat\omega,T^\dagger T \hat\omega\rangle
=\langle\hat\omega,\hat\omega\rangle=\p(\omega)$.

However, there is some bad news: if we compute the purity of the pure PR box state, we get
\[
   \p(\omega_{PR})=\frac 1 3 \langle \hat\omega_{PR},\hat\omega_{PR}\rangle=\frac 1 3.
\]
Even though this state is pure, it has purity (much) less than one. On transitive state spaces as considered in Section~\ref{SecMath},
this cannot happen: all pure states have purity $1$. Vice versa, we can see from this result that there is no reversible
transformation which maps a pure product state to a PR-box state: if there was one, then both states necessarily would have the same purity.

Can we somehow avoid this problem? So far, we have been a bit hasty in our definition: in eq.~(\ref{eqPurityBoxworld}), we defined
purity with respect to the usual Euclidean inner product, because all reversible transformations in $\G_{AB}$ are orthogonal with respect to
this inner product. However, the decomposition~(\ref{eqDecompBoxworld}) shows that this is not the only inner product on $(AB)^\wedge$
(where the Bloch vectors $\hat\omega$ live) which has this property: if we have two vectors $\hat\varphi,\hat\omega$ in this space,
we can decompose them as
\[
   \hat\varphi=\varphi'+\varphi'',\qquad \varphi'\in \hat A\otimes\hat B,\qquad \varphi''\in (\mu^A\otimes\hat B)\oplus(\hat A\otimes\mu^B)
\]
and similarly for $\hat\omega$, and then define
\[
   \langle \hat\varphi,\hat\omega\rangle:=a\langle \varphi',\omega'\rangle+b\langle\varphi'',\omega''\rangle,
\]
where all brackets on the right-hand side denote the usual Euclidean inner product.
For every choice of $a,b>0$, this yields an invariant inner product on $(AB)^\wedge$. Is there a way to choose $a$ and $b$ such that
the purity of pure product states and PR-box states equals unity at the same time? (We can retain $c=1/3$ in definition~(\ref{eqPurityBoxworld})
and absorb any necessary factor into $a$ and $b$). Using that $\mu^A=e_1$ and $\hat A ={\rm span}\{e_2,e_3\}$, it is easy to see that
\[
   \p(\omega_{\pm\pm})=\frac 1 3(a+2b),\qquad
   \p(\omega_{PR})=\frac 1 3 \cdot a.
\]
Both expressions can only be simultaneously equal to $1$ if $b=0$. But this ruins the inner-product property. If we ignore this,
and go on with setting $a=3$ and $b=0$, we loose the property that $\p(\omega)=0$ only for the maximally mixed state $\omega=\mu^{AB}$:
for example, we get $\p(\mu^A\otimes\omega_\pm)=0$.

In summary, there is no definition of purity on bipartite boxworld which has all the nice properties that hold true in transitive state spaces.
However, if we accept the existence of pure states with purity less than one, then eq.~(\ref{eqPurityBoxworld}) can be a useful definition.
A similar conclusion holds for other non-transitive state spaces.

{\em \bf Acknowledgements.---} We acknowledge valuable discussions with Fr\'ed\'eric Dupuis, Johan {\AA}berg, Jonathan Oppenheim, L\'idia del Rio, Lucien Hardy, Renato Renner, Roger Colbeck,
as well as financial support from the National Research Foundation (Singapore), the Ministry of Education (Singapore), Swiss National Science Foundation (grant no.\ 200021-119868).
Research at Perimeter Institute is supported by the Government of Canada through Industry Canada and by the Province of Ontario through the Ministry of Research and Innovation.

\bibliography{typentrefs}

\end{document}